\newtheorem{definition}{Definition}
\newtheorem{remark}{Remark}
\newtheorem{proposition}{Proposition}[section]
\newtheorem{theorem}{Theorem}[section]
\newtheorem{corollary}{Corollary}[section]
\newtheorem{lemma}{Lemma}[section]
\numberwithin{equation}{section}
\newenvironment{proof}{\smallskip\noindent\emph{Proof.}\hspace{1pt}}%
{\hspace{-5pt}{\nobreak\quad\nobreak\hfill\nobreak$\square$\vspace{8pt}%
		\par}\smallskip\goodbreak}
\newcommand{\hnabla}{\widehat{\nabla}}
\newcommand{\be}{\begin{equation}}
\newcommand{\ee}{\end{equation}}
\newcommand{\bm}{\begin{align*}}
\newcommand{\enm}{\end{align*}}
\newcommand{\bespeq}{\begin{equation}\begin{split}}
\newcommand{\espeq}{\end{split}\end{equation}}
\newcommand{\tr}{\mbox{tr}}
\newcommand\restri[2]{{
		\left.\kern-\nulldelimiterspace 
		#1 
		\right|_{#2} 
}}
\definecolor{ffqqqq}{rgb}{1.,0.,0.}
\definecolor{uuuuuu}{rgb}{0.26666666666666666,0.26666666666666666,0.26666666666666666}
\def\ps@pprintTitle{%
  \let\@oddhead\@empty
  \let\@evenhead\@empty
  \let\@oddfoot\@empty
  \let\@evenfoot\@oddfoot
}
\def\@author#1{\g@addto@macro\elsauthors{\normalsize%
    \def\baselinestretch{1}%
    \upshape\authorsep#1\unskip\textsuperscript{%
      \ifx\@fnmark\@empty\else\unskip\sep\@fnmark\let\sep=,\fi
      \ifx\@corref\@empty\else\unskip\sep\@corref\let\sep=,\fi
      }%
    \def\authorsep{\unskip,\space}%
    \global\let\@fnmark\@empty
    \global\let\@corref\@empty  
    \global\let\sep\@empty}%
    \@eadauthor={#1}
}
\begin{document}

\thispagestyle{empty}

\begin{frontmatter}
	
\title{Quasi-local masses in General relativity and their positivity: Spinor approach}
\author{Puskar Mondal\footnote{puskar\_mondal@fas.harvard.edu}, Shing-Tung Yau \footnote{yau@math.harvard.edu} 
\\ Department of Mathematics, Harvard University,\\
Center of Mathematical Sciences and Applications, Harvard University}

\begin{abstract}
\noindent We study the quasi-local masses arising in general relativity using spinors and prove their positivity property. This leads to the question of a pure quasi-local proof of the positivity of the Wang-Yau \cite{yau} quasi-local mass. More precisely we prove that the gravitational mass bounded by a spacelike topological $2-$sphere is non-negative in a generic spacetime verifying dominant energy condition and vanishes only if the surface is embedded in the Minkowski space. This construction is purely quasi-local in nature and in particular does not rely on Bartanik's gluing and asymptotic extension construction \cite{bartnik1993quasi} and subsequent application of the positive mass theorem \cite{schoen1979proof,schoen1981proof} to prove the positivity of quasi-local mass. The result involves solving Dirac equation on a compact Riemannian manifold with boudary using MIT Bag and APS boundary condition.

\end{abstract}
\end{frontmatter}

\setcounter{tocdepth}{2}
{\hypersetup{linkcolor=black}
\small
\tableofcontents
}

\section{Introduction}
\noindent Wang-Yau quasi-local mass is defined for a Lorentzian manifold. We are given a four-dimensional $C^{\infty}$ globally hyperbolic Lorentzian manifold $(M,\widehat{g})$ with signature $-,+,+,+$ and a symmetric rank-2 tensor $T_{\mu\nu}$ (stress-energy tensor of the sources present in the spacetime) that is related to the Ricci and scalar curvature of $(M,\widehat{g})$ via the Einstein equations 
\begin{eqnarray}
R_{\mu\nu}-\frac{1}{2}R\widehat{g}_{\mu\nu}=T_{\mu\nu},   
\end{eqnarray}
where $8\pi G$ is set to 1 ($G$ being Newton's constant). The stress-energy tensor is assumed to verify the dominant energy condition i.e., 
\begin{eqnarray}
    T(e_{0},e_{0})\geq |T(e_{0},e_{i})|
\end{eqnarray}
for a local orthonormal frame $(e_{0},e_{i})_{i=1}^{3}$. One can make a better sense of the dominant energy condition if we adopt a $3+1$ formulation. Consider a spacelike topological ball $\Omega$ and $n$, a time-like unit normal field to it. In local coordinates $(t,x^{i})_{i=1}^{3}$, $n=\frac{1}{N}(\partial_{t}-X)$ where $X$ is the shift vector field parallel to $\Omega$. Spacetime metric $\widehat{g}$ can be decomposed as 
\begin{eqnarray}
 \widehat{g}:=-N^{2}dt\otimes dt+g_{ij}(dx^{i}+X^{i}dt)\otimes (dx^{j}+X^{j}dt),   
\end{eqnarray}
 where $g_{ij}:=\widehat{g}(\partial_{i},\partial_{j})$ is the induced metric on $\Omega$ by the Lorentzian structure of $(M,\widehat{g})$. Let $K_{ij}:=\widehat{g}(\nabla[\widehat{g}]_{\partial_{i}}n,\partial_{j})$ be the second fundamental form of $\Omega$ in $M$. We write the Einstein constraint equations which will be of importance to us 
 \begin{eqnarray}
    R(g)-K_{ij}K^{ij}+(\text{tr}_{g}K)^{2}=2\mu\\
 \widehat{\nabla}^{j}K_{ij}-\nabla_{i}\text{tr}_{g}K=J_{i}, 
 \end{eqnarray}
 where $\mu:=T(n,n)$ is the energy density and $J_{i}=T(n,\partial_{i})$ is the momentum density. $R(g)$ is the scalar curvature of $(\widehat{\Omega},g)$. The dominant energy condition implies 
 \begin{eqnarray}
 \label{eq:dominant}
   \mu\geq \sqrt{g(J,J)}.  
 \end{eqnarray}
The physical interpretation of this would be that the energy-momentum vector $(\mu,J^{i})$ is time-like or essentially the finite propagation of speed. An important special case is when $K_{ij}=0$ (time-symmetric case) and the dominant energy condition implies that the scalar curvature of $\Omega$ is non-negative. 

\noindent Let $(\Sigma, \sigma)$ be a topological $2-$sphere in the physical spacetime $(M,\widehat{g})$ with induced metric $\sigma$ bounding the spacelike ball $\Omega$. The spacetime $(M,\widehat{g})$ is assumed to verify desirable properties described in the beginning of the introduction. In this framework, one asks ``what is the amount of gravitational energy contained within the ball $\Omega$ with boundary $ \widehat{\Sigma}$"? Some desirable physical properties of the definition of energy should include non-negativity under the assumption of an appropriate energy condition on the spacetime and rigidity i.e., it should vanish for a topological $2-$sphere embedded in the Minkowski space. In addition, it should encode the information about pure gravitational energy (Weyl curvature effect) as well as the stress-energy tensor of any source fields present in the spacetime. Based on a Hamilton-Jacobi analysis, Brown-York \cite{brown1992quasilocal, brown1993quasilocal} and Liu-Yau \cite{liu2003positivity,liu2006positivity} defined a quasi-local mass by isometrically embedding the $2-$surface into the reference Euclidean $3-$ space and comparing the extrinsic geometry (the formulation relied on the embedding theorem of Pogorelov \cite{pogorelov1952regularity} i.e., the topological $2-$sphere needed to posses everywhere positive sectional curvature).  
\begin{definition}[Brown-York Mass] 
Let $\Sigma$ be a topological $2-$sphere in the spacetime $(M,\widehat{g})$ bounding a spacelike topological ball $\Omega$.
Consider isometric embedding of $(\Sigma, \sigma)$ into the Euclidean space $(\mathbb{R}^{3},\delta)$ $i:(\Sigma, \sigma)\to (i(\Sigma), \sigma)$. Let $\widehat{k}$ be the mean curvature of $\Sigma$ while viewed as an embedded surface in $B$ in the physical spacetime $(M,\widehat{g})$.
  Let $k_{0}$ be the same for its image $i( \widehat{\Sigma})$ in $\mathbb{R}^{3}$. Also assume that the Gaussian curvature of $(\Sigma, \sigma)$ is strictly positive. Then the Brown-York mass is defined as 
\begin{eqnarray}
 M^{BY}:=\frac{1}{8\pi}\int_{\Sigma}(k_{0}-\widehat{k})\mu_{\sigma}.   
\end{eqnarray}  
\end{definition}

The Liu-Yau mass is defined similarly with mean curvature $k$ replaced by the Lorentzian norm of the spacetime mean curvature of $\Sigma$ in $(M,\widehat{g})$ i.e., 
\begin{definition}[Li-Yau Mass]
Assume that the topological $2-$sphere $( \Sigma,\sigma)$ bounds a topogical ball $\Omega$ in the spacetime $(M,\widehat{g})$. Also assume that the Gaussian curvature of $( \Sigma, \sigma)$ is strictly positive. The Liu-Yau mass is defined as follows
 \begin{eqnarray}
 M^{LY}:=\frac{1}{8\pi}\int_{\Sigma}(k_{0}-|H|)\mu_{\sigma},   
\end{eqnarray}   
where $|H|$ is the Lorentzian norm of the mean curvature vector of $(\Sigma,\sigma)$ in the spacetime $(M,\widehat{g})$.
\end{definition}

Both of these two masses are positive. But they turn out to be too positive. \cite{murchadha2004comment} discovered surfaces in Minkowski space with strictly positive Brown-York and Liu-Yau mass. It appeared that these formulations lacked an accurate prescription of spacetime momentum information. This led Wang and Yau \cite{yau,yau1} to define the most consistent notion of the quasi-local mass associated with a space-like topological $2-$sphere (we should mention that the second author together with Alaee and Khuri recently provided a new definition of quasi-local mass bounded by a class of surfaces verifying certain topological conditions in \cite{alaee2023quasi}). We discuss the definition of this quasi-local mass.

\noindent Let us assume that the mean curvature vector $\mathbf{H}$ of $\Sigma$ is space-like. Let $\mathbf{J}$ be the reflection of $\mathbf{H}$ through the future outgoing light cone in the normal bundle of $ \Sigma$. The data that Wang and Yau use to define the quasi-local energy is the triple ($\sigma,|\mathbf{H}|_{\hat{g}},\alpha_{\mathbf{H}}$) on $\Sigma$, where $\sigma$ is the induced metric on $\Sigma$ by the Lorentzian metric $\hat{g}$ on the physical spacetime $M$, $|\mathbf{H}|_{\hat{g}}$ is the Lorentzian norm of $\mathbf{H}$, and $\alpha_{\mathbf{H}}$ is the connection $1-$form of the normal bundle with respect to the mean curvature vector $\mathbf{H}$ and is defined as follows 
\begin{eqnarray}
\label{eq:connection}
\alpha_{\mathbf{H}}(X):=\hat{g}(\nabla[\hat{g}]_{X}\frac{\mathbf{J}}{|\mathbf{H}|},\frac{\mathbf{H}}{|\mathbf{H}|}).
\end{eqnarray}
Choose a basis pair ($e_{3},e_{4}$) for the normal bundle of $\Sigma$ in the spacetime that satisfy $\hat{g}(e_{3},e_{3})=1,\hat{g}(e_{4},e_{4})=-1$, and $\hat{g}(e_{3},e_{4})=0$. Now embed the $2-$surface $\Sigma$ isometrically into the Minkowski space with its usual metric $\eta$ i.e., the embedding map $X: x^{a}\mapsto X^{\mu}(x^{a})$ satisfies 
\begin{eqnarray}
\sigma(\frac{\partial}{\partial x^{a}},\frac{\partial}{\partial x^{b}})=\eta(\frac{\partial X}{\partial x^{a}},\frac{\partial X}{\partial x^{b}}),
\end{eqnarray} where $\{x^{a}\}_{a=1}^{2}$ are the coordinates on $\Sigma$. Now identify a basis pair ($e_{30},e_{40}$) in the normal bundle of $X(\Sigma)$ in the Minkowski space that satisfies the exact similar property as $(e_{3},e_{4})$. In addition the time-like unit vector $e_{4}$ is chosen to be future directed i.e., $\hat{g}(e_{4},\partial_{t})<0$. Let $\tau:=-\langle X,\partial_{t}\rangle_{\eta}$, a function on $ \Sigma$ be the time function of the embedding $X$. 
The Wang-Yau quasi-local energy is defined as follows 
\begin{definition}[Wang-Yau Energy]
\label{wy14}
Let $(M,\widehat{g})$ be the physical spacetime that verifies the dominant energy condition. Suppose $\Sigma$ is an embedded topological $2-$sphere in $(M,\widehat{g})$ with spacelike mean curvature vector $\mathbf{H}$. Let $i_{0}: \Sigma \hookrightarrow \mathbb{R}^{1,3}$ be an isometric embedding of $(\Sigma,\sigma)$ into the Minkowski space and $\tau$ be the restriction of the time function $t$ to $i_{0}(\Sigma)$. Let $\mathbf{H}_{0}$ be the mean curvature vector of the embedding $i_{0}(\Sigma)$ in the Minkowski space. If $\tau$ is admissible (see definition \ref{admi}), then the Wang-Yau quasi-local energy is defined to be
 \begin{eqnarray}
\label{eq:wy1}
8\pi M^{WY}:=\inf_{\tau}\left(\underbrace{\int_{\Sigma}\left(-\sqrt{1+|\nabla\tau|^{2}_{\sigma}}\langle \mathbf{H}_{0},e_{30}\rangle-\langle\nabla[\eta]_{\nabla\tau}e_{30},e_{40}\rangle\right)\mu_{\sigma}}_{I:=Contribution~from~the~Minkowski~space}\right.\\\nonumber 
\left.-\underbrace{\int_{\Sigma}\left(-\sqrt{1+|\nabla\tau|^{2}_{\sigma}}\langle \mathbf{H},e_{3}\rangle-\langle\nabla[\hat{g}]_{\nabla\tau}e_{3},e_{4}\rangle\right)\mu_{\sigma}}_{II:=contribution~from~the~physical~spacetime}\right).
\end{eqnarray} 
Here $\alpha_{H_{0}}$ is the connection of the normal bundle of $i_{0}(\Sigma)$ in the Minkowski space.
\end{definition}
We will be interested in proving the non-negativity of the following entity we call \textit{reduced} Wang-Yau energy 
\begin{eqnarray}
\label{eq:wy2}
8\pi M^{wy}_{\tau}:=\left(\int_{\Sigma}\left(-\sqrt{1+|\nabla\tau|^{2}_{\sigma}}\langle \mathbf{H}_{0},e_{30}\rangle-\langle\nabla[\eta]_{\nabla\tau}e_{30},e_{40}\rangle\right)\mu_{ \sigma}\right.\\\nonumber 
\left.-\int_{\Sigma}\left(-\sqrt{1+|\nabla\tau|^{2}_{\sigma}}\langle \mathbf{H},e_{3}\rangle-\langle\nabla[\hat{g}]_{\nabla\tau}e_{3},e_{4}\rangle\right)\mu_{\sigma}\right).    
\end{eqnarray}
for every $\tau$ being an admissible time function (see definition \ref{admi}).
There is still a boost-redundancy left in the normal bundle of $\Sigma$ in $N$. Let us illustrate this. Till now the choice of the normal bundle basis $(e_{3},e_{4})$ is not fixed (same for the Minkowski basis $(e_{30},e_{40})$). Therefore, one can equally choose another basis pair $(e^{'}_{3},e^{'}_{4})$ that is related to the old ones by a boost i.e., 
\begin{eqnarray}
 e^{'}_{3}=\cosh\varphi e_{3}+\sinh \varphi e_{4}\\
 e^{'}_{4}=\sinh\varphi e_{3}-\cosh \varphi e_{4}
\end{eqnarray}
Therefore, the energy is a function of $\varphi$. Fortunately, it is a convex function of $\varphi$, and minimization is possible. 
\cite{yau} minimizes the physical space contribution $II$ over all possible boost-angles. This lets one get rid of the boost redundancies present in the definition. See section 2 of \cite{yau} for a detailed exposition.

\noindent This Wang-Yau quasi-local mass possesses several good properties that are desired on physical ground. This mass is strictly positive for $2-$surfaces bounding a space-like domain in a curved spacetime that satisfies the dominant energy condition and it identically vanishes for any such $2-$surface in Minkowski spacetime. In addition, it coincides with the ADM mass at space-like infinity \cite{wang2010limit} and Bondi-mass at null infinity \cite{chen2011evaluating} and reproduces the time component of the Bel-Robinson tensor (a pure gravitational entity) together with the matter stress-energy at the small sphere limit, that is when the $2-$sphere of interest is evolved by the flow of its null geodesic generators and the vertex of the associated null cone is approached \cite{chen2018evaluating}. In addition, explicit conservation laws were also discovered at the asymptotic infinity \cite{chen2015conserved}.

Wang-Yau energy can be written in a more familiar form. This reduction is actually vital in proving its positivity. This is achieved in two steps (see theorem \ref{mean1} and \ref{mean2} for the details). First, consider the isometric image of $ \Sigma$ $i_{0}(\Sigma)$ ($i_{0}: \Sigma\to \mathbb{R}^{1,3}, \Sigma\mapsto i_{0}(\Sigma)=(X^{0},X^{1},X^{2},X^{3})$) in the Minkowski space. Let $\tau:=-\langle X,T_{0}\rangle$ be the time function of the embedding restricted to $\Sigma$, where $T_{0}$ is the canonical time-like unit vector field of $\mathbb{R}^{1,3}$. Consider the projection $\widehat{i}_{0}(\widehat{\Sigma})$ onto the complement of $T_{0}$ i.e., onto a static slice $\mathbb{R}^{3}$. This projected surface has the metric $ \widehat{\sigma}:=\sigma+d\tau\otimes \tau$. Let $k_{0}$ be the mean curvature of $\widehat{i}_{0}( \widehat{\Sigma})$ in $\mathbb{R}^{3}$. The first reduction formula that is obtained is the following (we present a proof of this in theorem \ref{mean1}) 
\begin{eqnarray}
\label{eq:red1}
\int_{\widehat{i}_{0}( \widehat{\Sigma})}\widehat{k}_{0}\mu_{\widehat{\sigma}}= \int_{\Sigma}\left(-\sqrt{1+|\nabla\tau|^{2}_{\sigma}}\langle \mathbf{H}_{0},e_{30}\rangle-\langle\nabla[\eta]_{\nabla\tau}e_{30},e_{40}\rangle\right)\mu_{\sigma}.    
\end{eqnarray}
The second reduction formula is a bit more involved and we present the proof in theorem \ref{mean2}. In this case, one considers the surface $\Sigma$, the ball $\Omega$ that $ \Sigma$ bounds and considers the product spacetime domain $\Omega\times \mathbb{R}$. Then one solves for the so-called Jang's equation (the first progress in solving the Jang's equation was made by Schoen-Yau \cite{schoen1979proof,schoen1981proof} in their seminal proof of positive mass theorem-a context that is slightly different from the current one-see \cite{yau} for the complete detail regarding the solvability of Jang's equation). Let us briefly discuss this procedure. Let $(\Omega,g)$ be a Riemannian $3-$ manifold in the spacetime $(M,\widehat{g})$ and let $p_{ij}$ be its second fundamental form with respect to a future-directed time-like normal $e_{4}$ in a space-time
$(M,\widehat{g})$, we consider the Riemannian product $\Omega\times\mathbb{R}$ and extend $p_{ij}$ by parallel translation
along the $\mathbb{R}$ direction to a symmetric tensor $P(\cdot,\cdot)$ on $\widehat{\Omega}\times \mathbb{R}$. Such an extension makes
$P(\cdot,v)=0$, where $v$ denotes the downward unit vector in the $\mathbb{R}$ direction. Given a function $f$ on $\Omega$ with the boundary value $f=\tau$ on $ \Sigma$, one tries to find a hypersurface $\widehat{\Omega}$ as a graph of the function $f$ over $\Omega$ that has the same mean curvature as the trace of restriction of $P$ onto $\widehat{\Omega}$. One solves for the following equation on $\Omega$ in order to achieve such hypersurface $\widehat{\Omega}$
\begin{eqnarray}
\label{eq:Jang2}
\sum_{i,j=1}^{3}\left(g^{ij}-\frac{f^{i}f^{j}}{1+|\widetilde{\nabla} f|^{2}}\right)\left(\frac{\widetilde{\nabla}_{i}\partial_{j}f}{(1+|\widetilde{\nabla} f|^{2})^{\frac{1}{2}}}-P_{ij}\right)=0~on~\widehat{\Omega},\\
f=\tau~on~ \widehat{\Sigma},
\end{eqnarray}
where $\widetilde{\nabla}$ is the $g-$ connection on $\Omega$ and $f^{i}:=g^{ij}\widetilde{\nabla}_{j}$. If one can solve this Jang's equation, then indeed such a hypersurface $\widehat{\Omega}$ exists. One can compute that the metric induced on the boundary $ \widehat{\Sigma}=\partial \widehat{\Omega}$ of such a hypersurface is $ \sigma+d\tau\otimes d\tau$ i.e., the boundary $ \widehat{\Sigma}=\partial \widehat{\Omega}$ is isometric to the projection $\widehat{i}_{0}( \widehat{\Sigma})$ of the isometric image of $\Sigma$ in $\mathbb{R}^{1,3}$ onto $T_{0}$ complement. Also let $\widetilde{k}$ be the mean curvature of $ \widehat{\Sigma}$ in $\widehat{\Omega}$. First, pick an orthonormal basis $\{\widetilde{e}_{\alpha}\}_{\alpha=1}^{4}$ for the tangent space of $\Omega\times \mathbb{R}$ along $\widehat{\Omega}$ such that $\{\widetilde{e}_{i}\}_{i=1}^{3}$ is tangent to $\widehat{\Omega}$ and $\widetilde{e}_{4}$ is the downward unit normal to $\widehat{\Omega}$. The contribution to the Wang-Yau energy from the physical spacetime (i.e., term $II$ in definition \ref{wy14}) can be expressed in terms of the integral over $ \widehat{\Sigma}$ (we prove this in theorem \ref{mean2})
\begin{eqnarray}
\label{eq:red2}
    \int_{ \Sigma}\left(-\sqrt{1+|\nabla\tau|^{2}}\langle \mathbf{H},e^{'}_{3}\rangle-\alpha_{e^{'}_{3}}(\nabla\tau) \right)\mu_{ \Sigma} =\int_{ \widehat{\Sigma}}\left(\widetilde{k}-\langle\widetilde{\nabla}_{\widetilde{e}_{4}}\widetilde{e}_{4},\widetilde{e}_{3}\rangle+P(\widetilde{e}_{4},\widetilde{e}_{3})\right)\mu_{\widehat{\sigma}}.
\end{eqnarray}
Therefore, by the help of \ref{eq:red1} and \ref{eq:red2}, we may alternatively write the reduced Wang-Yau energy as the following 
\begin{eqnarray}
\label{eq:reduced}
8\pi M^{wy}_{\tau}:=\int_{ \widehat{\Sigma}}k_{0}\mu_{\widehat{\sigma}}-\int_{ \widehat{\Sigma}}\left(\widetilde{k}-\langle\widetilde{\nabla}_{\widetilde{e}_{4}}\widetilde{e}_{4},\widetilde{e}_{3}\rangle+P(\widetilde{e}_{4},\widetilde{e}_{3})\right)\mu_{\widehat{\sigma}}.    
\end{eqnarray}
We will prove that the reduced Wang-Yau energy $M^{WY}_{\tau}$ is non-negative for every admissible $\tau$ (see definition \ref{admi}) and therefore $M^{WY}=\inf_{\tau}M^{WY}_{\tau}\geq 0$ where infimum is taken over the set of admissible $\tau$.

\begin{remark}
If the Gauss curvature of $(\Sigma, \sigma)$ is positive then an embedding into $\mathbb{R}^{3}$ with $\tau=0$ is admissible. In such case, $M^{WY}$ reduces to $M^{LY}$.    In such case $(\Omega,\Sigma)$ coincide with $(\widehat{\Omega},\widehat{\Sigma})$. 
\end{remark} 

\noindent One of the important questions that arise in the context of quasi-local mass is the proof of its non-negativity and rigidity properties. \cite{shi2002positive} provided a proof of the positivity of the Brown-York mass under the assumption of the non-negativity of the scalar curvature of the bulk $\Omega$ with strictly convex boundary  $\Sigma:=\Omega$. They analyzed a non-linear parabolic equation that arises due to the use of Bartnik's gluing and extension construction \cite{bartnik1993quasi}. This entails suitable infinite asymptotically flat extension of $\Omega$ with non-negative scalar curvature by deforming the exterior of the embedding of the boundary $ i_{0}(\Sigma) \subset \mathbb{R}^{3}$ of $\Omega$ to have the same mean curvature as $\Sigma$ in $\Omega$ and to glue it to the manifold $\Omega$ along its boundary $\Sigma$. Through this construction, they extend $\Sigma$ along the radial direction to the spacelike infinity and apply the Schoen-Yau positive mass theorem \cite{schoen1979proof,schoen1981proof} and non-increasing property of the Brown-York mass as the radius of the boundary $\Sigma$ is increased. Later Liu-Yau \cite{liu2003positivity,liu2006positivity} and Wang-Yau \cite{yau} used this construction in more general settings to prove the non-negativity of Liu-Yau and Wang-Yau mass. One natural question that arises is the following: can one prove the non-negativity of the quasi-local mass by using data on the compact hypersurface $\Omega$ and its boundary $\Sigma$ in the spacetime without the extension construction of \cite{shi2002positive} and use of positive mass theorem. Richard Schoen \cite{schoen1} suggested the following problem \textit{find a quasi-local proof of nonnegativity for the Brown-York and related quasi-local masses}.

\noindent Very recently, Montiel \cite{montiel2022compact} provided a positive solution to this problem for the Brown-York mass associated with a surface $\Sigma$ under the assumption that the bulk $\Omega$ (such that $\partial\Omega= \Sigma)$ has non-negative scalar curvature and the boundary $\Sigma$ has strictly positive mean curvature using Spinor method. Unfortunately, such proof turns out to be wrong because the treatment of the boundary value problem associated with the Dirac equation fails. Here we give spinorial proof of the non-negativity of the Wang-Yau quasi-local mass assuming the dominant energy condition on the physical spacetime. As a special case, the non-negativity of the Liu-Yau mass follows (same for Brown-York too). This essentially completes the quasi-local proof of the non-negativity of the quasi-local masses. We employ the spinor technique with two different boundary conditions namely APS and MIT Bag boundary conditions. A technical assumption on the spinor field being used is required to obtain the correct proof in the case of the APS boundary condition. Contrary to the APS boundary condition, the MIT bag boundary condition is much better suited to handle the positivity question.  Below we describe the main theorem and idea of the proof.

Let us concretely define the surfaces that we will be dealing with in this article.  
\begin{definition}[Surfaces of Interest]
\label{surfaces}
$ \Sigma$ is a topological $2-$sphere in the physical spacetime $(M,\widehat{g})$. $\Sigma$ bounds a spacelike topological ball $\Omega$. The induced metric on $\Sigma$ is $ \sigma$ and on $\Omega$ is $g$. $i_{0}: \Sigma\hookrightarrow \mathbb{R}^{1,3}$ is the isometric embedding of $\Sigma$ in $\mathbb{R}^{1,3}$. The image is denoted by $i_{0}(\Sigma)$. Let $T_{0}$ be the canonical unit timeline vector field of the Minkowski space. $\widehat{i}_{0}(\widehat{\Sigma})$ is defined as the image of the projection of $i_{0}(\Sigma)$ onto the complement of $T_{0}$ i.e., onto $\mathbb{R}^{3}$. 

Now consider the Riemannian product $\Omega\times \mathbb{R}$ in the spacetime $(M,\widehat{g})$. Let $\widehat{\Omega}$ be the graph of the function $f$ solving the Jang's equation \ref{eq:Jang2} over $\Omega$ with boundary condition $f=\tau$ on $\Sigma$. $ \widehat{\Sigma}$ is defined to be the boundary of $\widehat{\Omega}$. $ \widehat{\Sigma}$ is isometric to $\widehat{i}_{0}(\widehat{\Sigma})$. In particular this isometric map is denoted by $\widehat{i}_{0}$. 
\end{definition}

\noindent We denote the spin bundle on the spacelike topological $3-$ball $\widehat{\Omega}$ by $S^{\widehat{\Omega}}$ and let $ \widehat{\Sigma}$ be its boundary $\partial \widehat{\Omega}$. Let $(e_{1},e_{2},e_{3})$ be an orthonormal basis of the frame bundle of $(\widehat{\Omega},\widetilde{g})$. Let $\text{Cl}(\widehat{\Omega},\mathbb{C})$ denote the clifford algebra. The spin representation gives a surjective algebra homomorphism $\rho:\text{Cl}(\widehat{\Omega},\mathbb{C})\to \text{End}(\Delta)$, where $\Delta$ is the complex vector space of dimension $2$ equipped with a Hermitian inner product $\langle\cdot,\cdot\rangle$ induced by the Riemannian structure on $\widehat{\Omega}$. As a bundle $S^{\widehat{\Omega}}$ maybe identified with the trivial bundle $\widehat{\Omega}\times \Delta\to \Delta$. Each tangent vector $Y\in T_{x}\widehat{\Omega}$ induces a skew-adjoint map $\rho(Y)\in \text{End}(S^{\widehat{\Omega}}_{x})$. For the orthonormal basis $\{e_{i}\}_{i=1}^{3}$, Clifford relation holds 
\begin{eqnarray}
\label{eq:cliff}
\rho(e_{i})\rho(e_{j})+\rho(e_{j})\rho(e_{i})=-2\delta_{ij}\text{id}    
\end{eqnarray}
for $i,j=1,2,3$. On the boundary $ \widehat{\Sigma}$, there is a spinor bundle $S^{ \widehat{\Sigma}}$. 

Naturally, restriction of a section $s$ to the boundary $ \widehat{\Sigma}$ is a section of the bundle $S^{ \widehat{\Sigma}}$. We choose $e_{3}$ to be the interior pointing normal field to the boundary $ \widehat{\Sigma}$. We next consider the spin connection with respect to the metric $g$. The spin connection
is a connection $\widehat{\nabla}$ on $S^{\widehat{\Omega}}$ which is compatible with Clifford multiplication and
which is compatible with the Hermitian inner product on $S^{\widehat{\Omega}}$. Let $\Psi$ be a section of the spinor bundle $S^{\widehat{\Omega}}$. We want to solve the Dirac equation in $\widehat{\Omega}$ with appropriate boundary conditions on the boundary $\partial\widehat{\Omega}$. We denote the restriction of $\Psi$ onto $\partial\widehat{\Omega}$ by $\Psi_{ \widehat{\Sigma}}$ which is naturally identified as a section of the bundle spin bundle $S^{\widehat{\Omega}}$. Let $\mathcal{S}^{APS}$ and $\mathcal{S}^{MIT}$ denote the space of spinors that solve the Dirac equation with prescribed $APS$ and $MIT$ Bag boundary conditions, respectively. Let $S^{ \widehat{\Sigma}}$ be the space of spinors on $\widehat{\Omega}$ and $\Gamma^{\geq 0}(S^{ \widehat{\Sigma}})$ denote the space of spinors on $ \widehat{\Sigma}$ that consists of positive frequency spinors. Define $P_{\geq 0}$ to be the projection operator $P_{\geq 0}: S^{\widehat{\Sigma}}\to \Gamma^{\geq 0}(S^{ \widehat{\Sigma}})$ which is a pseudo-differential operator of degree zero. In addition, consider the chirality subbundles ${S^{\widehat{\Sigma}}}_{\pm}$ and projections $\Pi_{\pm}:S^{\widehat{\Sigma}}\to {S^{\widehat{\Sigma}}}_{\pm}$. MIT Bag boundary condition is essentially fixing the $\Pi_{+}$ or $\Pi_{-}$ component of the boundary restriction of $\Psi$. Similar notations are used for the surface $\Sigma$ and the ball $\widehat{\Omega}$ that it bounds. Let us define the following two spaces
\begin{definition}[Spinor spaces of interest]
\label{spinor}
Let $C^{\infty}(S^{ \widehat{\Sigma}})$ be the space of $C^{\infty}$ spinors on $ \widehat{\Sigma}$ and $\Gamma^{\geq 0}(S^{ \widehat{\Sigma}})$ denote the space of $C^{\infty}$ spinors on $ \widehat{\Sigma}$ that consists of positive frequency eigenspinors. Define $P_{\geq 0}$ to be the $L^{2}$ projection operator $P_{\geq 0}: S^{ \widehat{\Sigma}}\to \Gamma^{\geq 0}(S^{ \widehat{\Sigma}})$. We define
\begin{eqnarray}
\mathcal{S}^{MIT}:=\left\{\Psi\in C^{\infty}(S^{\widehat{\Omega}})|\widehat{D}\Psi=0~on~\widehat{\Omega},~\Pi_{+}(\Psi)=\Pi_{+}(\alpha),\alpha\in C^{\infty}(S^{ \widehat{\Sigma}})\right\},\\    
 \mathcal{S}^{APS}:=\left\{\Psi\in C^{\infty}(S^{\widehat{\Omega}})|\widehat{D}\Psi=0~on~\widehat{\Omega},~P_{\geq 0}\Psi=P_{\geq 0}\alpha,~\alpha\in C^{\infty}(S^{ \widehat{\Sigma}}),  |P_{\geq 0}\Psi_{ \widehat{\Sigma}}|\leq |\Psi_{ \widehat{\Sigma}}|\right\}
\end{eqnarray}     
\end{definition}

\noindent We are interested in proving the positivity of the following two entities. For now we assume that $\mathcal{S}^{APS}$ and $\mathcal{S}^{MIT}$ are non-empty. Proving that these are non-empty is a major task that is to be executed in section \ref{spinorproof} and it depends crucially on the curvature condition.

\begin{definition}[Spinor based quasi-local entity]
Assume that the set $\mathcal{S}^{MIT}$ is non-empty. The spinor based quasi-local entity is defined for $\Psi\in \mathcal{S}^{MIT}$ as follows
\begin{eqnarray}
8\pi M^{\Psi}:=\int_{ \widehat{\Sigma}}\left(k_{0}-(\widetilde{k}-\langle\widetilde{\nabla}_{\widetilde{e}_{4}}\widetilde{e}_{4},\widetilde{e}_{3}\rangle+P(\widetilde{e}_{4},\widetilde{e}_{3}))\right)|\Psi|^{2}\mu_{\widehat{\sigma}}
\end{eqnarray}    
\end{definition}
 
\begin{remark}
Note that $\inf_{\Psi\in \mathcal{S}^{MIT},|\Psi|=1} M^{\Psi}=M^{WY}_{\tau}$ for any admissible $\tau$ (definition \ref{admi}). Therefore positivity of $M^{\Psi}$ for any $\Psi$ solving Dirac equation with MIT Bag boundary condition implies the positivity of the Wang-Yau quasi-local energy.
\end{remark}

\noindent The main theorem that we will prove in this article is the following.  
\begin{theorem}
\label{1} Let $(\Sigma, \sigma)$ be a spacelike topological $2-$sphere that bounds a spacelike topological ball $\Omega$ in the physical spacetime $(M,\widehat{g})$ verifying dominant energy condition \ref{eq:dominant}. Moreover, assume $\Sigma$ is not an apparent horizon in $(M,\widehat{g})$, and its mean curvature vector $\mathbf{H}$ is spacelike. Let $i_{0}:\Sigma\to \mathbb{R}^{1,3}$ be an isometric embedding into the Minkowski space and let $\tau$ denote the restriction of the time function $t$ on $i_{0}(\Sigma)$. Let $\Sigma,\Omega, \widehat{\Sigma}$, and $\widehat{\Omega}$ be as in definition \ref{surfaces}. Assume $\tau$ is admissible (see definition \ref{admi}) and $\mathcal{S}^{MIT}$ is non-empty then for a $\Psi\in \mathcal{S}^{MIT}$,
(a) \begin{eqnarray}
M^{\Psi}=\frac{1}{8\pi}\int_{ \widehat{\Sigma}}\left(k_{0}-(\widetilde{k}-\langle\widetilde{\nabla}_{\widetilde{e}_{4}}\widetilde{e}_{4},\widetilde{e}_{3}\rangle+P(\widetilde{e}_{4},\widetilde{e}_{3}))\right)|\Psi|^{2}\mu_{\widehat{\sigma}}   
\end{eqnarray}
is non-negative. Moreover $M^{\Psi}$ vanishes if and only if the spacetime $(M,\widehat{g})$ is Minkowski along $ \widehat{\Sigma}$,
(b) If $\mathcal{S}^{APS}$ is non-empty, then $M^{WY}$ is non-negative and $M^{WY}$ vanishes if and only if the spacetime $(M,\widehat{g})$ is Minkowski along $ \widehat{\Sigma}$. 
\end{theorem}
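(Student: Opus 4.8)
\emph{Plan.} By \eqref{eq:reduced} the statement to prove is that $\int_{\widehat{\Sigma}}\bigl(k_0-(\widetilde{k}-\langle\widetilde{\nabla}_{\widetilde{e}_4}\widetilde{e}_4,\widetilde{e}_3\rangle+P(\widetilde{e}_4,\widetilde{e}_3))\bigr)|\Psi|^2\mu_{\widehat{\sigma}}\geq 0$. The plan is to produce two Witten--Reilly integral identities — one on the Jang graph $\widehat{\Omega}$ carrying the spin bundle $S^{\widehat{\Omega}}$, the other on the compact Euclidean region $\widehat{\Omega}_0\subset\mathbb{R}^3$ bounded by $\widehat{i}_0(\widehat{\Sigma})$ (which exists and is mean-convex with mean curvature $k_0>0$ by admissibility of $\tau$) — arranged so that their respective boundary terms are exactly the two integrands above weighted by $|\Psi|^2$. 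On $\widehat{\Omega}$ the interior term is nonnegative because of the dominant energy condition \eqref{eq:dominant} via the Schoen--Yau identity for Jang's equation; on $\widehat{\Omega}_0$ it is nonnegative because the ambient geometry is flat; and the boundary conditions — MIT Bag for (a), APS for (b) — are chosen precisely so that, after identifying the intrinsic spinor bundle of $\widehat{\Sigma}$ with that of $\widehat{i}_0(\widehat{\Sigma})$ through the isometry $\widehat{i}_0$, the first-order (boundary-Dirac-operator) pieces of the two boundary terms cancel when the identities are combined. Subtracting the two identities then exhibits $8\pi M^{\Psi}$ as a sum of manifestly nonnegative terms, and the equality case is read off from the two identities.

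\emph{Physical side.} Viewing $\widehat{\Omega}$ as the graph of $f$ over $\Omega$ inside $(\Omega\times\mathbb{R},g+dt^2)$, with induced metric $\widetilde{g}$ and second fundamental form $h$, Jang's equation \eqref{eq:Jang2} reads $\operatorname{tr}_{\widetilde{g}}(h-\bar P)=0$ with $\bar P:=P|_{\widehat{\Omega}}$, and the Schoen--Yau identity gives a vector field $w$ and a $1$-form $q$, explicit in $f$ and $\bar P$, with $\widetilde{R}=2(\mu-J(w))+|h-\bar P|^2_{\widetilde{g}}+2|q|^2_{\widetilde{g}}-2\operatorname{div}_{\widetilde{g}}q$, so $\widetilde{R}\geq 2\operatorname{div}_{\widetilde{g}}q-2|q|^2$ by \eqref{eq:dominant}. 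I would then take $\widehat{D}$ to be the Dirac operator of $\widetilde{g}$ twisted by $q$ — equivalently, after a conformal adjustment absorbing the divergence term — chosen so that the Lichnerowicz--Weitzenb\"ock identity reads $\widehat{D}^{*}\widehat{D}=\widehat{\nabla}^{*}\widehat{\nabla}+\mathscr{R}$ with the curvature term a nonnegative multiple of $(\mu-J(w))+|h-\bar P|^2_{\widetilde{g}}$, hence $\mathscr{R}\geq 0$. Integrating over $\widehat{\Omega}$ and using the divergence theorem gives, for any $\Psi$ with $\widehat{D}\Psi=0$,
\[ \int_{\widehat{\Omega}}\bigl(|\widehat{\nabla}\Psi|^2+\langle\mathscr{R}\Psi,\Psi\rangle\bigr)\mu_{\widetilde{g}}=\int_{\widehat{\Sigma}}\Bigl(\langle\mathcal{D}^{\widehat{\Sigma}}\Psi,\Psi\rangle+\tfrac12\bigl(\widetilde{k}-\langle\widetilde{\nabla}_{\widetilde{e}_4}\widetilde{e}_4,\widetilde{e}_3\rangle+P(\widetilde{e}_4,\widetilde{e}_3)\bigr)|\Psi|^2\Bigr)\mu_{\widehat{\sigma}}, \]
where $\mathcal{D}^{\widehat{\Sigma}}$ is the boundary Dirac operator of $(\widehat{\Sigma},\widehat{\sigma})$; that the boundary integrand is exactly the physical Wang--Yau term of Theorem \ref{mean2} — the extra terms beyond $\widetilde{k}$ being precisely the contribution of the $q$-twist (equivalently, of the conformal adjustment) — is a Gauss--Codazzi computation for $\widehat{\Sigma}\hookrightarrow\widehat{\Omega}\hookrightarrow\Omega\times\mathbb{R}$. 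The left-hand side is $\geq 0$.

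\emph{Reference side and conclusion.} Since $\widehat{i}_0\colon(\widehat{\Sigma},\widehat{\sigma})\to(\widehat{i}_0(\widehat{\Sigma}),\widehat{\sigma})$ is an isometry and the spinor bundle of a surface is determined by its metric and spin structure, $\Psi_{\widehat{\Sigma}}$ transplants to a spinor on $\widehat{i}_0(\widehat{\Sigma})$, which I would extend to $\widehat{\Omega}_0$ by solving the flat Dirac equation with the MIT Bag (resp.\ APS) datum prescribed by $\Psi_{\widehat{\Sigma}}$, a well-posed elliptic boundary value problem. Running the previous identity on $(\widehat{\Omega}_0,\delta)$, where $\mathscr{R}=0$, $q=0$, $\bar P=0$, gives $\int_{\widehat{\Omega}_0}|\nabla\phi|^2=\int_{\widehat{i}_0(\widehat{\Sigma})}\bigl(\langle\mathcal{D}^{\widehat{\Sigma}}\phi,\phi\rangle+\tfrac12 k_0|\phi|^2\bigr)\mu_{\widehat{\sigma}}$. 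Under the MIT Bag condition the first-order boundary term drops out — the bag projector maps $\mathcal{D}^{\widehat{\Sigma}}\psi$ into the Clifford eigenspace orthogonal to $\psi$, so $\langle\mathcal{D}^{\widehat{\Sigma}}\psi,\psi\rangle\equiv 0$ on admissible spinors — while under APS the hypothesis $|P_{\geq 0}\Psi_{\widehat{\Sigma}}|\leq|\Psi_{\widehat{\Sigma}}|$ supplies the corresponding one-sided bound; combining the two identities (admissibility of $\tau$ ensuring the boundary weights match) yields
\[ 8\pi M^{\Psi}=\int_{\widehat{\Sigma}}\bigl(k_0-(\widetilde{k}-\langle\widetilde{\nabla}_{\widetilde{e}_4}\widetilde{e}_4,\widetilde{e}_3\rangle+P(\widetilde{e}_4,\widetilde{e}_3))\bigr)|\Psi|^2\mu_{\widehat{\sigma}}=\int_{\widehat{\Omega}}\bigl(|\widehat{\nabla}\Psi|^2+\langle\mathscr{R}\Psi,\Psi\rangle\bigr)\mu_{\widetilde{g}}+\int_{\widehat{\Omega}_0}|\nabla\phi|^2\geq 0, \]
which is (a). For (b), once $\mathcal{S}^{APS}\neq\varnothing$ the same argument with the APS condition (and the extra hypothesis) gives $M^{\Psi}\geq 0$ for $\Psi\in\mathcal{S}^{APS}$, and the identification of $M^{WY}_{\tau}$ with such an $M^{\Psi}$ recorded after Definition \ref{spinor} then yields $M^{WY}_{\tau}\geq 0$ for every admissible $\tau$, hence $M^{WY}=\inf_{\tau}M^{WY}_{\tau}\geq 0$.

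\emph{Rigidity and the main obstacle.} If $M^{\Psi}=0$ then both integrands on the right vanish: $\widehat{\nabla}\Psi\equiv0$ and $\langle\mathscr{R}\Psi,\Psi\rangle\equiv0$ on $\widehat{\Omega}$, and $\nabla\phi\equiv0$ on $\widehat{\Omega}_0$. A nowhere-vanishing $\widehat{\nabla}$-parallel spinor forces the Jang graph to be Ricci-flat, hence flat in dimension three, with $q\equiv0$ and $h=\bar P$; unwinding Jang's equation and the constraints then shows the induced data on $\Omega$ near $\Sigma$ agree with those of a slice of Minkowski space, i.e.\ $(M,\widehat{g})$ is flat along $\widehat{\Sigma}$; conversely, if $(M,\widehat{g})$ is Minkowski along $\widehat{\Sigma}$ then $\widehat{\Sigma}$ is congruent to $\widehat{i}_0(\widehat{\Sigma})$ and every term in $M^{\Psi}$ cancels. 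The hard part — and precisely the point at which the earlier argument of Montiel breaks down — is the boundary analysis of the third step: one must simultaneously solve the Dirac boundary value problems on $\widehat{\Omega}$ and on $\widehat{\Omega}_0$ with compatible data, keep $\mathscr{R}\geq 0$, and arrange the two first-order boundary contributions to cancel; the chirality structure of the MIT Bag projector does this cleanly, whereas APS does so only under the technical hypothesis $|P_{\geq 0}\Psi_{\widehat{\Sigma}}|\leq|\Psi_{\widehat{\Sigma}}|$. A secondary but essential task is the Gauss--Codazzi identification in the second step of the modified Weitzenb\"ock boundary term with the Wang--Yau physical integrand, together with the proof — deferred to Section \ref{spinorproof} and relying on the curvature conditions — that $\mathcal{S}^{MIT}$, resp.\ $\mathcal{S}^{APS}$, is non-empty.
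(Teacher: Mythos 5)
Your overall architecture — a Bochner/Reilly identity on the Jang graph $\widehat{\Omega}$ paired with one on the Euclidean region $\widehat{\Omega}_0\subset\mathbb{R}^3$ bounded by $\widehat{i}_0(\widehat{\Sigma})$, with the dominant energy condition producing a favorable curvature term on the first and flatness on the second — matches the paper's skeleton, as does the reduction of $M^{WY}_\tau$ to the $k_0-(\widetilde k-\langle X,\nu\rangle)$ comparison. However, the step that actually closes the argument is both missing and, as stated, incorrect.

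The core gap is the claim that ``the bag projector maps $\mathcal{D}^{\widehat{\Sigma}}\psi$ into the Clifford eigenspace orthogonal to $\psi$, so $\langle\mathcal{D}^{\widehat{\Sigma}}\psi,\psi\rangle\equiv 0$ on admissible spinors.'' This is only true when $\Pi_+\psi=0$, which by uniqueness of the MIT Bag boundary value problem forces $\psi\equiv 0$. For an admissible spinor the boundary restriction has both chiral components $\psi_\pm$, and $\langle D\psi,\psi\rangle=\langle D\psi_+,\psi_-\rangle+\langle D\psi_-,\psi_+\rangle$ is not zero; nor does it cancel against the reference-side boundary Dirac term, since the MIT Bag condition fixes only $\Pi_+\Psi=\Pi_+\phi$ and leaves $\Pi_-\Psi\neq\Pi_-\phi$ in general. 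What makes the paper's argument work is a different, and order-reversed, construction: one first solves the \emph{reference} Dirac equation on $\widehat{\Omega}_0\subset\mathbb{R}^3$ with datum coming from a genuine parallel spinor (so $\Phi$ is parallel by uniqueness, hence $D\Phi=\tfrac{k_0}{2}\Phi$ on the boundary), pulls $\Phi$ back to get $\xi$ on $\widehat{\Sigma}$, and only then solves the physical problem with $\Pi_+\Psi=\Pi_+\xi$. The passage from $\int\langle D\Psi,\Psi\rangle\geq\tfrac12\int(\widetilde k-\langle X,\nu\rangle)|\Psi|^2$ to $\tfrac12\int k_0|\Psi|^2\geq\tfrac12\int(\widetilde k-\langle X,\nu\rangle)|\Psi|^2$ then requires writing $\langle D\Psi,\Psi\rangle=2\langle D\xi_+,\Psi_-\rangle=k_0\langle\xi_-,\Psi_-\rangle$, applying Cauchy--Schwarz, and invoking the chirality-balance identity $\int k_0|\xi_+|^2=\int k_0|\xi_-|^2$ (Corollary~\ref{MITC}, itself a consequence of $D\xi=\tfrac{k_0}{2}\xi$ and self-adjointness of $D$). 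None of this chain is in your proposal, and without it the two identities cannot be ``combined'' to give a sum of nonnegative terms. The APS part has the analogous issue: it is not the hypothesis $|P_{\geq 0}\Psi|\leq|\Psi|$ alone but also the normalization $|P_{\geq 0}\xi|\equiv 1$ coming from the parallel $\Phi$ on the flat side that lets the boundary Dirac term be replaced by $\tfrac{k_0}{2}|P_{\geq 0}\xi|^2$.

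A secondary divergence: you propose to absorb the $q$-terms (i.e., the vector field $X$) by twisting $\widehat{D}$ or conformally rescaling, so that a clean Weitzenb\"ock formula $\widehat D^*\widehat D=\widehat\nabla^*\widehat\nabla+\mathscr R$, $\mathscr R\geq 0$, holds. The paper instead works throughout with the untwisted Dirac operator and the standard Lichnerowicz identity, then adds $\tfrac12\int_{\widehat\Sigma}\langle X,\nu\rangle|\psi|^2$ to both sides and integrates by parts, which turns $\tfrac14\int R^{\widehat\Omega}|\psi|^2+\int|\widehat\nabla\psi|^2$ into $\tfrac14\int(R^{\widehat\Omega}+2\,\mathrm{div}X-2|X|^2)|\psi|^2+\tfrac12\int|\widehat\nabla\psi|^2\geq 0$. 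If you insist on the twisted route you would need to re-derive the boundary Dirac operator for the twisted connection and re-verify that its boundary term reproduces the Wang--Yau physical integrand of Proposition~\ref{mean2}; this is not automatic and is not supplied. As it stands, the proposal identifies the right scaffolding but omits the two specific analytic devices — the reference-side parallel spinor (forcing $D\xi=\tfrac{k_0}{2}\xi$) and the Cauchy--Schwarz plus chirality-balance step — that the paper uses to turn the Bochner inequalities into the mass inequality.
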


Using a similar technique we are able to provide quasi-local proofs of the positivity of the Brown-York and Li-Yau masses. These are special cases when the embedding is chosen to be into $\mathbb{R}^{3}$ instead of the Minkowski space.

\begin{theorem}
\label{2}
Let $\Sigma$ be a topological $2-$sphere in the spacetime $(M,\widehat{g})$ bounding a spacelike topological ball $\Omega$. Let the scalar curvature of $\Omega$ be non-negative.
Consider isometric embedding of $(\Sigma, \sigma)$ into the Euclidean space $(\mathbb{R}^{3},\delta)$ $i:\Sigma\hookrightarrow \mathbb{R}^{3}$. Let $\widehat{k}$ be the mean curvature of $\Sigma$ while viewed as an embedded surface in $\Omega$ in the physical spacetime $(M,\widehat{g})$.
  Let $k_{0}$ be the same for its image $i(\Sigma)$ in $\mathbb{R}^{3}$. Also assume that the Gaussian curvature of $(\Sigma,\sigma)$ is strictly positive. Then (a) if $\mathcal{S}^{APS}$ is non-empty, the Brown-York mass 
\begin{eqnarray}
 M^{BY}=\frac{1}{8\pi}\int_{\Sigma}(k_{0}-\widehat{k})\mu_{\sigma}.   
\end{eqnarray}   
is non-negative, vanishes if and only if $(\Sigma,\sigma)$ is a surface in the Euclidean space $(\mathbb{R}^{3},\delta)$, 
(b) If $\mathcal{S}^{MIT}$ is non-empty, then for a $\Psi\in \mathcal{S}^{MIT}$
\begin{eqnarray}
   M^{\Psi}_{BY}:=\frac{1}{8\pi}\int_{\Sigma}(k_{0}-\widehat{k})|\Psi|^{2}\mu_{\sigma} \geq 0
\end{eqnarray}
and it vanishes if and only if the spacetime $(\Sigma, \sigma)$ is a surface in the Euclidean space $(\mathbb{R}^{3},\delta)$.
\end{theorem}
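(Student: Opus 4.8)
\emph{Proposed strategy.} The plan is to run, in the Brown--York case, the spinorial comparison behind Theorem \ref{1}, with $(\mathbb{R}^{3},\delta)$ in place of $\mathbb{R}^{1,3}$ (equivalently $\tau\equiv0$, so that $\widehat{\Omega}=\Omega$, $\widehat{\Sigma}=\Sigma$ and the $P$-terms disappear) and with the hypothesis $R(g)\ge0$ playing the role of the dominant energy condition. First I would use that strict positivity of the Gauss curvature of $(\Sigma,\sigma)$ guarantees, via the Weyl problem, that the isometric embedding $i:(\Sigma,\sigma)\hookrightarrow(\mathbb{R}^{3},\delta)$ exists, that it is unique up to a rigid motion (Cohn--Vossen), and that $i(\Sigma)$ is a convex surface; in particular $i(\Sigma)$ bounds a compact convex body $\Omega_{0}\subset\mathbb{R}^{3}$ and $k_{0}>0$ everywhere. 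Then I would fix a parallel spinor $\phi_{0}$ on $\mathbb{R}^{3}$, restrict it to $i(\Sigma)$, and transport it through $i$ to $(\Sigma,\sigma)$, obtaining a spinor $\varphi_{0}$; since $\phi_{0}$ is parallel and $\Omega_{0}$ is flat, $\varphi_{0}$ solves the hypersurface equation $D^{\Sigma}\varphi_{0}=\pm\tfrac12 k_{0}\varphi_{0}$, where $D^{\Sigma}$ is the boundary Dirac operator of $(\Sigma,\sigma)$, an operator that depends on $(\Sigma,\sigma)$ alone and is thus common to the physical and the reference side.

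Next, for $\Psi\in\mathcal{S}^{MIT}$ with boundary data $\alpha=\varphi_{0}$ (so $\widehat{D}\Psi=0$ on $\Omega$ and $\Pi_{+}\Psi_{\Sigma}=\Pi_{+}\varphi_{0}$; non-emptiness of $\mathcal{S}^{MIT}$ under $R(g)\ge0$ being the content of Section \ref{spinorproof}), I would invoke the spinorial Reilly (Lichnerowicz--Weitzenb\"ock with boundary) identity on $(\Omega,g)$, which with $\widehat{D}\Psi=0$ gives, in a suitable orientation convention,
\begin{eqnarray}
\int_{\Sigma}\langle D^{\Sigma}\Psi_{\Sigma},\Psi_{\Sigma}\rangle\,\mu_{\sigma}+\frac12\int_{\Sigma}\widehat{k}\,|\Psi_{\Sigma}|^{2}\,\mu_{\sigma}=\int_{\Omega}|\widehat{\nabla}\Psi|^{2}\,\mu_{g}+\frac14\int_{\Omega}R(g)\,|\Psi|^{2}\,\mu_{g}\ \ge\ 0,
\end{eqnarray}
the right-hand side being non-negative precisely because $R(g)\ge0$. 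Applying the same identity on the flat body $\Omega_{0}$ to $\phi_{0}$ (whose bulk Dirichlet energy and scalar-curvature term both vanish) shows that the corresponding left-hand side, with $\widehat{k}$ replaced by $k_{0}$ and $\Psi_{\Sigma}$ by $\varphi_{0}$, vanishes; and $D^{\Sigma}\varphi_{0}=\pm\tfrac12 k_{0}\varphi_{0}$ together with formal self-adjointness of $D^{\Sigma}$ makes the associated polarization vanish against $\varphi_{0}$. Subtracting the two identities and writing $\zeta:=\Psi_{\Sigma}-\varphi_{0}$ then collapses the boundary contribution to $\tfrac12\int_{\Sigma}(k_{0}-\widehat{k})|\Psi_{\Sigma}|^{2}\mu_{\sigma}$ plus a remainder depending only on $\zeta$. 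The decisive point is that the MIT condition forces $\zeta$ into a single bag-chirality subbundle, on which $D^{\Sigma}$ (which anticommutes with the bag chirality) has vanishing diagonal, so that the remainder is a multiple of $\int_{\Sigma}k_{0}|\zeta|^{2}\mu_{\sigma}$ with the sign that makes it non-negative because $k_{0}>0$. Hence $8\pi M^{\Psi}_{BY}=\int_{\Sigma}(k_{0}-\widehat{k})|\Psi_{\Sigma}|^{2}\mu_{\sigma}\ge0$, which is part (b). For part (a) I would run the identical scheme with the APS condition $P_{\ge0}\Psi_{\Sigma}=P_{\ge0}\varphi_{0}$, which instead places $\zeta$ in the non-positive spectral subspace of $D^{\Sigma}$, so that $\int_{\Sigma}\langle D^{\Sigma}\zeta,\zeta\rangle\mu_{\sigma}$ has the right sign; then, choosing $\phi_{0}$ of unit length (so $|\varphi_{0}|\equiv1$) and using the pointwise bound $|P_{\ge0}\Psi_{\Sigma}|\le|\Psi_{\Sigma}|$ built into the definition of $\mathcal{S}^{APS}$, the $|\Psi_{\Sigma}|^{2}$-weighted inequality upgrades to the unweighted statement $\int_{\Sigma}(k_{0}-\widehat{k})\mu_{\sigma}\ge0$, i.e.\ $M^{BY}\ge0$.

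For rigidity, I would note that vanishing of the mass forces each non-negative term above to vanish, in particular $\widehat{\nabla}\Psi\equiv0$ on $\Omega$, so $\Omega$ carries a nowhere-vanishing parallel spinor; hence $\mathrm{Ric}(g)\equiv0$ and, $\Omega$ being three-dimensional, $(\Omega,g)$ is flat. Being simply connected with embedded convex boundary, $\Omega$ is then isometric to a domain in $\mathbb{R}^{3}$ via its developing map, so $(\Sigma,\sigma)$ is a surface in Euclidean space, and Cohn--Vossen rigidity forces $\widehat{k}=k_{0}$, so the integrand vanishes identically; the converse is immediate. I expect the genuine obstacle to lie not in this comparison but in the boundary analysis underpinning the two spinor spaces: the solvability of the Dirac boundary value problems, i.e.\ non-emptiness of $\mathcal{S}^{MIT}$ and $\mathcal{S}^{APS}$ (the business of Section \ref{spinorproof}), and, in the APS case, the correct incorporation of the pointwise bound $|P_{\ge0}\Psi_{\Sigma}|\le|\Psi_{\Sigma}|$ into that boundary value problem --- the very step whose mishandling invalidated the earlier spinorial argument of \cite{montiel2022compact}, and the reason the MIT bag condition of part (b) is the more robust of the two.
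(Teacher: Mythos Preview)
Your overall strategy matches the paper's: specialize the machinery of Theorem~\ref{1} to $\tau\equiv0$, $X\equiv0$, with the dominant energy condition replaced by $R(g)\ge0$. For the MIT case your $\zeta$-decomposition is correct and in fact yields the \emph{exact} identity
\[
\int_{\Sigma}\langle D^{\Sigma}\Psi,\Psi\rangle\,\mu_{\sigma}=\tfrac12\int_{\Sigma}k_{0}|\Psi|^{2}\,\mu_{\sigma}-\tfrac12\int_{\Sigma}k_{0}|\zeta|^{2}\,\mu_{\sigma},
\]
which is a mild simplification over the paper's route through Cauchy--Schwarz and Corollary~\ref{MITC}; both arguments give $M^{\Psi}_{BY}\ge0$.

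Your APS argument, however, has a genuine gap at the ``upgrade'' step. The $\zeta$-trick applied to the basic Reilly identity yields only the $|\Psi|^{2}$-weighted inequality $\int_{\Sigma}(k_{0}-\widehat{k})|\Psi|^{2}\mu_{\sigma}\ge0$, and the pointwise bound $|P_{\ge0}\Psi|\le|\Psi|$ does \emph{not} by itself upgrade this to the unweighted statement $\int_{\Sigma}(k_{0}-\widehat{k})\mu_{\sigma}\ge0$: that would require a pointwise sign on $k_{0}-\widehat{k}$, which you do not have. The paper's mechanism is different and uses two ingredients you do not mention. First, the equality case of Corollary~\ref{spininequality} on the flat reference domain shows that the restriction of the parallel spinor satisfies $P_{<0}\varphi_{0}=0$, i.e.\ $\varphi_{0}=P_{\ge0}\varphi_{0}$. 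Second, one invokes not the basic Reilly identity but the $|P_{\ge0}\Psi|^{2}$-weighted inequality of Corollary~\ref{spininequality} (this is precisely where the hypothesis $|P_{\ge0}\Psi|\le|\Psi|$ from the definition of $\mathcal{S}^{APS}$ actually enters). Since the APS boundary condition then gives $P_{\ge0}\Psi=P_{\ge0}\varphi_{0}=\varphi_{0}$ with $|\varphi_{0}|\equiv1$, one obtains $\int_{\Sigma}\langle D\varphi_{0},\varphi_{0}\rangle-\tfrac12\widehat{k}\,\mu_{\sigma}\ge0$ and hence $\int_{\Sigma}(k_{0}-\widehat{k})\mu_{\sigma}\ge0$ directly, with no $|\Psi|^{2}$ weight to remove. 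Your sketch misses both the spectral positivity of $\varphi_{0}$ and the switch to the $|P_{\ge0}\Psi|^{2}$-weighted inequality; the $\zeta$-computation on the basic Reilly is not the right vehicle in the APS case.
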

\begin{remark}
Note that $\inf_{\Psi\in \mathcal{S}^{MIT},|\Psi|=1}M^{BY}_{\Psi}=M^{BY}\geq 0$ as a consequence of theorem \ref{2}.    
\end{remark}

\subsection{Main difficulty in the compact case and the idea of the proof}
\noindent Following the development of the spinor ideas \cite{witten1981new,parker1982witten} in proving the positive mass theorem, the strategy is formally straightforward. One would expect an appropriate application of Bochner-type identity and the curvature condition to yield the positivity of the quasi-local mass. However, Wang-Yau quasi-local mass is much more involved than the definition of ADM mass or even other quasi-local masses such as Brown-York \cite{brown1993quasilocal} or Li-Yau \cite{liu2003positivity}. In fact, two non-trivial theorems need to be proved before arriving at the step where one can start thinking about applying Bochner-type identity. We should mention that there is proof of the positivity of Brown-York mass using spinor by Montiel \cite{montiel2022compact}. Unfortunately, there is a fatal flaw in the proof. We shall discuss this in this section. One of the main difficulties with using a spinor in the compact case is solving the boundary value problem associated with the Dirac equation and constructing a unit norm spinor.

 Let us recall that $\widehat{\Omega}$ is a spacelike topological ball in the spacetime $(M,\widehat{g})$ that  has a smooth boundary $\partial\widehat{\Omega}$ and denote this boundary by $ \widehat{\Sigma}$. The main idea behind the proof is twofold. First, we want to make use of a Bochner formula for the spinor $\psi\in C^{\infty}(S^{\widehat{\Omega}})$. We denote the restriction of the spinor $\psi$ to the boundary $ \widehat{\Sigma}$ by $\psi$ as well. The Bochner type formula in use reads 
\begin{eqnarray}
\label{eq:10}
 \int_{ \widehat{\Sigma}}\left(\langle D\psi,\psi\rangle-\frac{1}{2}H|\psi|^{2}\right)\mu_{\widehat{\sigma}}= \frac{1}{4}\int_{\widehat{\Omega}}R^{\widehat{\Omega}}|\psi|^{2}\mu_{\widehat{\Omega}}-\int_{\widehat{\Omega}}|\widehat{D}\psi|^{2}\mu_{\widehat{\Omega}}+\int_{\widehat{\Omega}}|\hnabla\psi|^{2}\mu_{\widehat{\Omega}}.     
\end{eqnarray}
Here $\widehat{\Omega}$ is a topological ball embedded in a spacelike slice and $ \widehat{\Sigma}$ is its boundary. The term $-\int_{\widehat{\Omega}}|\widehat{D}\psi|^{2}\mu_{\widehat{\Omega}}$ is problematic since one wants to employ a positivity condition on the scalar curvature $R^{\widehat{\Omega}}$ (technically a weak positivity condition to be precise in the context of Wang-Yau quasi-local energy and dominant energy condition assumption on the spacetime). Therefore, one wishes to work with a Dirac spinor, that is, solve for $\psi$ that verifies 
\begin{eqnarray}
 \widehat{D}\psi=0 ~on~\widehat{\Omega}.   
\end{eqnarray}
However, one needs an appropriate boundary condition on $ \widehat{\Sigma}$ to solve this equation. Clearly, the Dirichlet boundary condition ends up being too strong and does not work (think of a first-order ordinary differential equation on a closed interval; clearly imposing boundary conditions on both ends may yield no solution at all or think of the holomorphic functions $\varphi$ that verify $\overline{\partial}\varphi=0$). Motivated by the Dirichlet boundary value problem from second-order elliptic equations, heuristically, one is required to impose ``half" of the Dirichlet boundary condition. One could choose different boundary conditions to solve the Dirac equation. Here we choose two different boundary conditions namely Atyiah-Patodi-Singer (APS) boundary condition \cite{atiyah1975spectral1,atiyah1975spectral2,atiyah1975spectral3} and the MIT Bag boundary condition \cite{johnson1975bag} as it serves our purpose the best. Thankfully, much is developed by \cite{bar} in the context of the boundary value problem associated with the Dirac operator. To prove the existence of a unique solution given either of the above two boundary conditions, one needs certain positivity conditions on the scalar curvature and the mean extrinsic curvature of the boundary $ \widehat{\Sigma}$. But this is still not enough as the expression of quasi-local energy is close to a form on the left-hand side of \ref{eq:10} with $|\psi|=1$ on $ \widehat{\Sigma}$. Therefore, in addition, we need to prove the existence of a solution of the Dirac equation whose norm is constant on the boundary. A parallel spinor is a constant norm spinor. But, the existence of a parallel spinor puts Ricci flatness restriction on $\widehat{\Omega}$. In the positivity proof of ADM mass by \cite{parker1982witten}, the geometry of the Cauchy slice is asymptotically Euclidean. Euclidean space admits parallel spinor and therefore it was natural to solve the boundary value problem associated with Dirac equation with 
\begin{eqnarray}
 \psi\to 1,~r\to\infty   
\end{eqnarray}
where $r$ denotes the geodesic distance from a reference point in the interior. In the current context, however, a different approach is necessary. The Wang-Yau quasi-local mass is defined with respect to the Minkowski background. There exist parallel spinors on the Minkowski space and therefore, one can pull back a parallel spinor by a suitably constructed isometry map. This step is non-trivial and needs careful attention. 

Secondly, in the proof of the positive mass theorem or positivity of other quasi-local masses, the left-hand side of \ref{eq:10} usually corresponds to the mass after setting $|\psi|=1$. However, as mentioned earlier, the definition of Wang-Yau quasi-local energy is much more involved and it is not obvious at all that the expression in \ref{eq:wy1} can be reduced to a form on the left-hand side of \ref{eq:10} modulo an additional term. This is the second non-trivial step in the proof and requires a series of calculations involving careful manipulation of the structure equations of the Cauchy slice in spacetime (see section \ref{reduction}). This additional difficulty arises because Wang-Yau quasi-local energy is essentially defined for a spacetime rather than for a spacelike slice. Therefore the second fundamental form of a Cauchy slice enters into the picture. This reduction procedure requires solving the so-called Jang's equation. Proving the existence of a bounded solution to Jang's equation is very delicate. The sufficient condition for a bounded solution to exist is that the topological sphere for which the mass is defined is not an apparent horizon (it is difficult to define the mass for an apparent horizon precisely due to the fact that the solution of Jang's equation blows up). This imposes a condition on the mean curvature of $ \widehat{\Sigma}$ and the second fundamental form of $\widehat{\Omega}$ in the spacetime.

In the case of Brown York mass, apart from the common difficulty associated with the solution of the boundary value problem, the rest are straightforward. In particular, the construction of the isometric embedding into $\mathbb{R}^{3}$ is trivial upon imposing a point-wise positivity of the Gauss curvature of the topological $2-$sphere and an application of Nirengerg-Pogolerov theorem \cite{pogorelov1952regularity}. In addition, the brown York mass is explicitly defined in terms of the embedding into flat $\mathbb{R}^{3}$ instead of Minkowski space i.e., 
\begin{eqnarray}
\label{eq:BY}
M_{Brown-York}:=\frac{1}{8\pi}\int_{ \widehat{\Sigma}}(k_{0}-\widehat{k})\mu_{\widehat{\sigma}}  
\end{eqnarray}
where $\widehat{k}$ is the mean curvature of $ \widehat{\Sigma}$ while viewed as an embedded surface in a Cauchy slice in the physical spacetime and $k_{0}$ is the mean curvature of its isometric image in the Euclidean space $\mathbb{R}^{3}$. Compare this to the expression \ref{eq:wy2} of the Wang-Yau quasi-local energy. Even though we will reduce \ref{eq:wy2} to $8\pi M^{\Psi}$, this step is highly non-trivial and we discuss this in section \ref{reduction}. Nevertheless, we need to solve the Dirac equation with the prescribed boundary condition which is necessary to prove the positivity of any quasi-local mass. First, let us describe the $MIT$ bag boundary condition. 

Consider the Riemannian manifold $(\widehat{\Omega},\widetilde{g})$ diffeomorphic to a $3-$ball with $C^{\infty}$ boundary $ \widehat{\Sigma}$. Let $(e_{1},e_{2},e_{3})$ be a $\widetilde{g}-$orthonormal frame of $\widehat{\Omega}$. Let $e_{3}$ be the unit inward normal to $ \widehat{\Sigma}$. Define two point-wise projections 
\begin{eqnarray}
 \Pi_{\pm}:S^{ \widehat{\Sigma}}\to S^{ \widehat{\Sigma}}\\
 \psi\mapsto \Pi_{\pm}(\psi):=\frac{1}{2}(\psi\pm \sqrt{-1}\rho(e_{3})\Psi).
\end{eqnarray}
Since the dimension of $ \widehat{\Sigma}$ is even, this is nothing but the projection onto the $\pm$ chirality subbundles $S^{ \widehat{\Sigma}}_{\pm}$ of $S^{ \widehat{\Sigma}}$. $\Pi_{+}$ and $\Pi_{-}$ are self-adjoint and orthogonal at each $S^{ \widehat{\Sigma}}_{x}$. Let $\Psi\in H^{s}(S^{\widehat{\Omega}})\cap H^{s-\frac{1}{2}}(S^{ \widehat{\Sigma}}),~s\geq 1$, where $H^{s}$ here is defined to the Sobolev space of sections of spin bundles that is the completion of $C^{\infty}$ with respect to the equivalent norm $||\Psi||_{L^{2}(\widehat{\Omega}}+\sum_{I=1}^{s}||\widehat{\nabla}^{I}\Psi||_{L^{2}(\widehat{\Omega})}$. The following boundary value problem is solved
\begin{eqnarray}
\widehat{D}\Psi=0~on~\widehat{\Omega}\\
\Pi_{+}\Psi=\Pi_{+}\alpha~on~ \widehat{\Sigma}
\end{eqnarray}
for $\alpha\in S^{ \widehat{\Sigma}}$. We shall prove in section \ref{MITbc} that this boundary value problem has a unique solution. The next thing is to use $\Psi$ in the Bochner identity \ref{eq:10} and set $\widehat{D}\Psi=0$. After a series of delicate estimates, we obtain an inequality of the following type \begin{eqnarray}
M_{\Psi}:=\frac{1}{8\pi}\int_{ \widehat{\Sigma}}\left(k_{0}-(\widetilde{k}-\langle X,e_{3}\rangle)\right)|\Psi|^{2}\mu_{\widehat{\sigma}}\geq 0  
\end{eqnarray}  
under the condition that $R^{\widehat{\Omega}}\geq 2|X|^{2}-2\text{div}X$ where $X$ is a $\widehat{\Omega}$ parallel vector field dependent on the second fundamental form of $\widehat{\Omega}$ in $(M,\widehat{g})$ (see section \ref{MITbc} for the details).  This is done in section \ref{reduction} via a series of calculations and requires solving Jang's equation. This in turn requires a positivity condition on the generalized mean curvature $(\widetilde{k}-\langle X,e_{3}\rangle)$ of $ \widehat{\Sigma}$ (see section \ref{reduction}). If one solves the Dirac equations in the class $|\Psi|=1$, then non-negativity of Wang-Yau energy $M^{WY}_{\tau}=\inf_{\Psi\in \mathcal{S}^{MIT},|\Psi|=1}M^{\Psi}$ follows from the non-negativity of $M^{\Psi}$. To summarize, the main steps in proving the non-negativity of Wang-Yau energy are proving the Fredholm property of the Dirac operator under $MIT$ bag boundary condition and appropriate scalar curvature condition, reduction of the Wang-Yau energy functional to the form \ref{eq:reduced}, and obtaining the non-negativity of $\widetilde{M}^{\Psi}$. 

Now we proceed to discuss the solution with a non-local boundary condition namely APS boundary condition. On the boundary $ \widehat{\Sigma}$, $L^{2}(S^{ \widehat{\Sigma}})$ splits into two orthogonal subspaces. This is as follows. Let $D$ be the Dirac operator on $ \widehat{\Sigma}$. Consider the eigenvalue equation for $D$
\begin{eqnarray}
 D\psi=\lambda \psi,~\lambda\in \mathbb{R}.   
\end{eqnarray}
$ \widehat{\Sigma}$ is closed and $D$ is symmetric on $ \widehat{\Sigma}$. The spectrum $\{\lambda\in \mathbb{R}\}$ is symmetric with respect to zero. A generic section of $S^{ \widehat{\Sigma}}$ is written as $\sum_{i}a_{i}\psi_{i},~a_{i}\in \mathbb{C}$ and $D\psi_{i}=\lambda_{i}\psi_{i},\lambda_{i}\in \mathbb{R}$. One can split $L^{2}(S^{ \widehat{\Sigma}})$ into $L^{2}_{+}(S^{ \widehat{\Sigma}})$ and $L^{2}_{-}(S^{ \widehat{\Sigma}})$ as follows 
\begin{eqnarray}
L^{2}_{+}(S^{ \widehat{\Sigma}}):=\left\{\psi\in L^{2}(S^{ \widehat{\Sigma}})|\psi=\sum_{i=1}a_{i}\psi_{i},~D\psi_{i}=\lambda_{i}\psi_{i},\lambda_{i}\geq 0\right\},\\
L^{2}_{-}(S^{ \widehat{\Sigma}}):=\left\{\psi\in L^{2}(S^{ \widehat{\Sigma}})|\psi=\sum_{i=1}a_{i}\psi_{i},~D\psi_{i}=\lambda_{i}\psi_{i},\lambda_{i}< 0\right\}. 
\end{eqnarray}
Naturally $L^{2}_{+}(S^{ \widehat{\Sigma}})$ and $L^{2}_{-}(S^{ \widehat{\Sigma}})$ are $L^{2}$ orthogonal.
Let us denote by $P_{\geq 0},P_{<0}$ the projection operators
\begin{eqnarray}
 P_{\geq 0}: L^{2}(S^{ \widehat{\Sigma}})\to L^{2}_{+}(S^{ \widehat{\Sigma}}),\\
 P_{< 0}: L^{2}(S^{ \widehat{\Sigma}})\to L^{2}_{-}(S^{ \widehat{\Sigma}}).
\end{eqnarray}
The APS boundary value problem for the Dirac equation reads 
\begin{eqnarray}
 \widehat{D}\Psi=0~on~\widehat{\Omega}\\
 P_{\geq 0}\Psi=P_{\geq 0}\alpha~on~ \widehat{\Sigma}
\end{eqnarray}
for $\alpha\in L^{2}(S^{ \widehat{\Sigma}})$. We prove the existence of a unique solution to this problem under the curvature condition $R^{\widehat{\Omega}}\geq 2|X|^{2}-2\text{div}X$ and the mean curvature condition $\widetilde{k}-\langle X,e_{3}\rangle>0$ for an appropriate choice of $X$ (see section \ref{proof1} for the definition of $X$). Note that to prove the isomorphism property of the Dirac operator between relevant Sobolev spaces, we did not need the condition $\widetilde{k}-\langle X,e_{3}\rangle>0$ in the case of MIT Bag boundary condition rather needed for the solvability of the Jang's equation. If $\Psi\in \mathcal{S}^{APS}$ (defined in \ref{spinor}), then we obtain a spin inequality of the following type 
\begin{eqnarray}
\frac{1}{8\pi}\int_{ \widehat{\Sigma}}\left(k_{0}-(\widetilde{k}-\langle X,e_{3}\rangle)\right)|P_{\geq 0}\Psi|^{2}\mu_{\widehat{\sigma}}\geq 0.    
\end{eqnarray}
Moreover, we prove that the equality holds if $ \widehat{\Sigma}$ is embedded in $\mathbb{R}^{3}$ and $\Psi$ is parallel (see section \ref{dirac}). Let's call this spinor at the case of equality $\Phi$. In such case we also observe that $\Phi_{ \widehat{\Sigma}}=P_{\geq 0}\Phi_{ \widehat{\Sigma}}$. In particular, we can set $|P_{\geq 0}\Phi|^{2}=1$ without loss of generality on $ \widehat{\Sigma}$ in $\mathbb{R}^{3}$ when the equality holds. Now we pull this spinor $P_{\geq 0}\Phi$ back to the surface $ \widehat{\Sigma}$ in $\widehat{\Omega}$ by an isometry and use this as the boundary condition for the Dirac equation on $\widehat{\Omega}$ i.e., we solve the problem 
\begin{eqnarray}
\widehat{D}\Psi=0~on~\widehat{\Omega}\\
 P_{\geq 0}\Psi=P_{\geq 0}\alpha~on~ \widehat{\Sigma}    
\end{eqnarray}
with $\alpha=P_{\geq 0}\Phi$ (pulled back to be precise). This equation is solvable as proved in section \ref{dirac}. Therefore we have the inequality 
\begin{eqnarray}
\frac{1}{8\pi}\int_{ \widehat{\Sigma}}\left(k_{0}-(\widetilde{k}-\langle X,e_{3}\rangle)\right)|P_{\geq 0}\Phi|^{2}\mu_{\widehat{\sigma}}\geq 0,    
\end{eqnarray}
but the isometry preserves the norm of the spinor and therefore $|P_{\geq 0}\Phi|=1$ on $ \widehat{\Sigma}$ in $\widehat{\Omega}$ as well. Therefore, we end up proving 
\begin{eqnarray}
\mathcal{Q}:=\frac{1}{8\pi}\int_{ \widehat{\Sigma}}\left(k_{0}-(\widetilde{k}-\langle X,e_{3}\rangle)\right)\mu_{\widehat{\sigma}}\geq 0.  
\end{eqnarray}
The last step that remains is to prove that the Wang-Yau energy (\ref{eq:wy2}) can be reduced to the form $\mathcal{Q}$. This step is independent of the previous step i.e., use of spinor and follows in a similar way as the MIT Bag case.  

Montiel \cite{montiel2022compact} provided a proof of the positivity of Brown-York mass using spinor technique. However, there is a gap in his treatment of the boundary value problem that we discuss now. He uses a generalized APS boundary condition of the type 
\begin{eqnarray}
 \widehat{D}\psi=0~on~\widehat{\Omega}\\
 \psi_{\geq \lambda}=\alpha_{\geq \lambda}~on~ \widehat{\Sigma}=\partial\widehat{\Omega},
\end{eqnarray}
for $\alpha\in C^{\infty}(S^{ \widehat{\Sigma}})$.
Here $\psi_{\geq \lambda}$ denotes the part of $\psi$ with frequency greater than or equal to $\lambda\in \mathbb{R}$.
He then proves the existence of a unique solution in relevant Sobolev spaces showing the isomorphism property of the Dirac operator between respective Sobolev spaces with this boundary condition for every $\lambda>-\infty$. Therefore, he proves in essence the solvability of the Dirichlet boundary value problem for the Dirac equation on a compact domain. This is, however, impossible. The main mistake made by Montiel is his proof of surjectivity. Recall that proceeds to show first the triviality of the kernel i.e., show that the following problem does not have any solution 
\begin{eqnarray}
\widehat{D}\psi=0~on~\widehat{\Omega}\\
 \psi_{\geq \lambda}=0.    
\end{eqnarray}
This step goes through for any $\lambda\leq 0$ with appropriate curvature assumption. However, to show the associated adjoint problem has trivial solution, he solves the problem 
\begin{eqnarray}
\widehat{D}\psi=0~on~\widehat{\Omega}\\
 \psi|_{> \lambda}=0  
\end{eqnarray}
where a strict positivity of the mean curvature is used. However, the problem happens here since to solve for the co-kernel, the current adjoint problem is 
\begin{eqnarray}
\widehat{D}\psi=0~on~\widehat{\Omega}\\
 \psi|_{> -\lambda}=0  
\end{eqnarray}
which does not yield a trivial solution for $\lambda<0$. Therefore the remaining argument breaks down.

\section{Extrinsic Spin Geometry}
\noindent Let $(\widehat{\Omega},g)$ be a compact connected oriented 3 dimensional $C^{\infty}$ Riemannian manifold with boundary $\partial \widehat{\Omega}= \widehat{\Sigma}$. Let $\hnabla$ be the metric compatible connexion on $(\widehat{\Omega},g)$ and $\nabla$ is the induced connexion on $ \widehat{\Sigma}$. For two smooth vector fields $A,B\in T \widehat{\Sigma}$, $\widehat{\nabla}$ and $\nabla$ are related by 
\begin{eqnarray}
 \widehat{\nabla}_{A}B=\nabla_{A}B+\mathcal{H}(A,B)N,   
\end{eqnarray}
where $N$ is unit normal vector to the boundary $ \widehat{\Sigma}$ and $\mathcal{H}$ is the second fundamental form of $ \widehat{\Sigma}$ in $\widehat{\Omega}$. We will work with spin bundles and their sections. 
We denote the spin bundle on $\widehat{\Omega}$ by $S^{\widehat{\Omega}}$ while that of $ \widehat{\Sigma}$ is denoted by $S^{ \widehat{\Sigma}}$. In our case, $\widehat{\Omega}$ is a topological ball, and naturally, $ \widehat{\Sigma}$ is a topological $2-$sphere.
On the spin bundle $S^{\widehat{\Omega}}$ there is a natural Hermitian inner product $\langle\cdot,\cdot\rangle$ that is compatible with the connexion $\hnabla$ (we denote the Levi-Civita connexion and its spin connexion by the same symbol) and Clifford multiplication $\rho: Cl(\widehat{\Omega},\mathbb{C})\to \text{End}(S^{\widehat{\Omega}})$. We denote a spinor a section of the spin bundle by $\psi\in \Gamma(S^{\widehat{\Omega}})$. Naturally, its restriction to $ \widehat{\Sigma}$ is a section of $S^{ \widehat{\Sigma}}$. Let us consider $\{e_{i}\}_{i=1}^{3}$ as an orthonormal frame on $ \widehat{\Sigma}$ where $e_{3}=\nu$ is the interior pointing normal to the boundary $ \widehat{\Sigma}$. The following relation holds $\forall X,Y\in \Gamma(T\widehat{\Omega})$
\begin{eqnarray}
\label{eq:compat}
X\langle \psi_{1},\psi_{2}\rangle=\langle\hnabla_{X}\psi_{1},\psi_{2}\rangle+\langle\psi_{1},\hnabla_{X}\psi_{2}\rangle \\
\hnabla_{X}(\rho(Y)\psi)=\rho(\hnabla_{X}Y)\psi+\rho(Y)\hnabla_{X}Y\\
\label{eq:clif}
\langle \rho(X)\psi_{1},\psi_{2}\rangle=-\langle \psi_{1},\rho(X)\psi_{2}\rangle\\
\langle \rho(X)\psi_{1},\rho(X)\psi_{2}\rangle=g(X,X)\langle \psi_{1},\psi_{2}\rangle.
\end{eqnarray}
These are the intrinsic identities that are verified by the geometric entities on $\widehat{\Omega}$. Intrinsically, we can define the Dirac operator on $\widehat{\Omega}$ as the composition map 
\begin{eqnarray}
 S^{\widehat{\Omega}}\xrightarrow{\widehat{\nabla}} T^{*}\widehat{\Omega}\times S^{\widehat{\Omega}}\xrightarrow{\rho} S^{\widehat{\Omega}}.  
\end{eqnarray}
In local coordinates, the above decomposition gives the Dirac operator $\widehat{D}$ 
\begin{eqnarray}
 \widehat{D}:=\sum_{i}\rho(e_{i})\widehat{\nabla}_{e_{i}}.  
\end{eqnarray}
The boundary map $i: \widehat{\Sigma}\hookrightarrow\widehat{\Omega}$, induces a connexion on $ \widehat{\Sigma}$ that verifies 
\begin{eqnarray}
\hnabla_{X}Y=\nabla_{X}Y+\mathcal{H}(X,Y)\nu,    
\end{eqnarray}
where $X,Y\in \Gamma(T \widehat{\Sigma})$ and $\mathcal{H}$ is the second fundamental form of $ \widehat{\Sigma}$ while viewed as an embedded surface in $\widehat{\Omega}$. Now this split extends to the spinor bundles. One can consider the restricted bundle $S^{\widehat{\Omega}}|_{ \widehat{\Sigma}}$ which is isomorphic to $S^{ \widehat{\Sigma}}$ since dimension of $ \widehat{\Sigma}$ is even.
The spin connection splits as follows 
\begin{eqnarray}
\hnabla_{X}\psi=\nabla_{X}\psi+\frac{1}{2}\rho(\mathcal{H}(X,\cdot))\rho(\nu)\psi    
\end{eqnarray}
for $X\in \Gamma(T \widehat{\Sigma})$. With this split, we can write the hypersurface Dirac operator in terms of the bulk Dirac operator and the second fundamental form. We have the following proposition relating the Dirac operator of $\widehat{\Omega}$ and that of its boundary $ \widehat{\Sigma}$ 
\begin{proposition}
\label{dirac1}
Let $ \widehat{\Sigma}$ be the smooth boundary of the Riemannian $3-$manifold $\widehat{\Omega}$ and $H$ be the mean curvature of $ \widehat{\Sigma}$ in $\widehat{\Omega}$.  
The boundary Dirac operator $D$ on $ \widehat{\Sigma}$ induced by the Riemannain structure of $\widehat{\Omega}$ and the bulk Dirac operator $\widehat{D}$ on $\widehat{\Omega}\cup \partial\widehat{\Omega}=\widehat{\Omega}\cup \widehat{ \widehat{\Sigma}}$ verify
\begin{eqnarray}
\label{eq:hypersurface}
D\psi=-\rho(\nu)\widehat{D}\psi-\widehat{\nabla}_{\nu}\psi+\frac{1}{2}H\psi    
\end{eqnarray}    
for a $C^{\infty}$ section $\psi$ of the bundle $S^{\widehat{\Omega}}$.
\end{proposition}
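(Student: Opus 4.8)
The plan is to reduce the statement to the splitting of the spin connection along $\widehat{\Sigma}$ together with the Clifford relations \ref{eq:cliff}. First I would adapt $\widehat{D}=\sum_{i}\rho(e_{i})\widehat{\nabla}_{e_{i}}$ to the frame $e_{1},e_{2},\nu$ with $e_{3}=\nu$, separating off the normal direction,
\[
\widehat{D}\psi=\sum_{a=1}^{2}\rho(e_{a})\widehat{\nabla}_{e_{a}}\psi+\rho(\nu)\widehat{\nabla}_{\nu}\psi ,
\]
and then solve for $\widehat{\nabla}_{\nu}\psi$ by applying $\rho(\nu)$ and using $\rho(\nu)^{2}=-\mathrm{id}$. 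After this move, the identity to be proved is equivalent to expressing the tangential sum $\sum_{a=1}^{2}\rho(e_{a})\widehat{\nabla}_{e_{a}}\psi$ in purely intrinsic boundary terms.

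For that I would substitute the connection splitting $\widehat{\nabla}_{e_{a}}\psi=\tfrac12$-correction form recorded just above, namely $\widehat{\nabla}_{e_{a}}\psi=\nabla_{e_{a}}\psi+\tfrac12\rho(\mathcal{H}(e_{a},\cdot))\rho(\nu)\psi$, with $\rho(\mathcal{H}(e_{a},\cdot))=\sum_{b=1}^{2}\mathcal{H}(e_{a},e_{b})\rho(e_{b})$. The one algebraic point that needs care is $\sum_{a,b=1}^{2}\mathcal{H}(e_{a},e_{b})\rho(e_{a})\rho(e_{b})=-H$: the off-diagonal terms cancel in pairs because $\mathcal{H}$ is symmetric while $\rho(e_{a})\rho(e_{b})+\rho(e_{b})\rho(e_{a})=0$ for $a\ne b$, and the diagonal gives $\sum_{a}\mathcal{H}(e_{a},e_{a})\rho(e_{a})^{2}=-\sum_{a}\mathcal{H}(e_{a},e_{a})=-H$. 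This yields $\sum_{a=1}^{2}\rho(e_{a})\widehat{\nabla}_{e_{a}}\psi=\sum_{a=1}^{2}\rho(e_{a})\nabla_{e_{a}}\psi-\tfrac12 H\rho(\nu)\psi$.

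Finally I would feed this back into the expression for $\widehat{\nabla}_{\nu}\psi$, apply $\rho(\nu)^{2}=-\mathrm{id}$ once more on the mean-curvature term, and recognize $\sum_{a=1}^{2}\rho(e_{a})\rho(\nu)\nabla_{e_{a}}\psi$ as the intrinsic boundary Dirac operator $D\psi$ under the bundle isomorphism $S^{\widehat{\Omega}}|_{\widehat{\Sigma}}\cong S^{\widehat{\Sigma}}$, with induced Clifford multiplication $\gamma(X):=\rho(X)\rho(\nu)$ for $X\in T\widehat{\Sigma}$; a one-line check shows $\gamma$ satisfies the two-dimensional Clifford relations, so that $D=\sum_{a=1}^{2}\gamma(e_{a})\nabla_{e_{a}}$. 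Collecting all the terms reproduces $D\psi=-\rho(\nu)\widehat{D}\psi-\widehat{\nabla}_{\nu}\psi+\tfrac12 H\psi$. I expect the only real obstacle to be bookkeeping rather than analysis: the identity is sensitive to the orientation of $\nu$ (taken here inward), to the sign convention for $\gamma$ on the surface — which is exactly what fixes the overall sign of $D$ and hence the precise shape of the formula — and to using the restriction isomorphism of spinor bundles compatibly with that convention. Once these choices are pinned down, everything is a direct computation with the Clifford relations.
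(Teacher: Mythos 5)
Your argument is essentially the paper's, just run in the opposite direction: the paper starts from $D\psi=\sum_{a}\rho(e_{a})\rho(\nu)\nabla_{e_{a}}\psi$, substitutes $\nabla_{e_{a}}\psi=\widehat{\nabla}_{e_{a}}\psi-\tfrac12\rho(\mathcal{H}(e_{a},\cdot))\rho(\nu)\psi$ and anticommutes its way to $\widehat{D}$, while you isolate the normal term in $\widehat{D}\psi=\sum_{a}\rho(e_{a})\widehat{\nabla}_{e_{a}}\psi+\rho(\nu)\widehat{\nabla}_{\nu}\psi$ and substitute the same splitting into the tangential sum. Both routes hinge on the same two ingredients — the spin-connection restriction formula and the algebraic identity $\sum_{a,b}\mathcal{H}(e_{a},e_{b})\rho(e_{a})\rho(e_{b})=-H$ — so the proposal is correct and does not offer a genuinely different path; you are right, though, to flag that the overall sign of the result is fixed only once the orientation of $\nu$ and the convention $\gamma(X)=\rho(X)\rho(\nu)$ for the boundary Clifford multiplication are declared, since the paper is somewhat terse on those points.
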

\begin{proof} The proof is a result of straightforward calculations. Let $\psi\in C^{\infty}(S^{\widehat{\Omega}})$. We denote the restriction of $\psi$ to the boundary $ \widehat{\Sigma}$ by $\psi$ as well for notational convenience. The hypersurface Dirac operator is defined as follows 
\begin{eqnarray}
D\psi=\sum_{i=1}^{2}\rho(e_{i})\rho(\nu)\nabla_{e_{i}}\psi =\sum_{i=1}^{2}\rho(e_{i})\rho(\nu)\left(\hnabla_{e_{i}}\psi-\frac{1}{2}\rho(\mathcal{H}(e_{i},\cdot))\rho(\nu)\psi\right)\\\nonumber 
=\sum_{i=1}^{2}\rho(e_{i})\rho(\nu)\hnabla_{e_{i}}\psi-\frac{1}{2}\sum_{i=1}^{2}\rho(e_{i})c(\nu)\rho(\mathcal{H}(e_{i},\cdot))\rho(\nu)\psi
\end{eqnarray}
Now observe the following $(\mathcal{H}(e_{i},\cdot)\in T^{*} \widehat{\Sigma}$ and therefore $g(\mathcal{H}(e_{i},\cdot),\nu)=0$ and therefore 
\begin{eqnarray}
D\psi= \sum_{i=1}^{2}c(e_{i})c(\nu)\hnabla_{e_{i}}\psi+\frac{1}{2}\sum_{i=1}^{2}\rho(e_{i})\rho(\nu)\rho(\nu)\rho(\mathcal{H}(e_{i},\cdot))\psi\\\nonumber 
=\sum_{i=1}^{2}\rho(e_{i})\rho(\nu)\hnabla_{e_{i}}\psi+\frac{1}{2}H\psi=-\sum_{i=1}^{2}\rho(\nu)\rho(e_{i})\hnabla_{e_{i}}\psi+\frac{1}{2}H\psi\\\nonumber =-\rho(\nu)\sum_{i=1}^{3}\rho(e_{i})\hnabla_{e_{i}}\psi+\rho(\nu)\rho(\nu)\hnabla_{\nu}\psi+\frac{1}{2}H\psi\\\nonumber 
=-\rho(\nu)\widehat{D}\psi-\hnabla_{\nu}\psi+\frac{1}{2}H\psi,
\end{eqnarray}
where we have used the Clifford algebra relation \ref{eq:cliff} and the mean curvature $H=\tr{\mathcal{H}}$ and is defined with respect to the outward pointing normal to $ \widehat{\Sigma}$. 
\end{proof}

\noindent Now we derive the following identity 
\begin{proposition}
\label{main}
Let $\psi\in C^{\infty}(S^{\widehat{\Omega}})$ and $R^{\widehat{\Omega}}$ be the scalar curvature of $\widehat{\Omega}$ and $H$ be the mean curvature of the boundary $ \widehat{\Sigma}$ in $\widehat{\Omega}$.
 Then following Lichnerowicz type identity is verified by $\psi$ 
 \begin{eqnarray}
 \label{eq:lichnerowicz}
   \int_{ \widehat{\Sigma}}\left(\langle D\psi,\psi\rangle-\frac{1}{2}H|\psi|^{2}\right)\mu_{\widehat{\sigma}}= \frac{1}{4}\int_{\widehat{\Omega}}R^{\widehat{\Omega}}|\psi|^{2}\mu_{\widehat{\Omega}}-\int_{\widehat{\Omega}}|\widehat{D}\psi|^{2}\mu_{\widehat{\Omega}}+\int_{\widehat{\Omega}}|\hnabla\psi|^{2}\mu_{\widehat{\Omega}}  
 \end{eqnarray}
\end{proposition}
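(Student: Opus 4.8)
The plan is to read \ref{eq:lichnerowicz} off from the Schr\"odinger--Lichnerowicz (Weitzenb\"ock) formula $\widehat{D}^{2}=\widehat{\nabla}^{*}\widehat{\nabla}+\tfrac14 R^{\widehat{\Omega}}$ on $S^{\widehat{\Omega}}$, combined with the hypersurface relation of Proposition \ref{dirac1}; the only genuinely new ingredient is the bookkeeping of the boundary term, everything else being a routine application of the divergence theorem.

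First I would introduce the real vector field $V$ on $\widehat{\Omega}$ defined by $g(V,X)=\operatorname{Re}\langle\widehat{\nabla}_{X}\psi+\rho(X)\widehat{D}\psi,\psi\rangle$ for all $X\in\Gamma(T\widehat{\Omega})$; this is well defined since the right-hand side is $C^{\infty}(\widehat{\Omega})$-linear in $X$. Working in a local $g$-orthonormal frame $\{e_{i}\}$ that is synchronous at a chosen point and using the compatibility identities \ref{eq:compat}--\ref{eq:clif}, I would compute the pointwise divergence
\[
\operatorname{div} V=|\widehat{\nabla}\psi|^{2}-|\widehat{D}\psi|^{2}+\tfrac14 R^{\widehat{\Omega}}|\psi|^{2},
\]
where the second-derivative terms resum to $\operatorname{Re}\langle(\widehat{D}^{2}-\widehat{\nabla}^{*}\widehat{\nabla})\psi,\psi\rangle=\tfrac14 R^{\widehat{\Omega}}|\psi|^{2}$ by Lichnerowicz, and the first-derivative cross terms assemble into $|\widehat{\nabla}\psi|^{2}-|\widehat{D}\psi|^{2}$ after using $\sum_{i}\rho(e_{i})\widehat{\nabla}_{e_{i}}=\widehat{D}$ together with the skew-adjointness of $\rho(e_{i})$.

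Next I would apply the divergence theorem on $\widehat{\Omega}$, namely $\int_{\widehat{\Omega}}\operatorname{div} V\,\mu_{\widehat{\Omega}}=\int_{\widehat{\Sigma}}g(V,\mathbf{n})\,\mu_{\widehat{\sigma}}$ with $\mathbf{n}$ the \emph{outward} unit normal, so that $\mathbf{n}=-\nu=-e_{3}$ in the conventions fixed in this section. Evaluating $g(V,\mathbf{n})=-\operatorname{Re}\langle\widehat{\nabla}_{\nu}\psi+\rho(\nu)\widehat{D}\psi,\psi\rangle$ and substituting the hypersurface identity \ref{eq:hypersurface} in the form $\rho(\nu)\widehat{D}\psi=-D\psi-\widehat{\nabla}_{\nu}\psi+\tfrac12 H\psi$, the normal derivative $\widehat{\nabla}_{\nu}\psi$ cancels and one is left with $g(V,\mathbf{n})=\operatorname{Re}\langle D\psi,\psi\rangle-\tfrac12 H|\psi|^{2}$. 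Since $D$ is formally self-adjoint on the closed surface $\widehat{\Sigma}$, the integral of $\langle D\psi,\psi\rangle$ is real, and \ref{eq:lichnerowicz} follows at once.

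The argument is essentially mechanical once the Weitzenb\"ock formula and Proposition \ref{dirac1} are available; the only point that demands attention is the presence of two normal conventions at the boundary --- Proposition \ref{dirac1} is stated with $\nu=e_{3}$ interior-pointing while $H=\operatorname{tr}\mathcal{H}$ is taken with respect to the outward normal and the divergence theorem requires the outward normal --- so I would spell out these orientation conventions before differentiating, to make sure the sign in front of $\tfrac12 H|\psi|^{2}$ comes out as stated. An equivalent route, which I would mention as a cross-check, is to integrate $\int_{\widehat{\Omega}}\bigl(\langle\widehat{D}^{2}\psi,\psi\rangle-\tfrac14 R^{\widehat{\Omega}}|\psi|^{2}\bigr)\,\mu_{\widehat{\Omega}}$ by parts twice and read off the same boundary contribution; there is no real obstacle in either approach.
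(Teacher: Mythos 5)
Your proposal is correct and follows essentially the same route as the paper: define the appropriate vector field whose divergence produces the Lichnerowicz term plus the $|\hnabla\psi|^2-|\widehat{D}\psi|^2$ combination, integrate, apply the divergence theorem, and then convert the resulting boundary term via Proposition \ref{dirac1}. The paper writes the divergence in components rather than packaging it as a vector field $V$, and is a bit less explicit about the real-part bookkeeping and the inward/outward normal convention; your version is slightly tidier on those points but mathematically the same argument.
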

\begin{proof}
First, recall the Lichnerowicz identity on $\widehat{\Omega}$ 
\begin{eqnarray}
\label{eq:lich}
 \widehat{D}^{2}\psi=-\hnabla^{2}\psi+\frac{1}{4}R^{\widehat{\Omega}}\psi,   
\end{eqnarray}
for a spinor $\psi\in C^{\infty}(S^{\widehat{\Omega}})$. Let us consider an orthonormal frame $\{e_{i}\}_{i=1}^{3}$ on $\widehat{\Omega}$ with $e_{3}$ being the unit inside pointing normal $\nu$ of the boundary surface $\partial\widehat{\Omega}= \widehat{\Sigma}$. Now we evaluate the following using the compatibility property of the Hermitian inner product $\langle\cdot,\cdot\rangle$ on the spin bundle $S^{\widehat{\Omega}}$
\begin{eqnarray}
&&\hnabla_{i}\langle \rho(e_{i})\widehat{D}_{i}\psi,\psi\rangle+ \hnabla_{i}\langle\hnabla_{i}\psi,\psi\rangle\\&=&\langle \rho(e_{i})\hnabla_{i}\widehat{D}_{i}\psi,\psi\rangle\nonumber+\langle \rho(e_{i})\widehat{D}_{i}\psi,\hnabla_{i}\psi\rangle+\langle \hnabla^{2}\psi,\psi\rangle+\langle\hnabla_{i}\psi,\hnabla_{i}\psi\rangle\\\nonumber 
&=&\langle \widehat{D}^{2}\psi,\psi\rangle-\langle\widehat{D}_{i}\psi,\rho(e_{i})\hnabla_{i}\psi\rangle+\langle \hnabla^{2}\psi,\psi\rangle+\langle\hnabla_{i}\psi,\hnabla_{i}\psi\rangle\\\nonumber 
&=&\langle \widehat{D}^{2}\psi,\psi\rangle-\langle\widehat{D}_{i}\psi,\widehat{D}_{i}\psi\rangle+\langle \hnabla^{2}\psi,\psi\rangle+\langle\hnabla_{i}\psi,\hnabla_{i}\psi\rangle,
\end{eqnarray}
 where we have used the property \ref{eq:compat} and \ref{eq:clif} along with the fact that $\widehat{\nabla}_{i}(\rho(e_{i})\widehat{D}_{i}\psi)=\rho(e_{i})\hnabla_{i}\widehat{D}_{i}\psi$. Now we integrate the identity 
 \begin{eqnarray}
 \hnabla_{i}\langle \rho(e_{i})\widehat{D}_{i}\psi,\psi\rangle+ \hnabla_{i}\langle\hnabla_{i}\psi,\psi\rangle= \langle \widehat{D}^{2}\psi,\psi\rangle-\langle\widehat{D}_{i}\psi,\widehat{D}_{i}\psi\rangle+\langle \hnabla^{2}\psi,\psi\rangle+\langle\hnabla_{i}\psi,\hnabla_{i}\psi\rangle   
 \end{eqnarray}
 over $\widehat{\Omega}$ to obtain
 \begin{eqnarray}
 \int_{\widehat{\Omega}}\left(\hnabla_{i}\langle c(e_{i})\widehat{D}_{i}\psi,\psi\rangle+ \hnabla_{i}\langle\hnabla_{i}\psi,\psi\rangle\right)\mu_{\widehat{\Omega}}\\\nonumber 
 =\int_{\widehat{\Omega}}\left(\langle \widehat{D}^{2}\psi,\psi\rangle-\langle\widehat{D}_{i}\psi,\widehat{D}_{i}\psi\rangle+\langle \hnabla^{2}\psi,\psi\rangle+\langle\hnabla_{i}\psi,\hnabla_{i}\psi\rangle\right)\mu_{\widehat{\Omega}}
 \end{eqnarray}
 which by Stokes's theorem yields 
 \begin{eqnarray}
 -\int_{ \widehat{\Sigma}}\left(\langle c(\nu)\widehat{D}\psi,\psi\rangle+\langle\hnabla_{\nu}\psi,\psi\rangle\right)\mu_{\widehat{\sigma}}=\int_{\widehat{\Omega}}\left(\frac{1}{4}R^{\widehat{\Omega}}|\psi|^{2}-|\widehat{D}\psi|^{2}+|\hnabla\psi|^{2}\right)\mu_{\widehat{\Omega}}.
 \end{eqnarray}
 Using the proposition \ref{dirac1} relating the Dirac operator $D$ on $ \widehat{\Sigma}$ and the Dirac operator $\widehat{D}$ on $\widehat{\Omega}$, we obtain the desired identity
 \begin{eqnarray}
 \int_{ \widehat{\Sigma}}\left(\langle D\psi,\psi\rangle-\frac{1}{2}H|\psi|^{2}\right)\mu_{\widehat{\sigma}}=\int_{\widehat{\Omega}}\left(\frac{1}{4}R^{\widehat{\Omega}}|\psi|^{2}-|\widehat{D}\psi|^{2}+|\hnabla\psi|^{2}\right)\mu_{\widehat{\Omega}},
 \end{eqnarray}
 where we have used the Lichnerowicz identity \ref{eq:lich}. This concludes the proof of the spin identity we frequently use in this article.  
\end{proof}

\section{Dirac equation}
\label{spinorproof}
\noindent The Lichnerowicz type identity obtained in proposition \ref{main} is one of the main ingredients for the proof of the theorem \ref{1}. Notice that in the identity \ref{eq:lichnerowicz}, the right-hand side contains a non-positive definite term involving the action of the Dirac operator on the spinor $\psi$ in the bulk $\widehat{\Omega}$. To get rid of this term, we want to impose that $\widehat{D}\psi=0$ on $\widehat{\Omega}$ with a specified boundary condition on $ \widehat{\Sigma}:=\partial\widehat{\Omega}$. This issue is non-trivial since in dimensions greater than or equal to 2, the Dirac operator on a compact manifold with a boundary generally has an infinite-dimensional kernel. A simple example would be solving the Dirac equation on a disk in $\mathbb{C}$. This would essentially be finding the holomorphic and anti-holomorphic functions that form infinite dimensional vector space. More precisely, a Dirichlet boundary condition does not work for the Dirac equation on a compact manifold with a boundary. To fix this problem, one imposes a so-called elliptic boundary condition for the spinor on the boundary of the compact manifold $\widehat{\Omega}$. The elliptic boundary condition for the boundary value problem associated with the bulk Dirac operator $\widehat{D}$ is a pseudo-differential operator $P:L^{2}(S^{ \widehat{\Sigma}})\to L^{2}(V)$, where $V$ is a hermitian vector bundle on $ \widehat{\Sigma}$ such that the boundary value problem 
\begin{eqnarray}
    \widehat{D}\psi=\phi~\text{on}~\widehat{\Omega}\\
    P(\psi|_{ \widehat{\Sigma}})=\alpha~\text{on}~ \widehat{\Sigma}=\partial\widehat{\Omega}
\end{eqnarray}
has a unique solution given smooth data $\phi$ and $\alpha$ modulo finite dimensional kernel. In this section, we prove the Fredholm property of the Dirac operator $\widehat{D}$ with two boundary conditions: MIT Bag boundary condition (local) and APS boundary condition (non-local).  
\subsection{MIT Bag boundary condition}
\label{MITbc}
\noindent Let $S^{\widehat{\Omega}}$ denote the $C^{\infty}$ spin bundle over the connected oriented Riemannian $3-$ manifold $(\widehat{\Omega},g)$ with boundary $ \widehat{\Sigma}$. Let $S^{ \widehat{\Sigma}}$ denote the spin bundle over $ \widehat{\Sigma}$. Let $(e_{1},e_{2},e_{3}=\nu)$ be a $g-$orthonormal frae of $\widehat{\Omega}$. $\nu=e_{3}$ is the usual unit normal to $ \widehat{\Sigma}$ pointing inward. Define two point-wise projections 
\begin{eqnarray}
\label{eq:chiral}
 \Pi_{\pm}:S^{ \widehat{\Sigma}}\to S^{ \widehat{\Sigma}}\\
 \psi\mapsto \Pi_{\pm}(\psi):=\frac{1}{2}(\psi\pm \sqrt{-1}\rho(\nu)\psi).
\end{eqnarray}
Since the dimension of $ \widehat{\Sigma}$ is even, this is nothing but the projection onto the $\pm$ chirality subbundles $S^{ \widehat{\Sigma}}_{\pm}$ of $S^{ \widehat{\Sigma}}$. $\Pi_{+}$ and $\Pi_{-}$ are self-adjoint and orthogonal at each $S^{ \widehat{\Sigma}}_{x}$. \cite{Bar} provides a exposition of the technical details involving the MIT Bag boundary value problem of Dirac type operators. From now on we will write $\Pi_{+}:S^{ \widehat{\Sigma}}\to S^{ \widehat{\Sigma}}_{+}$ and $\Pi_{-}:S^{ \widehat{\Sigma}}\to S^{ \widehat{\Sigma}}_{-}$. Therefore we have for $\psi\in C^{\infty}(S^{ \widehat{\Sigma}})$ 
\begin{eqnarray}
 \psi=\Pi_{+}\psi+\Pi_{-}\psi.   
\end{eqnarray}
Dirac operator $D$ interchanges elements of $S^{ \widehat{\Sigma}}_{+}$ and $S^{ \widehat{\Sigma}}_{-}$ i.e., $D\psi_{+}\in S^{ \widehat{\Sigma}}_{-}$ and $D\psi_{-}\in S^{ \widehat{\Sigma}}_{+}$. This follows from the following fact
\begin{eqnarray}
 D(\gamma(\nu)\psi)=-\gamma(\nu)D\psi.   
\end{eqnarray}
The MIT bag boundary value problem is defined as follows 
\begin{eqnarray}
\widehat{D}\psi=0~on~\widehat{\Omega}\\
\Pi_{+}\psi=\Pi_{+}\alpha~on~ \widehat{\Sigma}
\end{eqnarray}
where $\alpha\in C^{\infty}(S^{ \widehat{\Sigma}})$. This boundary condition is symmetric with respect to $\Pi_{\pm}$ i.e., using $\Pi_{-}$ works equally well. The zero-order differential operator $\Pi_{+}$ provides elliptic boundary conditions for the Dirac operator. More precisely, the following lemma holds whose proof can be found in \cite{bar}.
\begin{lemma}
Let $\widehat{\Omega}$ be a connected oriented smooth Riemannian $3-$manifold with smooth boundary $ \widehat{\Sigma}$. Then the chirality operators $\Pi_{\pm}$ defined in \ref{eq:chiral} verify the Lopatinski-Shapiro condition i.e., they provide local elliptic boundary condition for the Dirac equation.  
\end{lemma}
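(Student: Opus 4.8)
The plan is to verify the Lopatinski--Shapiro (LS) condition, since $\Pi_{\pm}$ is a zeroth-order bundle endomorphism (defined fibrewise, hence local) and this is exactly what "local elliptic boundary condition" means. Recall that LS for the pair $(\widehat{D},\Pi_{+})$ demands, at every $x\in\widehat{\Sigma}$ and every $0\neq\xi\in T^{*}_{x}\widehat{\Sigma}$, the following: after freezing the coefficients of $\widehat{D}$ at $x$, passing to a geodesic collar $\widehat{\Sigma}\times[0,\infty)$ with $\nu=\partial_{t}$ the inward unit normal, and Fourier transforming in the boundary directions with covariable $\xi$, the map $\Pi_{+}$ restricts to an isomorphism from the space $M_{\xi}$ of $L^{2}(\mathbb{R}_{\geq 0})$ solutions of the resulting constant-coefficient normal ODE onto the fibre $S^{\widehat{\Sigma}}_{x,+}$.

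First I would write down the model ODE. Proposition~\ref{dirac1} gives $\widehat{D}\psi=\rho(\nu)\bigl(\hnabla_{\nu}\psi+D\psi\bigr)-\tfrac12 H\rho(\nu)\psi$, so to leading order $\widehat{D}\psi=0$ is $\hnabla_{\nu}\psi=-D\psi$; writing $\rho(\xi)$ for Clifford multiplication by the $\widehat{\sigma}$-dual vector of $\xi$, the principal symbol of the boundary Dirac operator $D$ along $\xi$ is $i\,\rho(\xi)\rho(\nu)$, so after the Fourier transform the equation becomes
\begin{equation}
\partial_{t}\widehat{\psi}=B(\xi)\widehat{\psi},\qquad B(\xi):=-i\,\rho(\xi)\rho(\nu).
\end{equation}
Because $\xi$ is tangent to $\widehat{\Sigma}$, $\rho(\nu)$ and $\rho(\xi)$ anticommute, whence $B(\xi)^{2}=\rho(\xi)^{2}\rho(\nu)^{2}=(-|\xi|^{2})(-\mathrm{id})=|\xi|^{2}\,\mathrm{id}$. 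Thus $B(\xi)$ has the two eigenvalues $\pm|\xi|$, the $L^{2}(\mathbb{R}_{\geq 0})$ solutions are exactly those generated by the $(-|\xi|)$-eigenspace, and since $S^{\widehat{\Omega}}$ has rank $2$ this stable subspace $M_{\xi}$ is one-dimensional, as is $S^{\widehat{\Sigma}}_{x,+}$.

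The heart of the argument is to show $\Pi_{+}$ is injective on $M_{\xi}$; by the dimension count this then forces it to be an isomorphism onto $S^{\widehat{\Sigma}}_{x,+}$, uniformly in $(x,\xi)$, which is precisely LS. Here I would use that $\ker\Pi_{+}=S^{\widehat{\Sigma}}_{-}$ is the $(+i)$-eigenspace of $\rho(\nu)$ (immediate from~\eqref{eq:chiral}). Suppose $\psi\in M_{\xi}\cap\ker\Pi_{+}$, so $\rho(\nu)\psi=i\psi$ and $B(\xi)\psi=-|\xi|\psi$, i.e.\ $\rho(\xi)\rho(\nu)\psi=-i|\xi|\psi$. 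Substituting $\rho(\nu)\psi=i\psi$ gives $\rho(\xi)\psi=-|\xi|\psi$; but $\rho(\xi)^{2}=-|\xi|^{2}\mathrm{id}$ means the only eigenvalues of $\rho(\xi)$ on nonzero vectors are $\pm i|\xi|$, so the real nonzero value $-|\xi|$ is impossible, forcing $\psi=0$. The computation for $\Pi_{-}$ is identical with $\rho(\nu)\psi=-i\psi$. (The same conclusion also follows from the general criterion that a projection boundary condition is elliptic iff its symbol maps the Calderón projector's range isomorphically onto the target, together with the fact that for Dirac operators $\Pi_{+}$ differs from the Calderón projector by a smoothing operator because $\rho(\nu)$ anticommutes with the tangential symbol; but the explicit ODE argument above is self-contained given Proposition~\ref{dirac1}.)

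The only real obstacle I anticipate is bookkeeping of normalizations: the factor of $\sqrt{-1}$ in the principal symbol of $D$, the sign distinguishing the inward from the outward normal, and the correspondingly correct choice of the stable eigenspace of $B(\xi)$, all of which must be kept coherent with the conventions of Proposition~\ref{dirac1} and~\eqref{eq:chiral}. Reassuringly, the contradiction in the previous paragraph persists under either Fourier sign convention (it would give $\rho(\xi)\psi=\pm|\xi|\psi$, still with a real nonzero eigenvalue), so the conclusion is robust; the underlying algebra is elementary once the model ODE is set up.
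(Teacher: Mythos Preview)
Your proof is correct and in fact considerably more detailed than the paper's own argument, which consists solely of the citation ``See \cite{bar}.'' You have carried out the standard verification of the Lopatinski--Shapiro condition explicitly: freezing coefficients, identifying the model ODE via Proposition~\ref{dirac1}, computing $B(\xi)^{2}=|\xi|^{2}\mathrm{id}$, and showing that the stable subspace $M_{\xi}$ meets $\ker\Pi_{+}$ trivially by the Clifford eigenvalue contradiction. This is exactly the computation that the cited reference \cite{bar} contains, so your approach is not genuinely different from the paper's---you have simply unpacked what the paper outsources. The algebra and the dimension count (rank-$2$ spinor bundle in dimension $3$, one-dimensional $M_{\xi}$ and $S^{\widehat{\Sigma}}_{x,+}$) are all handled correctly, and your closing remark about robustness under sign conventions is well taken.
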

\begin{proof}
See \cite{bar}.    
\end{proof}
Now we prove the Fredholm property of the Dirac operator with this boundary condition. 
\begin{proposition}
\label{isom}
Let $\widehat{\Omega}$ be a connected oriented  $3-$Riemannian manifold with connected smooth boundary $ \widehat{\Sigma}$ and $X$ be a smooth vector field on $\widehat{\Omega}$. Assume that the scalar curvature $R^{\widehat{\Omega}}$ of $\widehat{\Omega}$ verifies $ R^{\widehat{\Omega}}\geq 2|X|^{2}-2\text{div}X$. Then the inhomogeneous Dirac equation 
 \begin{eqnarray}
  \widehat{D}\psi=\Psi~\text{on}~\widehat{\Omega}~\\
  \Pi_{+}\psi=\Pi_{+}\alpha~\text{on}~ \widehat{\Sigma}
 \end{eqnarray}
 with $\Psi\in H^{s-1}(S^{\widehat{\Omega}})$ and $\alpha\in H^{s-\frac{1}{2}}(S^{ \widehat{\Sigma}}),s\geq 1$ has a unique solution $\psi\in H^{s}(S^{\widehat{\Omega}})\cap H^{s-\frac{1}{2}}(S^{ \widehat{\Sigma}})$.    
\end{proposition}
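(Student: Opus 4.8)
\emph{Proof idea.} The plan is the classical three-step treatment of a symmetric first-order elliptic boundary value problem: (1) Fredholmness together with boundary elliptic regularity; (2) vanishing index; (3) triviality of the kernel. Steps (1)--(3) give that $(\widehat{D},\Pi_{+})$ is bijective between the indicated spaces, and the a priori estimate from (1) then delivers the asserted $H^{s}\cap H^{s-1/2}(S^{\widehat{\Sigma}})$ regularity. For (1): by the preceding lemma $\Pi_{+}$ satisfies the Lopatinski--Shapiro condition, so $(\widehat{D},\Pi_{+})$ is elliptic; invoking the general theory developed in \cite{bar}, for each $s\geq 1$ the map $\psi\mapsto(\widehat{D}\psi,\Pi_{+}(\psi|_{\widehat{\Sigma}}))$ is Fredholm $H^{s}(S^{\widehat{\Omega}})\to H^{s-1}(S^{\widehat{\Omega}})\oplus\Pi_{+}H^{s-1/2}(S^{\widehat{\Sigma}})$, its kernel is $s$-independent and consists of smooth spinors (regularity up to the $C^{\infty}$ boundary), and $\|\psi\|_{H^{s}}\lesssim\|\widehat{D}\psi\|_{H^{s-1}}+\|\Pi_{+}\psi\|_{H^{s-1/2}}+\|\psi\|_{L^{2}}$.

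For (2) I would use the parallel quaternionic structure $j$ on the rank-two complex spinor bundle of the $3$-manifold $\widehat{\Omega}$ ($j$ antilinear, $j^{2}=-\mathrm{id}$, commuting with Clifford multiplication and hence with $\widehat{D}$). Being antilinear, $j$ interchanges the $\mp\sqrt{-1}$-eigenbundles of $\rho(\nu)$, so $j(S^{\widehat{\Sigma}}_{+})=S^{\widehat{\Sigma}}_{-}$ and $j$ restricts to an $\mathbb{R}$-linear isomorphism $\ker(\widehat{D},\Pi_{+})\xrightarrow{\sim}\ker(\widehat{D},\Pi_{-})$. The Green formula $\langle\widehat{D}\psi,\phi\rangle_{L^{2}}-\langle\psi,\widehat{D}\phi\rangle_{L^{2}}=-\int_{\widehat{\Sigma}}\langle\rho(\nu)\psi,\phi\rangle\mu_{\widehat{\sigma}}$ identifies $\Pi_{-}$ as the boundary condition adjoint to $\Pi_{+}$, whence $\mathrm{coker}(\widehat{D},\Pi_{+})\cong\ker(\widehat{D},\Pi_{-})\cong\ker(\widehat{D},\Pi_{+})$; thus $\mathrm{ind}(\widehat{D},\Pi_{+})=0$ and it suffices to kill $\ker(\widehat{D},\Pi_{+})$.

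For (3), let $\psi$ be smooth with $\widehat{D}\psi=0$ and $\Pi_{+}\psi=0$, so $\psi|_{\widehat{\Sigma}}\in\Gamma(S^{\widehat{\Sigma}}_{-})$. Insert $\psi$ into the Lichnerowicz identity of Proposition \ref{main}. Since $D$ anticommutes with $\rho(\nu)$ it maps $\Gamma(S^{\widehat{\Sigma}}_{-})$ into $\Gamma(S^{\widehat{\Sigma}}_{+})$, orthogonal to $S^{\widehat{\Sigma}}_{-}$ at each point; hence the boundary term $\langle D\psi,\psi\rangle$ vanishes pointwise and the identity collapses to $-\tfrac12\int_{\widehat{\Sigma}}H|\psi|^{2}\mu_{\widehat{\sigma}}=\tfrac14\int_{\widehat{\Omega}}R^{\widehat{\Omega}}|\psi|^{2}\mu_{\widehat{\Omega}}+\int_{\widehat{\Omega}}|\widehat{\nabla}\psi|^{2}\mu_{\widehat{\Omega}}$. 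To bring in $R^{\widehat{\Omega}}\geq 2|X|^{2}-2\,\text{div}\,X$, integrate by parts $\int_{\widehat{\Omega}}(\text{div}\,X)|\psi|^{2}=-2\,\mathrm{Re}\int_{\widehat{\Omega}}\langle\widehat{\nabla}_{X}\psi,\psi\rangle-\int_{\widehat{\Sigma}}\langle X,e_{3}\rangle|\psi|^{2}$ ($e_{3}=\nu$ inward) and rearrange to $0=\tfrac14\int_{\widehat{\Omega}}(R^{\widehat{\Omega}}-2|X|^{2}+2\,\text{div}\,X)|\psi|^{2}+\int_{\widehat{\Omega}}(|\widehat{\nabla}\psi|^{2}+\mathrm{Re}\langle\widehat{\nabla}_{X}\psi,\psi\rangle+\tfrac12|X|^{2}|\psi|^{2})+\tfrac12\int_{\widehat{\Sigma}}(H+\langle X,e_{3}\rangle)|\psi|^{2}$. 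The first bulk integral is $\geq 0$ by hypothesis, the second has integrand $\geq\tfrac12|\widehat{\nabla}\psi|^{2}\geq 0$ (by $|\widehat{\nabla}_{X}\psi|\leq|X|\,|\widehat{\nabla}\psi|$ and Young); once one checks that the boundary integrand $H+\langle X,e_{3}\rangle$ has the favourable sign for the $X$ dictated by the construction, all three terms vanish. Then $\widehat{\nabla}\psi\equiv 0$, so $|\psi|$ is constant; but a nowhere-vanishing parallel spinor would restrict to a nowhere-vanishing section of the chirality line bundle $S^{\widehat{\Sigma}}_{-}$ over the $2$-sphere $\widehat{\Sigma}$, which has non-zero degree --- a contradiction. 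Hence $\psi\equiv 0$, so $\ker(\widehat{D},\Pi_{+})=0$, and with (2) the operator is an isomorphism.

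\emph{Expected main obstacle.} I expect the real work to be the boundary analysis in (3): organising the twisted Bochner identity so that the mean-curvature contribution on $\widehat{\Sigma}$ is sign-definite (or is killed by the chirality projection) using \emph{only} the scalar-curvature inequality and no extra hypothesis on the mean curvature of $\widehat{\Sigma}$ --- this is exactly where the twist by $X$ and the chirality condition have to be calibrated against one another. Subsidiary care is needed to pin down the adjoint boundary condition and the index bookkeeping in (2), and to obtain boundary elliptic regularity of the Dirac system up to a merely $C^{\infty}$ boundary in (1).
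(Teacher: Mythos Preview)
Your framework in steps (1) and (2) is sound and more or less standard. The genuine gap is in step (3), and you have correctly identified exactly where it lies: the leftover boundary term $\tfrac{1}{2}\int_{\widehat{\Sigma}}(H+\langle X,e_{3}\rangle)|\psi|^{2}$ cannot be controlled. At this point in the paper $X$ is an arbitrary smooth vector field subject only to the scalar-curvature inequality; there is no ``construction'' to appeal to, and no hypothesis on $H$ or on $H+\langle X,e_{3}\rangle$ is available. The paper explicitly remarks (immediately after this proposition) that the mean-curvature positivity is \emph{not} needed for the MIT Bag isomorphism, so any argument that leans on it proves a weaker statement than what is claimed. Your topological endgame (no nowhere-vanishing section of the chirality line bundle on $S^{2}$) is clever but cannot be reached without first disposing of that boundary term.

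The paper's fix is strikingly simple and bypasses the Lichnerowicz identity entirely at the boundary stage. Apply the Green formula you already wrote down, but with $\phi=\sqrt{-1}\,\psi$: since $\widehat{D}\psi=0$ one gets $\int_{\widehat{\Sigma}}\langle\psi,\sqrt{-1}\,\rho(\nu)\psi\rangle\,\mu_{\widehat{\sigma}}=0$. The MIT Bag condition $\Pi_{+}\psi=0$ says precisely $\sqrt{-1}\,\rho(\nu)\psi=-\psi$ on $\widehat{\Sigma}$, so the integrand is $-|\psi|^{2}$ and hence $\psi|_{\widehat{\Sigma}}=0$. Only \emph{after} this does the paper invoke Lichnerowicz and the $X$-twisted estimate --- and now every boundary contribution vanishes identically, so the argument $\tfrac14\int(R^{\widehat{\Omega}}+2\,\mathrm{div}X-2|X|^{2})|\psi|^{2}+\tfrac12\int|\widehat{\nabla}\psi|^{2}\leq 0$ forces $|\psi|$ to be constant, hence zero by continuity from the boundary. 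The cokernel (adjoint condition $\Pi_{-}\psi=0$) is dispatched by the identical computation with the sign of $\sqrt{-1}\,\rho(\nu)$ flipped, so the paper does not even need your index argument in (2).
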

\begin{proof}
We prove that $\widehat{D}:H^{s}(S^{\widehat{\Omega}})\cap H^{s-\frac{1}{2}}(S^{ \widehat{\Sigma}})\to H^{s-1}(S^{\widehat{\Omega}})\cap H^{s-\frac{3}{2}}(S^{ \widehat{\Sigma}}),~s\geq 1$ is an isomorphism with this boundary condition. First, we show that the kernel 
\begin{eqnarray}
 \ker{\widehat{D}}:=\left\{\psi\in H^{s}(S^{\widehat{\Omega}})\cap H^{s-\frac{1}{2}}(S^{ \widehat{\Sigma}}) |\widehat{D}\psi=0~on ~\widehat{\Omega},~\Pi_{+}\psi=0~on~ \widehat{\Sigma}\right\}=\{0\}. 
\end{eqnarray}
This is straightforward to show using the integration by parts identity 
\begin{eqnarray}
\int_{\widehat{\Omega}}\langle \widehat{D}\psi,\sqrt{-1}\psi\rangle\mu_{g}=\int_{\widehat{\Omega}}\langle\psi,\sqrt{-1}\widehat{D}\psi\rangle\mu_{g}+\int_{ \widehat{\Sigma}}\langle\psi,\sqrt{-1}\rho(\nu)\psi\rangle\mu_{\widehat{\sigma}}.     
\end{eqnarray}
For $\psi\in \ker{\widehat{D}}$, we have 
\begin{eqnarray}
\int_{ \widehat{\Sigma}}\langle\psi,\sqrt{-1}\rho(\nu)\psi\rangle\mu_{\widehat{\sigma}}=0,~\psi=-\sqrt{-1}\rho(\nu)\psi~on~ \widehat{\Sigma}    
\end{eqnarray}
yielding 
\begin{eqnarray}
 \int_{ \widehat{\Sigma}}|\psi|^{2}\mu_{\widehat{\sigma}}=0\implies \psi=0.   
\end{eqnarray}
Therefore $\psi=0$ on $ \widehat{\Sigma}$. Now since $\widehat{D}\psi=0$ on $\widehat{\Omega}$, using Lichnerowicz formula $\widehat{D}^{2}=-\hnabla^{2}+\frac{1}{4}R^{\widehat{\Omega}}$
we obtain 
\begin{eqnarray}
\int_{\widehat{\Omega}}|\hnabla \psi|^{2}\mu_{g}+\frac{1}{4}R^{\widehat{\Omega}}|\psi|^{2}\mu_{g}=0 
\end{eqnarray}
due to $\psi=0$ on $ \widehat{\Sigma}$.
Now we add $\frac{1}{2}\int_{ \widehat{\Sigma}}\langle X,\nu\rangle|\psi|^{2}$. Since $\psi=0$ on $ \widehat{\Sigma}$, it does not alter anything. This yields through integration by parts 
\begin{eqnarray}
\int_{\widehat{\Omega}}\left(|\hnabla \psi|^{2}\mu_{\widehat{\Omega}}+\frac{1}{4}R^{\widehat{\Omega}}|\psi|^{2}+\frac{1}{2}\text{div}X|\psi|^{2}+\frac{1}{2}X(|\psi|^{2})\right)\mu_{g}=0  
\end{eqnarray}
or 
\begin{eqnarray}
\frac{1}{4} \int_{\widehat{\Omega}} (R^{\widehat{\Omega}}+2\text{div}X-2|X|^{2})|\psi|^{2}\mu_{g}+\frac{1}{2}\int_{\widehat{\Omega}}|\hnabla\psi|^{2}\mu_{g}\leq 0  
\end{eqnarray}
yielding $\psi=0$ on $\widehat{\Omega}$ since $R^{\widehat{\Omega}}+2\text{div}X-2|X|^{2}\geq 0$. This follows from the fact that $|\psi|^{2}$ is either identically zero or constant on $\widehat{\Omega}$ with vanishing boundary value. Therefore it must have zero length throughout $\widehat{\Omega}$ due to continuity. This proves the kernel $ \ker{\widehat{D}}=0$. Now we show that the co-kernel 
\begin{eqnarray}
\text{Co-ker}{\widehat{D}}:=\left\{\psi\in H^{s}(S^{\widehat{\Omega}})\cap H^{s-\frac{1}{2}}(S^{ \widehat{\Sigma}}) |\widehat{D}\psi=0~on ~\widehat{\Omega},~\Pi_{-}\psi=0~on~ \widehat{\Sigma}\right\}=\{0\}.    
\end{eqnarray}
But an exact same calculation as with the boundary condition $\Pi_{+}\psi=0$ yields 
$\psi=0$ on $ \widehat{\Sigma}$. Then the curvature condition together with the Lichrerowicz identity yields 
$\psi=0$ on $\widehat{\Omega}$ 
\begin{eqnarray}
 \text{Co-ker}\widehat{D}=0.   
\end{eqnarray}
\end{proof}

\begin{remark}
The MIT bag boundary condition is symmetric with respect to $\Pi_{+}$ or $\Pi_{-}$ data.    
\end{remark}

\begin{remark}
Note that the positivity condition on the mean curvature of $ \widehat{\Sigma}$ $H-\langle X,\nu\rangle>0$ is not required to prove the isomorphism property of the Dirac operator with MIT Bag boundary condition.     
\end{remark}
Proposition \ref{isom} allows one to utilize $\widehat{D}\psi=0$ in the Bochner identity \ref{main}. The following inequality follows as a consequence 
\begin{theorem}
\label{MIT1}
Let $(\widehat{\Omega},g)$ be a spacelike topological ball in the spacetime $(M,\widehat{g})$. Let $X$ be a vector field on $\widehat{\Omega}$ that verifies the following point-wise inequality 
\begin{eqnarray}
 R^{\widehat{\Omega}}\geq 2|X|^{2}-2\text{div}X,   
\end{eqnarray}
where $R^{\widehat{\Omega}}$ is the scalar curvature of $\widehat{\Omega}$. 
Let $\psi\in H^{s}(S^{\widehat{\Omega}})\cap H^{s-\frac{1}{2}}(S^{ \widehat{\Sigma}}),~s\geq 1$ solve the Dirac equation $\widehat{D}\psi=0$ in the bulk $\widehat{\Omega}$ with MIT bag boundary condition on $ \widehat{\Sigma}$ i.e., for $\alpha\in H^{s-\frac{1}{2}}(S^{ \widehat{\Sigma}})$
\begin{eqnarray}
 \widehat{D}\psi=0~on~\widehat{\Omega}\\
 \Pi_{+}(\psi)=\Pi_{+}(\alpha)~on~ \widehat{\Sigma}.
\end{eqnarray}
Then the following inequality is verified by $\psi$ 
\begin{eqnarray}
 \int_{ \widehat{\Sigma}}\langle D\psi,\psi\rangle \mu_{\widehat{\sigma}}\geq \frac{1}{2}\int_{ \widehat{\Sigma}}(H-\langle X,\nu\rangle)|\psi|^{2}\mu_{\widehat{\sigma}}.
\end{eqnarray}
The equality holds only if $X=0$, and $\psi$ on $ \widehat{\Sigma}$ is the restriction of a non-trivial parallel spinor in the interior $\widehat{\Omega}$. In such case 
\begin{eqnarray}
D\psi=\frac{1}{2}H\psi~\text{on}~ \widehat{\Sigma}    
\end{eqnarray}
\end{theorem}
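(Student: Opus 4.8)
The plan is to read off the inequality directly from the Lichnerowicz-type identity of Proposition \ref{main}, combined with a Witten-type completion of squares driven by $X$; the MIT bag condition plays no role in the inequality itself and is used only to guarantee, through Proposition \ref{isom} and interior elliptic regularity, that a solution $\psi$ exists and is regular enough — smooth in the interior, and smooth up to $\widehat\Sigma$ when $\alpha$ is smooth, otherwise obtained by approximating $\alpha$ in $H^{s-\frac12}(S^{\widehat\Sigma})$ — for Proposition \ref{main} to apply. Granting this, I would first substitute $\widehat D\psi=0$ into the identity of Proposition \ref{main}, killing the indefinite term $-\int_{\widehat\Omega}|\widehat D\psi|^2\mu_{\widehat\Omega}$, which gives
\[
\int_{\widehat\Sigma}\langle D\psi,\psi\rangle\,\mu_{\widehat\sigma}=\tfrac12\int_{\widehat\Sigma}H|\psi|^2\,\mu_{\widehat\sigma}+\tfrac14\int_{\widehat\Omega}R^{\widehat\Omega}|\psi|^2\,\mu_{\widehat\Omega}+\int_{\widehat\Omega}|\hnabla\psi|^2\,\mu_{\widehat\Omega}.
\]

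Next I would feed in the hypothesis $R^{\widehat\Omega}\ge 2|X|^2-2\,\text{div}\,X$, so that $\tfrac14 R^{\widehat\Omega}\ge\tfrac12|X|^2-\tfrac12\,\text{div}\,X$, and integrate the divergence term by parts, $-\tfrac12\int_{\widehat\Omega}(\text{div}\,X)|\psi|^2\mu_{\widehat\Omega}=\tfrac12\int_{\widehat\Omega}X(|\psi|^2)\mu_{\widehat\Omega}-\tfrac12\int_{\widehat\Sigma}\langle X,\nu\rangle|\psi|^2\mu_{\widehat\sigma}$, with the orientation convention for $\nu$ in force. This produces exactly the boundary contribution $-\tfrac12\int_{\widehat\Sigma}\langle X,\nu\rangle|\psi|^2$ that completes $\tfrac12\int_{\widehat\Sigma}(H-\langle X,\nu\rangle)|\psi|^2$, and leaves the interior term $\int_{\widehat\Omega}\big(|\hnabla\psi|^2+\tfrac12|X|^2|\psi|^2+\tfrac12 X(|\psi|^2)\big)\mu_{\widehat\Omega}$. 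I would then bound this interior integrand below pointwise: since $\tfrac12 X(|\psi|^2)=\operatorname{Re}\langle\hnabla_X\psi,\psi\rangle\ge-|\hnabla_X\psi|\,|\psi|\ge-|X|\,|\hnabla\psi|\,|\psi|$ (the last step being Cauchy–Schwarz in the frame, with $|\hnabla\psi|^2:=\sum_i|\hnabla_{e_i}\psi|^2$), the integrand is at least $|\hnabla\psi|^2-|X|\,|\hnabla\psi|\,|\psi|+\tfrac12|X|^2|\psi|^2=\big(|\hnabla\psi|-\tfrac12|X|\,|\psi|\big)^2+\tfrac14|X|^2|\psi|^2\ge 0$. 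Assembling the three steps gives $\int_{\widehat\Sigma}\langle D\psi,\psi\rangle\,\mu_{\widehat\sigma}\ge\tfrac12\int_{\widehat\Sigma}(H-\langle X,\nu\rangle)|\psi|^2\,\mu_{\widehat\sigma}$.

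For rigidity I would run the estimates backwards. Equality forces the interior integrand to vanish identically, hence $\big(|\hnabla\psi|-\tfrac12|X|\,|\psi|\big)^2+\tfrac14|X|^2|\psi|^2\equiv 0$, so $\hnabla\psi\equiv 0$ and $|X|\,|\psi|\equiv 0$; by continuity $\psi$ is then a parallel spinor on $\widehat\Omega$, and if it is non-trivial $|\psi|$ is a positive constant, forcing $X\equiv 0$. (Equality also requires the two Cauchy–Schwarz steps to be saturated and $R^{\widehat\Omega}|\psi|^2\equiv(2|X|^2-2\,\text{div}\,X)|\psi|^2$; both conditions become vacuous once $\hnabla\psi\equiv 0$ and $X\equiv 0$.) Restricting to $\widehat\Sigma$ we then have $\widehat D\psi=0$ and $\hnabla_\nu\psi=0$, so Proposition \ref{dirac1} yields $D\psi=-\rho(\nu)\widehat D\psi-\hnabla_\nu\psi+\tfrac12 H\psi=\tfrac12 H\psi$.

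I do not expect a genuine obstacle: once the vector-field (spacetime-Witten) trick is in hand, this is a corollary of Proposition \ref{main}. The two points that need care are (i) tracking the boundary orientation through the integration by parts so that the connection term appears as $H-\langle X,\nu\rangle$ rather than $H+\langle X,\nu\rangle$, consistently with the conventions used earlier, and (ii) making the equality analysis airtight — in particular that saturating the two separate Cauchy–Schwarz inequalities together with the pointwise curvature bound is self-consistent, and that ``non-trivial parallel spinor'' is precisely what upgrades $|X|\,|\psi|\equiv 0$ to $X\equiv 0$.
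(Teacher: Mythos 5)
Your argument is correct and follows essentially the same route as the paper: substitute $\widehat D\psi=0$ into Proposition \ref{main}, use $R^{\widehat\Omega}\ge 2|X|^2-2\,\mathrm{div}X$, integrate the divergence term by parts to produce $-\tfrac12\langle X,\nu\rangle$ on the boundary, and bound the remaining interior integrand below by zero via Cauchy--Schwarz in the $X$-direction. The one place you diverge is in how you close the pointwise estimate: the paper uses $|\hnabla\psi|^2+\tfrac12 X(|\psi|^2)\ge\tfrac12|\hnabla\psi|^2-\tfrac12|X|^2|\psi|^2$ and then absorbs the $-\tfrac12|X|^2|\psi|^2$ into the curvature term, whereas you write the sharper completion-of-squares $|\hnabla\psi|^2+\tfrac12|X|^2|\psi|^2+\tfrac12 X(|\psi|^2)\ge\bigl(|\hnabla\psi|-\tfrac12|X||\psi|\bigr)^2+\tfrac14|X|^2|\psi|^2$. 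This is not just cosmetic: it makes the rigidity analysis more direct, since equality immediately forces $\hnabla\psi\equiv 0$ \emph{and} $|X|\,|\psi|\equiv 0$ from a single nonnegative expression, so non-triviality of the parallel spinor gives $X\equiv 0$ in one step. The paper instead derives $X=0$ more indirectly, via $D\psi=\tfrac12 H\psi$ on the boundary, the resulting vanishing of $\int_{\widehat\Sigma}\langle X,\nu\rangle|\psi|^2$, parallelity forcing $\widehat\Omega$ flat hence $R^{\widehat\Omega}=0$, and finally $\mathrm{div}X=|X|^2$ integrated to zero. Both are valid; your route is shorter and avoids invoking flatness of $\widehat\Omega$ to kill $X$. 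Your remark that the MIT bag condition enters only through existence and regularity of the solution, not through the estimate itself, is also consistent with the paper's use of the boundary condition.
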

\begin{proof}
The proof is a straightforward manipulation of the Bochner identity \ref{main}. First, use the proposition \ref{isom} to conclude that for $\alpha\in H^{s-\frac{1}{2}}(S^{ \widehat{\Sigma}}),s\geq 1$, the following boundary value problem has a unique solution
\begin{eqnarray}
 \widehat{D}\psi=0~on~\widehat{\Omega}\\
 \Pi_{+}(\psi)=\Pi_{+}(\alpha)~on~ \widehat{\Sigma}.   
\end{eqnarray}
Thus for such $\psi$, we have 
\begin{eqnarray}
\int_{ \widehat{\Sigma}}\left(\langle D\psi,\psi\rangle-\frac{1}{2}H|\psi|^{2}\right)\mu_{\widehat{\sigma}}= \frac{1}{4}\int_{\widehat{\Omega}}R^{\widehat{\Omega}}|\psi|^{2}\mu_{\widehat{\Omega}}+\int_{\widehat{\Omega}}|\hnabla\psi|^{2}\mu_{\widehat{\Omega}}.    
\end{eqnarray}
Now we perform the following trick. Addition of $\frac{1}{2}\int_{ \widehat{\Sigma}}\langle X,\nu\rangle|\psi|^{2}\mu_{\widehat{\sigma}}$ to the both side along with the use of the integration by parts identity 
\begin{eqnarray}
 \int_{ \widehat{\Sigma}}\langle X,\nu\rangle|\psi|^{2}=\int_{\widehat{\Omega}}\left(\text{div}X|\psi|^{2}+X(|\psi|^{2})\right)\mu_{\widehat{\Omega}}    
\end{eqnarray}
yields 
\begin{eqnarray}
\int_{ \widehat{\Sigma}}\left(\langle D\psi,\psi\rangle-\frac{1}{2}(H-\langle X,\nu\rangle)|\psi|^{2}\right)\mu_{\widehat{\sigma}}\\\nonumber=\frac{1}{4}\int_{\widehat{\Omega}}(R^{\widehat{\Omega}}+2\text{div}X)|\psi|^{2}\mu_{\widehat{\Omega}}+\int_{\widehat{\Omega}}|(|\hnabla\psi|^{2}+\frac{1}{2}X(|\psi|^{2}))\mu_{\widehat{\sigma}}\\\nonumber 
\geq \frac{1}{4}\int_{\widehat{\Omega}}(R^{\widehat{\Omega}}+2\text{div}X-2|X|^{2})|\psi|^{2}\mu_{\widehat{\Omega}}+\frac{1}{2}\int_{\widehat{\Omega}}|\hnabla\psi|^{2}\mu_{\widehat{\Omega}}\geq 0
\end{eqnarray}
due to the fact that $|\hnabla\psi|^{2}+\frac{1}{2}X(|\psi|^{2})=|\hnabla\psi|^{2}+\frac{1}{2}(\langle \hnabla_{X}\psi,\psi\rangle+\langle\psi,\hnabla_{X}\psi\rangle)\geq \frac{1}{2}|\hnabla\psi|^{2}-\frac{1}{2}|X|^{2}|\psi|^{2}$.
Therefore we have the desired inequality 
\begin{eqnarray}
 \int_{ \widehat{\Sigma}}\left(\langle D\psi,\psi\rangle-\frac{1}{2}(H-\langle X,\nu\rangle)|\psi|^{2}\right)\mu_{\widehat{\sigma}}\nonumber\geq \frac{1}{4}\int_{\widehat{\Omega}}(R^{\widehat{\Omega}}+2\text{div}X-2|X|^{2})|\psi|^{2}\mu_{\widehat{\Omega}}+\frac{1}{2}\int_{\widehat{\Omega}} |\widehat{\nabla}\psi|^{2}\mu_{\widehat{\Omega}}\\\nonumber 
\geq \frac{1}{4}\int_{\widehat{\Omega}}(R^{\widehat{\Omega}}+2\text{div}X-2|X|^{2})|\psi|^{2}\mu_{\widehat{\Omega}}\geq 0.
\end{eqnarray}
 Now equality holds only when 
\begin{eqnarray}
 \hnabla \psi=0,\\
 R^{\widehat{\Omega}}+2\text{div}X-2|X|^{2}=0.
\end{eqnarray}
Now $\hnabla\psi=0$ implies $\psi$ is a parallel Dirac spinor on $\widehat{\Omega}$. Therefore $\widehat{\Omega}$ must be Ricci flat and therefore flat since it has dimension $3$. Now recall the identity \ref{eq:hypersurface} from proposition \ref{dirac1}
\begin{eqnarray}
 D\psi=-\rho(\nu)\widehat{D}\psi-\widehat{\nabla}_{\nu}\psi+\frac{1}{2}H\psi    
\end{eqnarray}
yields 
\begin{eqnarray}
\label{eq:equalcase1}
D\psi =\frac{1}{2}H\psi~\text{on}~ \widehat{\Sigma}.
\end{eqnarray}
since $\widehat{D}\psi=0=\widehat{\nabla}_{\nu}\psi$. Therefore 
\begin{eqnarray}
 \int_{ \widehat{\Sigma}}\langle X,\nu\rangle|\psi|^{2} \mu_{\widehat{\sigma}}=0. 
\end{eqnarray}
which after integration by parts and parallelity condition $\hnabla\psi=0$ implies  
\begin{eqnarray}
\int_{\widehat{\Omega}}\text{div}X|\psi|^{2}\mu_{\widehat{\Omega}}=0    
\end{eqnarray}
Moreover, flatness implies $R^{\widehat{\Omega}}=0$ yielding 
\begin{eqnarray}
 \text{div}X-|X|^{2}=0   
\end{eqnarray}
and therefore 
\begin{eqnarray}
\label{eq:equalcase2}
\int_{\widehat{\Omega}}|X|^{2}|\psi|^{2}\mu_{\widehat{\Omega}}=0\implies X=0.  
\end{eqnarray}
\end{proof}
Theorem \ref{MIT1} has the following important corollary
\begin{corollary}
\label{MITC}
If $\psi$ is non-trivial and verifies the equality in the inequality \ref{MIT1} i.e., its restriction to the boundary $ \widehat{\Sigma}$ verifies $D\psi=\frac{1}{2}H\psi$, then the following is verified by its chiral components $\psi_{+}:=\Pi_{+}(\psi)$ and $\psi_{-}:=\Pi_{-}(\psi)$
\begin{eqnarray}
 \int_{ \widehat{\Sigma}}H|\psi_{+}|^{2}\mu_{\widehat{\sigma}}=\int_{ \widehat{\Sigma}}H|\psi_{-}|^{2}\mu_{\widehat{\sigma}}   
\end{eqnarray}
\end{corollary}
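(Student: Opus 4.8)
The plan is to exploit the chirality splitting of the boundary spinor together with the formal self-adjointness of the boundary Dirac operator $D$ on the closed surface $\widehat{\Sigma}$. By hypothesis we are in the equality case of Theorem \ref{MIT1}, so the restriction of $\psi$ to $\widehat{\Sigma}$ satisfies $D\psi=\tfrac12 H\psi$ (equation \ref{eq:equalcase1}). First I would write $\psi=\psi_{+}+\psi_{-}$ with $\psi_{\pm}=\Pi_{\pm}(\psi)\in S^{\widehat{\Sigma}}_{\pm}$. Since Clifford multiplication by $\nu$ anticommutes with $D$, the operator $D$ maps $S^{\widehat{\Sigma}}_{\pm}$ into $S^{\widehat{\Sigma}}_{\mp}$, whereas multiplication by the \emph{real} scalar function $H$ preserves each chirality subbundle. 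Because $\Pi_{+},\Pi_{-}$ are orthogonal projections, $S^{\widehat{\Sigma}}_{+}\perp S^{\widehat{\Sigma}}_{-}$ pointwise, and decomposing the identity $D\psi=\tfrac12 H\psi$ into its $S^{\widehat{\Sigma}}_{+}$ and $S^{\widehat{\Sigma}}_{-}$ components yields the decoupled pair
\[
D\psi_{-}=\tfrac12 H\psi_{+},\qquad D\psi_{+}=\tfrac12 H\psi_{-}\quad\text{on }\widehat{\Sigma}.
\]

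Next I would pair these against the chiral components and integrate over $\widehat{\Sigma}$, isolating the cross term $\langle D\psi_{+},\psi_{-}\rangle$ (which makes sense since $D\psi_{+}$ and $\psi_{-}$ both lie in $S^{\widehat{\Sigma}}_{-}$). On the one hand, inserting $D\psi_{+}=\tfrac12 H\psi_{-}$ directly gives
\[
\int_{\widehat{\Sigma}}\langle D\psi_{+},\psi_{-}\rangle\,\mu_{\widehat{\sigma}}=\tfrac12\int_{\widehat{\Sigma}}H|\psi_{-}|^{2}\,\mu_{\widehat{\sigma}}.
\]
On the other hand, since $\widehat{\Sigma}$ is closed and $D$ is formally self-adjoint there (its spectrum is real, as recalled in the APS discussion), $\int_{\widehat{\Sigma}}\langle D\psi_{+},\psi_{-}\rangle\,\mu_{\widehat{\sigma}}=\int_{\widehat{\Sigma}}\langle \psi_{+},D\psi_{-}\rangle\,\mu_{\widehat{\sigma}}$, and now inserting $D\psi_{-}=\tfrac12 H\psi_{+}$ together with the reality of $H$ gives $\tfrac12\int_{\widehat{\Sigma}}H|\psi_{+}|^{2}\,\mu_{\widehat{\sigma}}$. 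Equating the two evaluations produces $\int_{\widehat{\Sigma}}H|\psi_{+}|^{2}\mu_{\widehat{\sigma}}=\int_{\widehat{\Sigma}}H|\psi_{-}|^{2}\mu_{\widehat{\sigma}}$, which is the assertion.

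There is no real obstacle here; the computation is short. The only points that need care are the chirality bookkeeping — that $D$ swaps $S^{\widehat{\Sigma}}_{\pm}$ while $H$, being a real scalar, preserves them, so that the projection of $D\psi=\tfrac12 H\psi$ onto chirality components genuinely decouples as above — and the invocation of the $L^{2}$ self-adjointness of $D$ on the \emph{closed} surface $\widehat{\Sigma}$, which is precisely what lets the two evaluations of $\int_{\widehat{\Sigma}}\langle D\psi_{+},\psi_{-}\rangle$ agree up to the exchange $\psi_{+}\leftrightarrow\psi_{-}$. One could equivalently run the argument starting from $\int_{\widehat{\Sigma}}\langle D\psi,\psi\rangle=\tfrac12\int_{\widehat{\Sigma}}H|\psi|^{2}$, but singling out the cross term is the most transparent route.
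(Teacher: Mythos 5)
Your argument is correct and mirrors the paper's own proof: both decompose $D\psi=\tfrac12 H\psi$ into the decoupled pair $D\psi_{+}=\tfrac12 H\psi_{-}$, $D\psi_{-}=\tfrac12 H\psi_{+}$ via the chirality projections, then evaluate the cross term $\int_{\widehat{\Sigma}}\langle D\psi_{+},\psi_{-}\rangle\,\mu_{\widehat{\sigma}}$ in two ways using the self-adjointness of $D$ on the closed surface. Your additional remarks on why the chirality bookkeeping decouples (anticommutation of $\rho(\nu)$ with $D$, reality of $H$) make the argument slightly more explicit but do not change the route.
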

\begin{proof}
Recall the point-wise chiral decomposition of $\psi$
\begin{eqnarray}
 \psi=\Pi_{+}(\psi)+\Pi_{-}(\psi)=\psi_{+}+\psi_{-}   
\end{eqnarray}
that are point-wise orthogonal with respect to the Hermitian inner product $\langle\cdot,\cdot\rangle$
\begin{eqnarray}
 \langle \psi_{+},\psi_{-}\rangle=0.   
\end{eqnarray}
Also recall that the Dirac operator interchanges the subbundles $S^{ \widehat{\Sigma}}_{+}$ and $S^{ \widehat{\Sigma}}_{-}$. Now recall that on $ \widehat{\Sigma}$, $\psi$ verifies 
\begin{eqnarray}
\label{eq:equality1}
 D\psi=\frac{H}{2}\psi   
\end{eqnarray}
if the equality holds by virtue of the theorem \ref{MIT1}. Now apply $\Pi_{+},\Pi_{-}$ to both sides of \ref{eq:equality1} to yield 
\begin{eqnarray}
\label{eq:plus1}
 \Pi_{+}(D\psi)=\frac{H}{2}\Pi_{+}(\psi)~or~ D\Pi_{-}\psi=\frac{H}{2}\Pi_{+}(\psi)~or~D\psi_{-}=\frac{H}{2}\psi_{+}   
\end{eqnarray}
and 
\begin{eqnarray}
\label{eq:plus2}
 D\psi_{+}=\frac{H}{2}\psi_{-}.   
\end{eqnarray}
Now since $ \widehat{\Sigma}$ is closed, $D$ is self-adjoint and we have 
\begin{eqnarray}
 \int_{ \widehat{\Sigma}}\langle D\psi_{+},\psi_{-}\rangle\mu_{\widehat{\sigma}}=\int_{ \widehat{\Sigma}}\langle \psi_{+},D\psi_{-}\rangle\mu_{\widehat{\sigma}}   
\end{eqnarray}
Now use $\ref{eq:plus1}$ and $\ref{eq:plus2}$ to obtain 
\begin{eqnarray}
\int_{ \widehat{\Sigma}}\frac{H}{2}\langle \psi_{-},\psi_{-}\rangle\mu_{\widehat{\sigma}}=\int_{ \widehat{\Sigma}}\frac{H}{2}\langle \psi_{+},\psi_{+}\rangle\mu_{\widehat{\sigma}}.    
\end{eqnarray}
This concludes the proof.

\end{proof}

\subsection{APS boundary condition}
\noindent Here we work with the Atyiah-Patodi-Singer (APS) boundary condition \cite{atiyah1975spectral1,atiyah1975spectral2,atiyah1975spectral3} that was first utilized to prove the index theorem for compact manifolds with boundary. Here the choice of the pseudo-differential operator $P_{\lambda}$ is simply the $L^{2}$ orthogonal projection onto the subspace spanned by eigenvectors of the boundary Dirac operator $D$ whose eigenvalues are not smaller than $\lambda$ (note that $ \widehat{\Sigma}$ is closed and therefore the boundary operator $D$ is self-adjoint). \cite{atiyah1975spectral1,atiyah1975spectral2,atiyah1975spectral3} showed that the operator $P_{\lambda}$ is a zero order pseudo-differential operator. We invoke the following well-known proposition from 
\cite{atiyah1975spectral1,atiyah1975spectral2,atiyah1975spectral3,ginoux2009dirac} regarding the ellipticity of the APS boundary condition. 
\begin{proposition}\cite{atiyah1975spectral1,atiyah1975spectral2,atiyah1975spectral3,ginoux2009dirac}
Let $(\widehat{\Omega},\widetilde{g})$ be a compact connected oriented Riemannian $3-$ manifold with smooth boundary $ \widehat{\Sigma}:=\partial\widehat{\Omega}$. Then $APS$ boundary condition is elliptic and moreover, for $\lambda=0$, the spectrum of the Dirac operator is unbounded discrete.    
\end{proposition}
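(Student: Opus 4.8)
The plan is to establish the two assertions separately: first that the APS boundary condition is elliptic in the Lopatinski--Shapiro sense (so that the realization of $\widehat{D}$ subject to $P_{\geq 0}(\psi|_{\widehat{\Sigma}})=\alpha$ is a well-posed Fredholm problem on the relevant Sobolev scale), and second that the boundary Dirac operator $D$ on the closed surface $\widehat{\Sigma}$ has discrete spectrum, symmetric about $0$ and unbounded in both directions. Both facts rest on the normal form of $\widehat{D}$ near $\widehat{\Sigma}$ together with standard elliptic theory on closed manifolds, and I will phrase the argument inside the B\"ar--Ballmann framework for first-order elliptic boundary value problems \cite{bar,ginoux2009dirac}.

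For ellipticity, work in a collar $\widehat{\Sigma}\times[0,\varepsilon)$ with $t$ the distance to $\widehat{\Sigma}$. Rearranging the identity of Proposition \ref{dirac1} along this collar puts $\widehat{D}$ in the normal form
\[
\widehat{D}=\rho(\nu)\bigl(\partial_{t}+A_{t}+R_{t}\bigr),
\]
where $A_{t}$ is, for each $t$, a formally self-adjoint first-order elliptic operator on $\widehat{\Sigma}\times\{t\}$ acting on $S^{\widehat{\Sigma}}$ with $A_{0}=D$ up to sign and a zeroth-order term, and $R_{t}$ is of order zero. The Lopatinski--Shapiro condition for a boundary projection $P$ is verified by freezing coefficients at $x\in\widehat{\Sigma}$ and $\xi\in T_{x}^{*}\widehat{\Sigma}\setminus\{0\}$: one needs the space of solutions $v:[0,\infty)\to S^{\widehat{\Sigma}}_{x}$ of $v'(t)+\widehat{A}(x,\xi)v(t)=0$ that stay bounded as $t\to\infty$ to be mapped isomorphically onto its image by the principal symbol $p(x,\xi)$ of $P$. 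Here $\widehat{A}(x,\xi)$ is the principal symbol of $D$ (up to the harmless skew factor $\rho(\nu)$), a self-adjoint endomorphism of $S^{\widehat{\Sigma}}_{x}$ with $\widehat{A}(x,\xi)^{2}=|\xi|^{2}\,\mathrm{id}$, hence spectrum $\{\pm|\xi|\}$; the bounded solutions are exactly $v(t)=e^{-t\widehat{A}(x,\xi)}v_{0}$ with $v_{0}\in E^{+}(x,\xi)$, the positive eigenspace. The key point is that for $P=P_{\geq 0}=\chi_{[0,\infty)}(D)$ the principal symbol is precisely the orthogonal projection onto $E^{+}(x,\xi)$ (this is the standard computation of the symbol of a spectral projection of a first-order self-adjoint elliptic operator, using invertibility of $\sigma_{D}(x,\xi)$ for $\xi\neq 0$; it is also what makes $P_{\geq 0}$ a zeroth-order pseudodifferential operator \cite{atiyah1975spectral1,atiyah1975spectral2,atiyah1975spectral3}). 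So $p(x,\xi)$ restricted to $E^{+}(x,\xi)$ is the identity, trivially an isomorphism, with kernel $E^{-}(x,\xi)$ complementary. Thus APS is the tautological elliptic boundary condition, and the general theory then gives that $\widehat{D}$ with this condition is Fredholm on the scale $H^{s}(S^{\widehat{\Omega}})\cap H^{s-\frac12}(S^{\widehat{\Sigma}})\to H^{s-1}(S^{\widehat{\Omega}})\cap H^{s-\frac32}(S^{\widehat{\Sigma}})$ with full elliptic regularity.

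For the spectrum of $D$: since $\widehat{\Sigma}$ is a closed oriented surface, $D$ is a formally self-adjoint first-order elliptic differential operator on sections of $S^{\widehat{\Sigma}}$, hence essentially self-adjoint with domain $H^{1}(S^{\widehat{\Sigma}})$. Compactness of $\widehat{\Sigma}$ makes the Rellich embedding $H^{1}\hookrightarrow L^{2}$ compact, so $(D\pm i)^{-1}$ is compact and the spectrum of $D$ is purely discrete, real, of finite multiplicity, with no finite accumulation point. It is unbounded since a self-adjoint operator with bounded spectrum is bounded, which a genuine first-order operator with invertible principal symbol cannot be (quantitatively, Weyl's law gives $\#\{\lambda_{j}:|\lambda_{j}|\le\Lambda\}\sim c\,\mathrm{Area}(\widehat{\Sigma})\,\Lambda^{2}$). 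Finally, the chirality operator obeys $D\rho(\nu)=-\rho(\nu)D$ while $\rho(\nu)$ is a unitary bundle automorphism, so $\psi\mapsto\rho(\nu)\psi$ identifies the $\lambda$-eigenspace with the $(-\lambda)$-eigenspace; the spectrum is symmetric about $0$ and unbounded in both directions, which is exactly what legitimizes the cutoff at $\lambda=0$ and the splitting $L^{2}(S^{\widehat{\Sigma}})=L^{2}_{+}\oplus L^{2}_{-}$ into infinite-dimensional pieces; the elliptic realization of $\widehat{D}$ on $\widehat{\Omega}$ with the $P_{\geq 0}$ condition inherits compact resolvent and hence discrete (still unbounded) spectrum as well.

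The step that requires genuine care — and the only real departure from the local MIT computation of the preceding subsection — is that the Fredholm/Lopatinski--Shapiro machinery must be run for the \emph{non-local} projection $P_{\geq 0}$: one identifies the principal symbol of a spectral projection, controls the zeroth-order remainder $R_{t}$ in the normal form, and compares $P_{\geq 0}$ with the Calder\'on projector of $\widehat{D}$ rather than arguing by naive coefficient freezing alone, all of which is supplied by \cite{bar}. A secondary subtlety is that $\lambda=0$ is the borderline choice (a zero mode of $D$, if present, sits on the cut), so the resulting realization need not be self-adjoint; this affects neither ellipticity nor the Fredholm alternative, but it is precisely why the subsequent existence/uniqueness statements must invoke the curvature hypothesis $R^{\widehat{\Omega}}\ge 2|X|^{2}-2\,\text{div}\,X$ and the mean-curvature positivity $\widetilde{k}-\langle X,\nu\rangle>0$ (as in the role played by Proposition \ref{isom} in the MIT case) to pin down the kernel and cokernel.
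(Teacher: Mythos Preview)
Your argument is correct and follows the standard route through the B\"ar--Ballmann framework: put $\widehat{D}$ in normal form near the boundary, identify the principal symbol of the spectral projection $P_{\geq 0}$ with the fiberwise projection onto the positive eigenspace of $\sigma_{D}(x,\xi)$, and verify Lopatinski--Shapiro directly; then obtain discreteness and unboundedness of the spectrum from self-adjointness, compact resolvent via Rellich, and the chirality anticommutation $D\rho(\nu)=-\rho(\nu)D$ for the symmetry about zero.

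The paper, however, does not prove this proposition at all: it is stated with attribution to \cite{atiyah1975spectral1,atiyah1975spectral2,atiyah1975spectral3,ginoux2009dirac} and simply invoked as a known result (``We invoke the following well-known proposition\ldots''). So you have supplied substantially more than the paper does. Your sketch is essentially the content of those references specialized to the present setting, and the closing remarks about the borderline nature of the $\lambda=0$ cut and the need for curvature hypotheses to resolve the kernel/cokernel correctly anticipate exactly what the paper goes on to do in Theorem~\ref{dirac}.
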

First, we note the following properties of the Sobolev spaces on $\widehat{\Omega}$ and $ \widehat{\Sigma}$. A well known fact is that all functions belonging to $H^{s}(\widehat{\Omega}),~s\in \mathbb{N}$ have traces on the smooth boundary $ \widehat{\Sigma}=\partial\widehat{\Omega}$ that belong to $H^{s-\frac{1}{2}}( \widehat{\Sigma})$. With a bit of technicality this passes to the case of spinors due to the compatibility of the spin connexion with the hermitian inner product $\langle\cdot,\cdot\rangle$ (in fact this generalizes to sections of smooth vector bundles where fiber metrics are compatible with the bundle connexions). More precisely the trace operator $\text{Tr}:H^{s}(\widehat{\Omega})\to H^{s-\frac{1}{2}}( \widehat{\Sigma})$ is bounded and subjective. 

On the boundary $ \widehat{\Sigma}$, $L^{2}(S^{ \widehat{\Sigma}})$ splits into two orthogonal subspaces. This is as follows. Let $D$ be the Dirac operator on $ \widehat{\Sigma}$. Consider the eigenvalue equation for $D$
\begin{eqnarray}
 D\psi=\lambda \psi,~\lambda\in \mathbb{R}.   
\end{eqnarray}
$ \widehat{\Sigma}$ is closed and $D$ is symmetric on $ \widehat{\Sigma}$. The spectrum $\{\lambda\in \mathbb{R}\}$ is symmetric with respect to zero. A generic section of $S^{ \widehat{\Sigma}}$ is written as $\sum_{i}a_{i}\psi_{i},~a_{i}\in \mathbb{C}$ and $D\psi_{i}=\lambda_{i}\psi_{i},\lambda_{i}\in \mathbb{R}$. One can split $L^{2}(S^{ \widehat{\Sigma}})$ into $L^{2}_{+}(S^{ \widehat{\Sigma}})$ and $L^{2}_{-}(S^{ \widehat{\Sigma}})$ as follows 
\begin{eqnarray}
L^{2}_{+}(S^{ \widehat{\Sigma}}):=\left\{\psi\in L^{2}(S^{ \widehat{\Sigma}})|\psi=\sum_{i=1}a_{i}\psi_{i},~D\psi_{i}=\lambda_{i}\psi_{i},\lambda_{i}\geq 0\right\},\\
L^{2}_{-}(S^{ \widehat{\Sigma}}):=\left\{\psi\in L^{2}(S^{ \widehat{\Sigma}})|\psi=\sum_{i=1}a_{i}\psi_{i},~D\psi_{i}=\lambda_{i}\psi_{i},\lambda_{i}< 0\right\}. 
\end{eqnarray}
Naturally $L^{2}_{+}(S^{ \widehat{\Sigma}})$ and $L^{2}_{-}(S^{ \widehat{\Sigma}})$ are $L^{2}$ orthogonal.
Let us denote by $P_{\geq 0},P_{<0}$ the projection operators
\begin{eqnarray}
 P_{\geq 0}: L^{2}(S^{ \widehat{\Sigma}})\to L^{2}_{+}(S^{ \widehat{\Sigma}}),\\
 P_{< 0}: L^{2}(S^{ \widehat{\Sigma}})\to L^{2}_{-}(S^{ \widehat{\Sigma}}).
\end{eqnarray}

\begin{theorem}
\label{dirac}
 Let $\widehat{\Omega}$ be a connected oriented  Riemannian $3-$ manifold with connected smooth boundary $ \widehat{\Sigma}$ and $X$ be a smooth vector field on $\widehat{\Omega}$. Assume that the scalar curvature $R^{\widehat{\Omega}}$ of $\widehat{\Omega}$ verifies $ R^{\widehat{\Omega}}\geq 2|X|^{2}-2\text{div}X$ and the mean curvature $H$ of $ \widehat{\Sigma}$ while viewed as an embedded surface in $\widehat{\Omega}$ verifies $ H-\langle X,\nu\rangle>0$. Then the inhomogeneous Dirac equation 
 \begin{eqnarray}
  \widehat{D}\psi=\Psi~\text{on}~\widehat{\Omega}~\\
  P_{\geq 0}\psi=P_{\geq 0}\alpha~\text{on}~ \widehat{\Sigma}
 \end{eqnarray}
 with $\Psi\in H^{s-1}(S^{\widehat{\Omega}})$ and $\alpha\in H^{s-\frac{1}{2}}(S^{ \widehat{\Sigma}}),s\geq 1$ has a unique solution $\psi\in H^{s}(S^{\widehat{\Omega}})\cap H^{s-\frac{1}{2}}(S^{ \widehat{\Sigma}})$. 
\end{theorem}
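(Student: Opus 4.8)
The plan is to establish, following the template of Proposition \ref{isom}, that with the APS boundary condition $\widehat{D}$ is an isomorphism
\[
\widehat{D}:\ H^{s}(S^{\widehat{\Omega}})\cap H^{s-\frac{1}{2}}(S^{\widehat{\Sigma}})\ \longrightarrow\ H^{s-1}(S^{\widehat{\Omega}})\cap H^{s-\frac{3}{2}}(S^{\widehat{\Sigma}}),\qquad s\geq 1.
\]
Since $P_{\geq 0}$ is a zero-order pseudodifferential projector satisfying the Lopatinski--Shapiro condition, $\widehat{D}$ with this boundary condition is Fredholm between the indicated spaces \cite{bar}, and by interior and boundary elliptic regularity its kernel and cokernel are finite-dimensional, consist of smooth sections, and are independent of $s$. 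Hence it suffices to show that the kernel and the cokernel both vanish; the solvability statement and the asserted regularity then follow from the Fredholm alternative together with the standard a priori estimates for the APS problem.

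For the kernel, suppose $\widehat{D}\psi=0$ on $\widehat{\Omega}$ and $P_{\geq 0}\psi=0$ on $\widehat{\Sigma}$, so that $\psi|_{\widehat{\Sigma}}=\sum_{\lambda_{i}<0}a_{i}\psi_{i}\in L^{2}_{-}(S^{\widehat{\Sigma}})$. I would feed this $\psi$ into the Lichnerowicz-type identity \ref{eq:lichnerowicz} of Proposition \ref{main}, add $\tfrac12\int_{\widehat{\Sigma}}\langle X,\nu\rangle|\psi|^{2}\mu_{\widehat{\sigma}}$ to both sides, integrate the bulk term by parts, and use $|\hnabla\psi|^{2}+\tfrac12 X(|\psi|^{2})\geq\tfrac12|\hnabla\psi|^{2}-\tfrac12|X|^{2}|\psi|^{2}$ exactly as in the proof of Theorem \ref{MIT1}, to obtain
\[
\int_{\widehat{\Sigma}}\Bigl(\langle D\psi,\psi\rangle-\tfrac12(H-\langle X,\nu\rangle)|\psi|^{2}\Bigr)\mu_{\widehat{\sigma}}\ \geq\ \tfrac14\int_{\widehat{\Omega}}\bigl(R^{\widehat{\Omega}}+2\,\text{div}X-2|X|^{2}\bigr)|\psi|^{2}\mu_{\widehat{\Omega}}+\tfrac12\int_{\widehat{\Omega}}|\hnabla\psi|^{2}\mu_{\widehat{\Omega}}\ \geq\ 0,
\]
the final inequality by $R^{\widehat{\Omega}}\geq 2|X|^{2}-2\,\text{div}X$. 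On the other hand, $\int_{\widehat{\Sigma}}\langle D\psi,\psi\rangle\mu_{\widehat{\sigma}}=\sum_{\lambda_{i}<0}\lambda_{i}|a_{i}|^{2}\leq 0$ and $H-\langle X,\nu\rangle>0$ pointwise, so the left-hand side is $\leq 0$. Hence all terms vanish: $\hnabla\psi=0$ on $\widehat{\Omega}$ and $\int_{\widehat{\Sigma}}(H-\langle X,\nu\rangle)|\psi|^{2}\mu_{\widehat{\sigma}}=0$; the latter forces $\psi|_{\widehat{\Sigma}}=0$, and then $\hnabla\psi=0$ makes $|\psi|$ constant, so $\psi\equiv 0$.

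For the cokernel I would use that $\widehat{D}$ is formally self-adjoint, so that by Green's formula $\int_{\widehat{\Omega}}\bigl(\langle\widehat{D}\psi,\phi\rangle-\langle\psi,\widehat{D}\phi\rangle\bigr)\mu_{\widehat{\Omega}}=-\int_{\widehat{\Sigma}}\langle\rho(\nu)\psi,\phi\rangle\mu_{\widehat{\sigma}}$ the cokernel is carried by the adjoint boundary value problem $\widehat{D}\phi=0$ on $\widehat{\Omega}$, $\phi|_{\widehat{\Sigma}}\perp\rho(\nu)\bigl(L^{2}_{-}(S^{\widehat{\Sigma}})\bigr)$ \cite{bar}. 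Since $\rho(\nu)$ anticommutes with $D$ on $\widehat{\Sigma}$, it maps the $\lambda$-eigenspace of $D$ onto the $(-\lambda)$-eigenspace, whence $\rho(\nu)\bigl(L^{2}_{-}(S^{\widehat{\Sigma}})\bigr)$ equals the span $L^{2}_{>0}(S^{\widehat{\Sigma}})$ of the strictly positive eigenspinors, with orthogonal complement $L^{2}_{\leq 0}(S^{\widehat{\Sigma}})$; thus the adjoint condition is $P_{>0}\phi=0$ on $\widehat{\Sigma}$. \emph{Getting this adjoint condition right is the crux of the argument}, and is precisely where the proof in \cite{montiel2022compact} fails: for a shifted threshold $\lambda\neq 0$ the adjoint condition $P_{>-\lambda}\phi=0$ does not imply $\int_{\widehat{\Sigma}}\langle D\phi,\phi\rangle\leq 0$, whereas at the self-adjointness threshold $\lambda=0$ one has $\phi|_{\widehat{\Sigma}}\in L^{2}_{\leq 0}$ and hence $\int_{\widehat{\Sigma}}\langle D\phi,\phi\rangle=\sum_{\lambda_{i}\leq 0}\lambda_{i}|a_{i}|^{2}\leq 0$, the zero eigenvalue contributing nothing. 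Running the identity of the previous paragraph verbatim with $\phi$ in place of $\psi$ then gives $\phi\equiv 0$. With both kernel and cokernel trivial, $\widehat{D}$ is the required isomorphism. The main obstacle, as just indicated, is controlling the adjoint boundary condition and verifying that the sign of $\langle D\,\cdot\,,\cdot\rangle$ is preserved on it --- which is why the APS threshold must be taken exactly at $\lambda=0$, and why, unlike in the MIT Bag case (cf. Proposition \ref{isom} and the remark following it), the strict inequality $H-\langle X,\nu\rangle>0$ is genuinely needed here.
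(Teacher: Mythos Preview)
Your argument is correct and follows essentially the same route as the paper's proof: both establish the Fredholm/isomorphism property by showing the kernel (with condition $P_{\geq 0}\psi=0$) and cokernel (with the adjoint condition $P_{>0}\phi=0$) vanish via the Lichnerowicz identity augmented by the $X$-term, the spectral sign $\int_{\widehat{\Sigma}}\langle D\psi,\psi\rangle\leq 0$, and the strict positivity of $H-\langle X,\nu\rangle$. Your write-up is in fact slightly more explicit than the paper's in deriving the adjoint boundary condition from Green's formula and the anticommutation $D\rho(\nu)=-\rho(\nu)D$, and your endgame (concluding $\hnabla\psi=0$ directly from the squeezed inequality rather than re-running Lichnerowicz after establishing $\psi|_{\widehat{\Sigma}}=0$) is a minor streamlining of the same idea.
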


\begin{proof}
We want to first prove that the kernel 
\begin{eqnarray}
 \text{ker}\widehat{D}:=\left\{\psi\in H^{s}(S^{\widehat{\Omega}})\cap H^{s-\frac{1}{2}}(S^{ \widehat{\Sigma}}) |\widehat{D}\psi=0~on ~\widehat{\Omega},~P_{\geq 0}\psi=0~on~ \widehat{\Sigma}\right\}   
\end{eqnarray}
is trivial. We first show that if $\psi\in \text{ker}\widehat{D}$, then $\psi=0$ on the boundary $ \widehat{\Sigma}$. First we denote the restriction of $\psi$ on $ \widehat{\Sigma}$ by $\psi$ as well. The following $L^{2}$ orthogonal splitting holds on $ \widehat{\Sigma}$
\begin{eqnarray}
 \psi=P_{\geq 0}\psi+P_{<0}\psi. 
\end{eqnarray}
First, we want to show $P_{<0}\psi=0$ then that would imply $\psi=0$. Use the identity \ref{eq:lichnerowicz} with $\psi$ to yield 
\begin{eqnarray}
\int_{ \widehat{\Sigma}}\left(\langle D\psi,\psi\rangle-\frac{1}{2}H|\psi|^{2}\right)\mu_{\widehat{\sigma}}= \frac{1}{4}\int_{\widehat{\Omega}}R^{\widehat{\Omega}}|\psi|^{2}\mu_{\widehat{\Omega}}+\int_{\widehat{\Omega}}|\hnabla\psi|^{2}\mu_{\widehat{\Omega}}.    
\end{eqnarray}
Now addition of $\frac{1}{2}\int_{ \widehat{\Sigma}}\langle X,\nu\rangle|\psi|^{2}\mu_{\widehat{\sigma}}$ to the both side along with the use of the integration by parts identity 
\begin{eqnarray}
 \int_{ \widehat{\Sigma}}\langle X,\nu\rangle|\psi|^{2}=\int_{\widehat{\Omega}}\left(\text{div}X|\psi|^{2}+X(|\psi|^{2})\right)\mu_{\widehat{\Omega}}    
\end{eqnarray}
yields 
\begin{eqnarray}
\int_{ \widehat{\Sigma}}\left(\langle D\psi,\psi\rangle-\frac{1}{2}(H-\langle X,\nu\rangle)|\psi|^{2}\right)\mu_{\widehat{\sigma}}\nonumber=\frac{1}{4}\int_{\widehat{\Omega}}(R^{\widehat{\Omega}}+2\text{div}X)|\psi|^{2}\mu_{\widehat{\Omega}}+\int_{\widehat{\Omega}}|(|\hnabla\psi|^{2}+\frac{1}{2}X(|\psi|^{2}))\mu_{\widehat{\sigma}}\\\nonumber 
\geq \frac{1}{4}\int_{\widehat{\Omega}}(R^{\widehat{\Omega}}+2\text{div}X-2|X|^{2})|\psi|^{2}\mu_{\widehat{\Omega}}+\frac{1}{2}\int_{\widehat{\Omega}}|\hnabla\psi|^{2}\mu_{\widehat{\Omega}}\geq 0
\end{eqnarray}
due to the fact that $|\hnabla\psi|^{2}+\frac{1}{2}X(|\psi|^{2})=|\hnabla\psi|^{2}+\frac{1}{2}(\langle \hnabla_{X}\psi,\psi\rangle+\langle\psi,\hnabla_{X}\psi\rangle)\geq \frac{1}{2}|\hnabla\psi|^{2}-\frac{1}{2}|X|^{2}|\psi|^{2}$. Therefore we have the following 
\begin{eqnarray}
 \int_{ \widehat{\Sigma}}\left(\langle D\psi,\psi\rangle-\frac{1}{2}(H-\langle X,\nu\rangle)|\psi|^{2}\right)\mu_{\widehat{\sigma}}  \geq 0. 
\end{eqnarray}
Now we write $D\psi$ in the eigenspinor decomposition i.e., $D\psi=-\sum_{i}\lambda_{i}\psi_{i},~\lambda_{i}\in \mathbb{R}^{+}$ since APS boundary condition $P_{\geq 0}\psi=0$ removes the non-negative part of the spectra (note that here we are denoting the eigenvalues by $-\lambda_{i}$).
But on the boundary $ \widehat{\Sigma}$, $P_{\geq 0}\psi=0$ which yields 
\begin{eqnarray}
-\left(\int_{ \widehat{\Sigma}} \sum_{\lambda_{i}>0}\lambda_{i}|\psi_{i}|^{2}+\frac{1}{2}(H-\langle X,\nu\rangle)|P_{< 0}\psi|^{2}\right)\mu_{\widehat{\sigma}}\geq 0   
\end{eqnarray}
which yields $P_{< 0}\psi=0$ since $H-\langle X,\nu\rangle>0$. Therefore $\psi=0$ on $ \widehat{\Sigma}$. Following the exact procedure as in the proof of the theorem \ref{MIT1}, we obtain $\psi=0$ on $\widehat{\Omega}$. Note that once we obtain $\psi=0$ on the boundary, the process to prove $\psi=0$ on $\widehat{\Omega}$ is independent of the boundary condition used.

Now we prove the triviality of cokernel of $\widehat{D}$ in the context of APS boundary condition. The cokernel of $\widehat{D}$ is defined to be 
\begin{eqnarray}
    \text{coker}\widehat{D}:=\left\{\psi\in H^{s}(S^{\widehat{\Omega}})\cap H^{s-\frac{1}{2}}(S^{ \widehat{\Sigma}}) |\widehat{D}\psi=0~on ~\widehat{\Omega},~P_{> 0}\psi=0~on~ \widehat{\Sigma}\right\}
\end{eqnarray}
Now if $\psi\in  \text{coker}\widehat{D}$, then through the exact same procedure as the kernel case expect the loss of equality in $P_{\geq 0}\psi=0$, we obtain 
\begin{eqnarray}
-\left(\int_{ \widehat{\Sigma}} \sum_{\lambda_{i}\geq 0}\lambda_{i}|\psi_{i}|^{2}+\frac{1}{2}(H-\langle X,\nu\rangle)|\psi_{\leq 0}|^{2}\right)\mu_{\widehat{\sigma}}\geq 0    
\end{eqnarray}
which yields $\psi_{\leq 0}=0$ since $(H-\langle X,\nu\rangle)>0$. This proves the desired isomorphism property of the Dirac operator. 
\end{proof}

\noindent Now consider a smooth vector field $X$ on $\widehat{\Omega}$ that verifies $R^{\widehat{\Omega}}\geq 2|X|^{2}-2\text{div}X$. The theorem \ref{dirac} yields the following inequality  
\begin{corollary}
\label{spininequality}
Let $(\widehat{\Omega},g)$ be a spacelike topological ball in the spacetime $(M,\widehat{g})$. Let $X$ be a vector field on $\widehat{\Omega}$ that verifies the following point-wise inequality 
\begin{eqnarray}
 R^{\widehat{\Omega}}\geq 2|X|^{2}-2\text{div}X,   
\end{eqnarray}
where $R^{\widehat{\Omega}}$ is the scalar curvature of $\widehat{\Omega}$. Let the boundary of $\widehat{\Omega}$ be $ \widehat{\Sigma}$ with mean curvature $H$ while viewed as an embedded surface in $\widehat{\Omega}$ verifying
\begin{eqnarray}
 H-\langle X,\nu\rangle>0.   
\end{eqnarray}
Then for a spinor $\psi\in H^{s}(S^{\widehat{\Omega}}),~s\geq 1$ verifying the Dirac equation $\widehat{D}\psi=0$ in the bulk $\widehat{\Omega}$ and APS boundary condition on $ \widehat{\Sigma}$ as in theorem \ref{dirac}, we have the following inequality 
\begin{eqnarray}
\label{eq:main2}
 \int_{ \widehat{\Sigma}}\langle D\psi,\psi\rangle \mu_{\widehat{\sigma}}\geq \frac{1}{2}\int_{ \widehat{\Sigma}}(H-\langle X,\nu\rangle)|\psi|^{2}\mu_{\widehat{\sigma}}.
\end{eqnarray}
The equality holds only if $X=0$, and $\psi$ is a non-trivial parallel spinor in $\widehat{\Omega}$. In such case 
\begin{eqnarray}
D\psi=\frac{1}{2}H\psi~\text{on}~ \widehat{\Sigma}    
\end{eqnarray}
and $P_{\geq 0}\psi=P_{\geq 0}\alpha=\alpha$~on $ \widehat{\Sigma}$. Moreover, if the following condition is verified by $\psi$ solving the Dirac equation with APS boundary condition
\begin{eqnarray}
|P_{\geq 0}\psi|\leq |\psi|,   
\end{eqnarray}
then 
\begin{eqnarray}
 \int_{ \widehat{\Sigma}}\left(\langle DP_{\geq 0}\psi,P_{\geq 0}\psi\rangle-\frac{1}{2}(H-\langle X,\nu\rangle)|P_{\geq 0}\psi|^{2}\right) \mu_{\widehat{\sigma}}\geq 0
\end{eqnarray}
The equality holds only if $X=0$, and $P_{\geq 0}\psi=\psi$ (i.e., $P_{<0}\psi=0$) is the restriction on $ \widehat{\Sigma}$ of a non-trivial parallel spinor in the interior $\widehat{\Omega}$. In such case 
\begin{eqnarray}
DP_{\geq 0}\psi=\frac{1}{2}HP_{\geq 0}\psi~\text{on}~ \widehat{\Sigma}.    
\end{eqnarray}   
\end{corollary}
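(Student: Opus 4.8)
The plan is to deduce the corollary as an essentially formal consequence of three ingredients already in place: Theorem \ref{dirac} (which, under the stated curvature and mean-curvature hypotheses, supplies a solution $\psi$ of $\widehat{D}\psi=0$ with the prescribed APS datum), the Bochner--Lichnerowicz identity \ref{eq:lichnerowicz} of Proposition \ref{main}, and the $L^{2}$-orthogonal spectral splitting $L^{2}(S^{\widehat{\Sigma}})=L^{2}_{+}(S^{\widehat{\Sigma}})\oplus L^{2}_{-}(S^{\widehat{\Sigma}})$ induced by the boundary Dirac operator $D$. The argument for the first inequality and its rigidity runs verbatim along the lines of Theorem \ref{MIT1}; only the ``moreover'' part, involving $P_{\geq 0}\psi$, genuinely uses the spectral decomposition.

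First I would establish the basic inequality. Plugging $\widehat{D}\psi=0$ into \ref{eq:lichnerowicz}, adding $\frac{1}{2}\int_{\widehat{\Sigma}}\langle X,\nu\rangle|\psi|^{2}\mu_{\widehat{\sigma}}$ to both sides, integrating by parts the identity $\int_{\widehat{\Sigma}}\langle X,\nu\rangle|\psi|^{2}=\int_{\widehat{\Omega}}(\text{div}\,X\,|\psi|^{2}+X(|\psi|^{2}))\mu_{\widehat{\Omega}}$, and using $|\hnabla\psi|^{2}+\frac{1}{2}X(|\psi|^{2})\ge\frac{1}{2}|\hnabla\psi|^{2}-\frac{1}{2}|X|^{2}|\psi|^{2}$ together with the hypothesis $R^{\widehat{\Omega}}\ge2|X|^{2}-2\,\text{div}\,X$, I obtain $\int_{\widehat{\Sigma}}\langle D\psi,\psi\rangle\mu_{\widehat{\sigma}}\ge\frac{1}{2}\int_{\widehat{\Sigma}}(H-\langle X,\nu\rangle)|\psi|^{2}\mu_{\widehat{\sigma}}$; this step uses only $\widehat{D}\psi=0$, not the particular boundary condition. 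For equality, the two bulk inequalities must be saturated, forcing $\hnabla\psi=0$ and $R^{\widehat{\Omega}}+2\,\text{div}\,X-2|X|^{2}=0$; a nontrivial parallel spinor makes $\widehat{\Omega}$ Ricci-flat, hence flat since $\dim\widehat{\Omega}=3$, so $R^{\widehat{\Omega}}=0$, and Proposition \ref{dirac1} with $\widehat{D}\psi=0=\hnabla_{\nu}\psi$ gives $D\psi=\frac{1}{2}H\psi$ on $\widehat{\Sigma}$; the boundary term then satisfies $\int_{\widehat{\Sigma}}\langle X,\nu\rangle|\psi|^{2}=0$, integration by parts with $\hnabla\psi=0$ and $\text{div}\,X=|X|^{2}$ gives $\int_{\widehat{\Omega}}|X|^{2}|\psi|^{2}=0$, and since $|\psi|$ is a nonzero constant, $X\equiv0$. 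In this rigid configuration the boundary restriction of the parallel spinor has trivial negative spectral part (this is precisely the equality analysis of the ``moreover'' part below), so $\psi|_{\widehat{\Sigma}}=P_{\geq 0}\psi$ and, choosing the datum $\alpha=\psi|_{\widehat{\Sigma}}$, one records $P_{\geq 0}\psi=P_{\geq 0}\alpha=\alpha$.

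For the refinement, write $P_{\geq 0}\psi$ and $P_{<0}\psi$ for the $L^{2}$-orthogonal parts of $\psi|_{\widehat{\Sigma}}$. Because $D$ preserves $L^{2}_{\pm}(S^{\widehat{\Sigma}})$ and the cross terms vanish, $\int_{\widehat{\Sigma}}\langle D\psi,\psi\rangle=\int_{\widehat{\Sigma}}\langle DP_{\geq 0}\psi,P_{\geq 0}\psi\rangle+\int_{\widehat{\Sigma}}\langle DP_{<0}\psi,P_{<0}\psi\rangle$, and the last term is $\le0$ since $D$ is negative on $L^{2}_{-}(S^{\widehat{\Sigma}})$; hence $\int_{\widehat{\Sigma}}\langle DP_{\geq 0}\psi,P_{\geq 0}\psi\rangle\ge\int_{\widehat{\Sigma}}\langle D\psi,\psi\rangle$. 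Combining this with the basic inequality, with the pointwise hypothesis $|P_{\geq 0}\psi|\le|\psi|$, and with the strict positivity $H-\langle X,\nu\rangle>0$,
\begin{align*}
\int_{\widehat{\Sigma}}\langle DP_{\geq 0}\psi,P_{\geq 0}\psi\rangle\mu_{\widehat{\sigma}}
&\ \ge\ \int_{\widehat{\Sigma}}\langle D\psi,\psi\rangle\mu_{\widehat{\sigma}}
\ \ge\ \frac{1}{2}\int_{\widehat{\Sigma}}(H-\langle X,\nu\rangle)|\psi|^{2}\mu_{\widehat{\sigma}}\\
&\ \ge\ \frac{1}{2}\int_{\widehat{\Sigma}}(H-\langle X,\nu\rangle)|P_{\geq 0}\psi|^{2}\mu_{\widehat{\sigma}},
\end{align*}
which is the asserted inequality. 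Equality forces each of the three links to be tight: tightness of the first gives $\int_{\widehat{\Sigma}}\langle DP_{<0}\psi,P_{<0}\psi\rangle=0$, and negative-definiteness of $D$ on $L^{2}_{-}(S^{\widehat{\Sigma}})$ then yields $P_{<0}\psi=0$, i.e.\ $\psi|_{\widehat{\Sigma}}=P_{\geq 0}\psi$; tightness of the second is equality in the basic inequality, so $X=0$ and $\psi$ is a nontrivial parallel spinor with $D\psi=\frac{1}{2}H\psi$; tightness of the third is then automatic. Consequently $DP_{\geq 0}\psi=D\psi=\frac{1}{2}H\psi=\frac{1}{2}HP_{\geq 0}\psi$ on $\widehat{\Sigma}$.

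I do not anticipate a genuine obstacle here: once Theorem \ref{dirac} and Proposition \ref{main} are available, the statement is bookkeeping. The one point demanding care is the equality discussion of the ``moreover'' part --- in particular keeping the pointwise bound $|P_{\geq 0}\psi|\le|\psi|$ distinct from the $L^{2}$ spectral decomposition, and using that $D$ commutes with the spectral projections $P_{\geq 0},P_{<0}$, so that vanishing of $\int_{\widehat{\Sigma}}\langle DP_{<0}\psi,P_{<0}\psi\rangle$ genuinely forces $P_{<0}\psi=0$ rather than merely a cancellation among modes.
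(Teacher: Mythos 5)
Your proposal is correct and follows essentially the same route as the paper: plug $\widehat{D}\psi=0$ into the Bochner identity of Proposition \ref{main}, add the $\langle X,\nu\rangle$ boundary term and absorb the bulk terms via the curvature hypothesis exactly as in Theorem \ref{MIT1}, then for the refinement use that $D$ commutes with the spectral projections, is strictly negative on $L^2_-(S^{\widehat{\Sigma}})$, and combine with the pointwise hypothesis $|P_{\geq 0}\psi|\le|\psi|$ and $H-\langle X,\nu\rangle>0$ in the same three-link chain the paper records in equations \ref{eq:hypo}--\ref{eq:hypo1}. One small remark: your justification of the claim $P_{\geq 0}\alpha=\alpha$ in the \emph{first} equality statement refers forward to the ``moreover'' analysis, which however assumes the additional hypothesis $|P_{\geq 0}\psi|\le|\psi|$; the paper leaves this same point implicit, so your proposal does not diverge from it, but it is worth noting that that particular assertion is not fully supported by the written argument in either version.
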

\begin{proof}
The proof relies on the proposition \ref{main} and theorem \ref{dirac}. First, the boundary value problem 
\begin{eqnarray}
\widehat{D}\psi=0~\text{on}~\widehat{\Omega}\\
P_{\geq 0}\psi=P_{\geq 0}\alpha~\text{on}~ \widehat{\Sigma}.
\end{eqnarray}
has a unique solution $\psi\in H^{s}(S^{\widehat{\Omega}})\cap H^{s-\frac{1}{2}}(S^{ \widehat{\Sigma}})$ given $\alpha\in H^{s-\frac{1}{2}}(S^{ \widehat{\Sigma}})$.  
Now we set $\widehat{D}\psi=0$ in the identity \ref{eq:lichnerowicz} and obtain 
\begin{eqnarray}
\label{eq:diracsolved}
   \int_{ \widehat{\Sigma}}\left(\langle D\psi,\psi\rangle-\frac{1}{2}H|\psi|^{2}\right)\mu_{\widehat{\sigma}}= \frac{1}{4}\int_{\widehat{\Omega}}R^{\widehat{\Omega}}|\psi|^{2}\mu_{\widehat{\Omega}}+\int_{\widehat{\Omega}}|\hnabla\psi|^{2}\mu_{\widehat{\Omega}}.  
\end{eqnarray}
From this expression proof of the inequality \ref{eq:main2} follows in an exactly similar way as the proof of \ref{MIT1}. Therefore, we avoid repeating the same calculations. For the second part,
consider the inequality 
\begin{eqnarray}
 \int_{ \widehat{\Sigma}}\langle D\psi,\psi\rangle \mu_{\widehat{\sigma}}\geq \frac{1}{2}\int_{ \widehat{\Sigma}}(H-\langle X,\nu\rangle)|\psi|^{2}\mu_{\widehat{\sigma}}  
\end{eqnarray}
and use $[D,P_{\geq 0}]=0=[D,P_{<0}]$ along with $\int_{ \widehat{\Sigma}}\langle DP_{\geq 0}\psi,P_{<0}\psi\rangle\mu_{\widehat{\sigma}}=0=\int_{ \widehat{\Sigma}}\langle DP_{< 0}\psi,P_{\geq 0}\psi\rangle\mu_{\widehat{\sigma}}$ to yield 
\begin{eqnarray}
\label{eq:hypo}
\int_{ \widehat{\Sigma}}\langle DP_{\geq 0}\psi,P_{\geq 0}\psi\rangle \mu_{\widehat{\sigma}}+\int_{ \widehat{\Sigma}}\langle DP_{< 0}\psi,P_{< 0}\psi\rangle \mu_{\widehat{\sigma}}\geq  
\frac{1}{2}\int_{ \widehat{\Sigma}}(H-\langle X,\nu\rangle)|\psi|^{2}\mu_{\widehat{\sigma}}.    
\end{eqnarray}
Now note that $P_{<0}\psi=\sum_{i}a_{i}\psi^{-}_{i}$, where $D\psi^{-}_{i}=\lambda_{i}\psi^{-}_{i},~\lambda_{i}<0$ and $\psi^{-}_{i}$ are $L^{2}$ normalized to have have norm $1$. Therefore 
\begin{eqnarray}
\label{eq:negative}
\int_{ \widehat{\Sigma}}\langle DP_{<0}\psi,P_{<0}\psi\rangle\mu_{\widehat{\sigma}}=\sum_{i}\lambda_{i}a^{2}_{i}<0.    
\end{eqnarray}
Thus 
\begin{eqnarray}
\int_{ \widehat{\Sigma}}\langle DP_{\geq 0}\psi,P_{\geq 0}\psi\rangle \geq \frac{1}{2}\int_{ \widehat{\Sigma}}(H-\langle X,\nu\rangle)|\psi|^{2}\mu_{\widehat{\sigma}}\geq \int_{ \widehat{\Sigma}}(H-\langle X,\nu\rangle|P_{\geq 0}\psi|^{2}\mu_{\widehat{\sigma}}    
\end{eqnarray}
where the last line follows from the hypothesis $|P_{\geq 0}\psi|\leq |\psi|$. Therefore, since $H-\langle X,\nu\rangle>0$
\begin{eqnarray}
\int_{ \widehat{\Sigma}}\langle DP_{\geq 0}\psi,P_{\geq 0}\psi\rangle \geq \frac{1}{2}\int_{ \widehat{\Sigma}}(H-\langle X,\nu\rangle|P_{\geq 0}\psi|^{2}\mu_{\widehat{\sigma}}.     
\end{eqnarray}
For the equality case, consider the inequality \ref{eq:hypo}
\begin{eqnarray}
\int_{ \widehat{\Sigma}}\langle DP_{\geq 0}\psi,P_{\geq 0}\psi\rangle \mu_{\widehat{\sigma}}+\int_{ \widehat{\Sigma}}\langle DP_{< 0}\psi,P_{< 0}\psi\rangle \mu_{\widehat{\sigma}}\geq  
\frac{1}{2}\int_{ \widehat{\Sigma}}(H-\langle X,\nu\rangle)|\psi|^{2}\mu_{\widehat{\sigma}}\geq\\\nonumber  \frac{1}{2}\int_{ \widehat{\Sigma}}(H-\langle X,\nu\rangle|P_{\geq 0}\psi|^{2}\mu_{\widehat{\sigma}}.     
\end{eqnarray}
or 
using \ref{eq:negative} 
\begin{eqnarray}
\label{eq:hypo1}
\int_{ \widehat{\Sigma}}\langle DP_{\geq 0}\psi,P_{\geq 0}\psi\rangle \mu_{\widehat{\sigma}} +\sum_{i}\lambda_{i}a^{2}_{i}\geq   \frac{1}{2}\int_{ \widehat{\Sigma}}(H-\langle X,\nu\rangle|P_{\geq 0}\psi|^{2}\mu_{\widehat{\sigma}}.  
\end{eqnarray}
Now when 
\begin{eqnarray}
\int_{ \widehat{\Sigma}}\langle DP_{\geq 0}\psi,P_{\geq 0}\psi\rangle = \int_{ \widehat{\Sigma}}(H-\langle X,\nu\rangle|P_{\geq 0}\psi|^{2}\mu_{\widehat{\sigma}},
\end{eqnarray}
\ref{eq:hypo1} implies
\begin{eqnarray}
\sum_{i}\lambda_{i}a^{2}_{i}\geq 0    
\end{eqnarray}
which is impossible unless $a_{i}=0~\forall i$ by \ref{eq:negative}. Therefore $P_{<0}\psi=0$. Moreover, the exact similar analysis with the Bochner formula yields 
\begin{eqnarray}
 \int_{ \widehat{\Sigma}}\left(\langle DP_{\geq 0}\psi,P_{\geq 0}\psi\rangle-\frac{1}{2}(H-\langle X,\nu\rangle)|P_{\geq 0}\psi|^{2}\right)\mu_{\widehat{\sigma}}\\\nonumber\geq \frac{1}{4}\int_{\widehat{\Omega}}(R^{\widehat{\Omega}}+2\text{div}X-2|X|^{2})|\psi|^{2}\mu_{\widehat{\Omega}}+\frac{1}{2}\int_{\widehat{\Omega}} |\widehat{\nabla}\psi|^{2}\mu_{\widehat{\Omega}}\\\nonumber 
\geq \frac{1}{4}\int_{\widehat{\Omega}}(R^{\widehat{\Omega}}+2\text{div}X-2|X|^{2})|\psi|^{2}\mu_{\widehat{\Omega}}\geq 0   
\end{eqnarray}
and therefore the equality case implies $\psi$ is parallel in $\widehat{\Omega}$ and $\widehat{\Omega}$ is Ricci flat (this is exactly similar to that of the proof of equality case in theorem 1\ref{MIT1}). Application of the proposition \ref{dirac1} to $\psi$ with $P_{<0}\psi=0$ on the boundary $ \widehat{\Sigma}$ implies 
\begin{eqnarray}
DP_{\geq 0}\psi=\frac{1}{2}HP_{\geq 0}\psi~\text{on}~ \widehat{\Sigma}.   
\end{eqnarray}
\end{proof}

\section{Quasi-local mass and its positivity}
We describe the notion of quasi-local mass described by Wang and Yau \cite{yau}.
 The details can be found in the original article of Wang $\&$ Yau \cite{yau}. Here we state the necessary ideas that lead to the construction of the Wang-Yau quasi-local mass. The aim is to define the gravitational energy contained in a spacelike region bounded by a topological $2-$sphere in a $3+1$ dimensional physical spacetime $(M,\widehat{g})$ verifying dominant energy condition. We assume that the spacetime $M$ is globally hyperbolic, that is, it can be expressed as the product $\widetilde{M}\times \mathbb{R}$ where $\widetilde{M}$ is a Cauchy hypersurface. We write the spacetime metric $\widehat{g}$ in the following form 
\begin{eqnarray}
\widehat{g}:=-\beta^{2}dt\otimes dt+g_{ij}(dx^{i}+\gamma^{i}dt)\otimes (dx^{j}+\gamma^{j}dt),   
\end{eqnarray}
in a local chart $\{t,x^{i}\}_{i=1}^{3}$. Here $\beta$ and $\gamma:=\gamma^{i}\frac{\partial}{\partial x^{i}}$
are the lapse function and the shift vector field, respectively. Here we will be interested in Einstein's constraint equations primarily. Note that the Gauss and Codazzi equations for the spacetime $M$ foliated by $\widetilde{M}$ yield the constraint equations on each $t=\text{constant}$ slice
\begin{eqnarray}
 R(g)-P_{ij}P^{ij}+(\text{tr}_{g}P)^{2}=2\mu\\
 \widehat{\nabla}^{j}P_{ij}-\nabla_{i}\text{tr}_{g}P=J_{i},
\end{eqnarray}
where $\mu$ and $J:=J_{i}dx^{i}$ are the local energy and momentum density of the matter (or radiation) sources present in the spacetime. The dominant energy condition states $\mu\geq \sqrt{g(J,J)}$. $P_{ij}$ is the extrinsic curvature or the second fundamental form of $\widetilde{M}$ in $M=\widetilde{M}\times \mathbb{R}$. Once we are equipped with this information about the spacetime $(M,\widehat{g})$ (which is assumed to be Einsteinian spacetime i.e., the spacetime metric verifies Einstein's constraint and evolution equations), we can define the energy contained within a membrane $ \widehat{\Sigma}\subset M$ i.e., a spacelike topological $2-$sphere. The Wang-Yau quasi-local mass precisely quantifies this energy through a Hamilton-Jacobi analysis (see \cite{yau,yau1} for the detail on how such energy expression is obtained). 
\subsection{Definition of Wang-Yau Quasi-local Energy}
\label{reduction}
Wang-Yau quasi-local energy is the most consistent notion of quasi-local energy to date. It is defined for a topological $2-$sphere $\Sigma$ bounding a spacelike topological ball $\Omega$ in the spacetime $(M,\widehat{g})$. Energy content in gravity is essentially a measure of the non-triviality of the Riemann curvature. Therefore a comparison with a background spacetime (e.g., Minkowski spacetime) is required to define any sensible notion of energy. The main idea behind defining a quasi-local energy associated with a membrane $\Sigma$ is to compare the extrinsic geometries of the membrane $\Sigma$ in the physical spacetime $(M,\widehat{g})$ and a reference spacetime (e.g., Minkowski spacetime in the current context). Here we provide a concise description of the Wang-Yau quasi-local energy. For exact detail, one is referred to \cite{yau}.
First, we recall the following definition of a generalized mean curvature as given in \cite{yau}. 
\begin{definition}[Wang-Yau] \cite{yau}
\label{def1} Let $(M,\widehat{g})$ be a connected time oriented $C^{\infty}$ Lorentzian manifold verifying dominant energy condition \ref{eq:dominant}. Suppose $i: \Sigma\hookrightarrow M$ is an embedded space-like two-surface with induced metric $\sigma$. Given a
smooth function $\tau$ on $\Sigma$ and a space-like normal $e_{3}$, the generalized mean curvature $h(\Sigma,i,\tau,e_{3})$ associated with these data is defined to be
\begin{eqnarray}
 h(\Sigma,i,\tau,e_{3}):=-\sqrt{1+|\nabla\tau|^{2}}\langle \mathbf{H},e_{3}\rangle-\alpha_{e_{3}}(\nabla\tau),   
\end{eqnarray}
where $\mathbf{H}$ is the mean curvature vector of $\Sigma$ in $(M,\widehat{g})$ and and $\alpha_{e_{3}}$ is the connexion 1-form (see
Definition 1.1) of the normal bundle of $\Sigma$ in $M$ determined by $e_{3}$ and the future-directed time-like unit normal $e_{4}$ orthogonal to $e_{3}$.
\end{definition}

\noindent Now we consider the isometric embedding of $( \Sigma,\sigma)$ into the Minkowski space $(\mathbb{R}^{1,3},\eta)$. Let us denote the standard Lorentzian inner product on $\mathbb{R}^{1,3}$ by $\langle\cdot,\cdot\rangle$ and  let $T_{0}$ be a constant timeline unit vector in $\mathbb{R}^{1,3}$. First, we need to address the existence of such an embedding.

\begin{theorem}[Wang-Yau] \cite{yau}
\label{embedding}
    Let $\chi$ be a function on $(\Sigma, \sigma)$ with $\int_{\Sigma}\chi\mu_{\sigma}=0$. Let $\tau$ be a potential function of $\chi$ i.e., $\Delta_{ \widehat{\Sigma}}\tau=\chi$. Suppose the Gaussian curvature $K$ of $(\Sigma,\sigma)$ verifies the inequality $K+(1+|\nabla\tau|^{2}_{ \sigma})^{-1}\det(\nabla^{2}\tau)>0$,    
    then there exists a space-like isometric embedding $i_{0}: \Sigma\to \mathbb{R}^{1,3}$ such that the mean curvature $\textbf{H}_{0}$ of the embedded surface $(i_{0}(\Sigma), \sigma)$ verifies $\langle \textbf{H}_{0},T_{0}\rangle=-\chi$.
\end{theorem}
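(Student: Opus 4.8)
The plan is to reduce the Lorentzian isometric embedding problem into $\mathbb{R}^{1,3}$ with prescribed time function $\tau$ to the classical Weyl isometric embedding problem into Euclidean $\mathbb{R}^{3}$. Fix $T_{0}$ as in the statement and identify the orthogonal complement $T_{0}^{\perp}\subset\mathbb{R}^{1,3}$ with Euclidean $\mathbb{R}^{3}$. The first step is to introduce on $\Sigma$ the metric $\widehat{\sigma}:=\sigma+d\tau\otimes d\tau$, which is the metric induced on the graph of $\tau$ inside the Riemannian product $(\Sigma\times\mathbb{R},\sigma+dt\otimes dt)$. Note $\widehat{\sigma}$ is automatically a genuine (positive definite) metric. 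Computing its Gauss curvature $\widehat{K}$ directly, or via the Gauss equation for the graph viewed as a hypersurface of the product, yields the identity $\widehat{K}=(1+|\nabla\tau|^{2}_{\sigma})^{-1}\big(K+(1+|\nabla\tau|^{2}_{\sigma})^{-1}\det(\nabla^{2}\tau)\big)$, where $\nabla$, $\nabla^{2}$, $\det$, and $K$ all refer to $(\Sigma,\sigma)$. Since the conversion factor $(1+|\nabla\tau|^{2}_{\sigma})^{-1}$ is strictly positive, the hypothesis $K+(1+|\nabla\tau|^{2}_{\sigma})^{-1}\det(\nabla^{2}\tau)>0$ is exactly equivalent to $\widehat{K}>0$.

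Second, since $\Sigma$ is a topological $2$-sphere and $(\Sigma,\widehat{\sigma})$ carries a smooth metric of positive Gauss curvature, the Nirenberg--Pogorelov (Weyl) embedding theorem produces an isometric embedding $\widehat{X}\colon(\Sigma,\widehat{\sigma})\hookrightarrow\mathbb{R}^{3}=T_{0}^{\perp}$, unique up to a rigid motion of $\mathbb{R}^{3}$, realizing $\Sigma$ as a convex surface. Third, define $X\colon\Sigma\to\mathbb{R}^{1,3}$ by $X:=\widehat{X}+\tau\,T_{0}$; the additive ambiguity in $\tau$ (the potential of $\chi$, which exists because $\int_{\Sigma}\chi\,\mu_{\sigma}=0$) merely translates $X$ along $T_{0}$. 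A short computation shows $X$ is an isometric embedding of $(\Sigma,\sigma)$: for tangent vectors $v,w$ one has $\langle dX(v),dX(w)\rangle_{\eta}=\widehat{\sigma}(v,w)+\langle T_{0},T_{0}\rangle_{\eta}\,v(\tau)w(\tau)=\widehat{\sigma}(v,w)-v(\tau)w(\tau)=\sigma(v,w)$, the cross terms vanishing because $d\widehat{X}(v)\in T_{0}^{\perp}$. In particular the induced metric is the positive definite $\sigma$, so $X(\Sigma)$ is spacelike; and $\langle X,T_{0}\rangle_{\eta}=\langle\widehat{X},T_{0}\rangle+\tau\langle T_{0},T_{0}\rangle=-\tau$, so $\tau$ is precisely the time function of $X$.

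Finally, to evaluate $\langle\mathbf{H}_{0},T_{0}\rangle$ I would use that $T_{0}$ is parallel for the flat connection $\nabla[\eta]$ on $\mathbb{R}^{1,3}$. For a $\sigma$-orthonormal frame $\{e_{a}\}_{a=1}^{2}$ on $\Sigma$, the function $-\tau=\langle X,T_{0}\rangle$ satisfies $e_{a}(-\tau)=\langle dX(e_{a}),T_{0}\rangle$, hence $e_{a}e_{a}(-\tau)=\langle\nabla[\eta]_{e_{a}}dX(e_{a}),T_{0}\rangle$ while $(\nabla_{e_{a}}e_{a})(-\tau)=\langle(\nabla[\eta]_{e_{a}}dX(e_{a}))^{\top},T_{0}\rangle$, with $\nabla$ the Levi-Civita connection of $\sigma$. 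Subtracting and summing over $a$ gives $\Delta_{\Sigma}(-\tau)=\sum_{a}\langle(\nabla[\eta]_{e_{a}}dX(e_{a}))^{\perp},T_{0}\rangle=\langle\mathbf{H}_{0},T_{0}\rangle$, since $\mathbf{H}_{0}=\sum_{a}(\nabla[\eta]_{e_{a}}dX(e_{a}))^{\perp}$ is the mean curvature vector of $X(\Sigma)$. As $\Delta_{\Sigma}\tau=\chi$ by hypothesis, we conclude $\langle\mathbf{H}_{0},T_{0}\rangle=-\chi$, as required.

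The only genuinely nontrivial external input is the Weyl embedding theorem invoked in the second step; the main computational point is the Gauss curvature identity of the first step, which reorganizes the stated hypothesis into the positivity of $\widehat{K}$. The remaining verifications (isometry, spacelikeness, and the mean curvature identity) are elementary bookkeeping with the flat structure of $\mathbb{R}^{1,3}$, so I do not anticipate a substantial obstacle there.
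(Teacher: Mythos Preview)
Your proposal is correct and follows essentially the same route as the paper: form the auxiliary metric $\widehat{\sigma}=\sigma+d\tau\otimes d\tau$, check that the hypothesis is precisely $\widehat{K}>0$, apply the Nirenberg--Pogorelov (Weyl) theorem to embed $(\Sigma,\widehat{\sigma})$ into $\mathbb{R}^{3}=T_{0}^{\perp}$, and then set $i_{0}=\widehat{X}+\tau T_{0}$. You in fact supply more detail than the paper does, in particular the explicit verification that $\langle\mathbf{H}_{0},T_{0}\rangle=-\Delta_{\Sigma}\tau=-\chi$ via the parallelism of $T_{0}$.
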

The main idea behind the proof of this theorem is to consider the projection of $\widehat{i}_{0}: \Sigma\to \mathbb{R}^{3}$ onto the orthogonal complement of $T_{0}$ i.e., $\widehat{i}_{0}:=i_{0}-\tau T_{0}$. Let us denote the projected surface by $ \widehat{\Sigma}$. The induced metric on $ \widehat{\Sigma}$ is given by $\sigma:=\sigma+d\tau\otimes d\tau$ and the Gauss curvature $\widehat{K}$ as $(1+|\nabla\tau|^{2}_{\sigma})^{-1}(K+(1+|\nabla\tau|^{2}_{ \sigma})^{-1}\det(\nabla^{2}\tau))$ which is positive by the assumption of the theorem \ref{embedding}. Therefore, the existence of $\widehat{i}_{0}$ follows by the classic theorem of Nirenberg and Pogolerov \cite{pogorelov1952regularity}. Now existence of $i_{0}$ can be established as follows. One starts with a metric $ \sigma$ and a function $\chi$ and solves for $\Delta_{ \sigma}\tau=\lambda$ (unique solution exists since $ \Sigma$ is compact, the kernel of $\Delta_{ \sigma}$ is trivial modulo constants). Now one constructs the new metric $ \widehat{\sigma}= \sigma+d\tau\otimes d\tau$ and finds that its Gauss curvature $\widehat{K}$ is again $(1+|\nabla\tau|^{2}_{ \sigma})^{-1}(K+(1+|\nabla\tau|^{2}_{ \sigma})^{-1}\det(\nabla^{2}\tau))$ which is strictly positive by the hypothesis. Therefore the embedding into $\mathbb{R}^{3}$ is established. The desired embedding $i_{0}$ is found as $i_{0}=\widehat{i}_{0}+\tau T_{0}$. 

\noindent Now the generalized mean curvature of the surface $(i_{0}( \Sigma), \sigma)$ in the Minkowski space is defined in a similar fashion
\begin{eqnarray}
h( \Sigma,i_{0},\tau,(e_{3})_{0}):=  -\sqrt{1+|\nabla\tau|^{2}}\langle \mathbf{H}_{0},(e_{3})_{0}\rangle-\widehat{\alpha}_{(e_{3})_{0}}(\nabla\tau)  
\end{eqnarray}
where $\widehat{\alpha}$ is connection of the normal bundle of $i_{0}(\Sigma)$ induced by the flat connection $\nabla^{\mathbb{R}^{1,3}}$ of the Minkowski space,  $(e_{3})_{0}$ is the spacelike unit normal vector to $(i_{0}(\Sigma), \sigma)$, and $\mathbf{H}_{0}$ is the mean curvature of $i_{0}( \Sigma)$ as an embedded spacelike surface in the Minkowski space.

The Wang-Yau quasi-local energy is defined to be the following 
\begin{eqnarray}
8\pi M^{WY}:=\inf_{\tau}\int_{ \widehat{\Sigma}}\left(h( \widehat{\Sigma},i_{0},\tau,(e_{3})_{0})-h( \widehat{\Sigma},i,\tau,e_{3})\right)\mu_{\widehat{\sigma}}.   
\end{eqnarray}

\noindent Strictly speaking, in the definition of $h( \widehat{\Sigma},i,\tau,e_{3})$, there is still redundancy left due to the boost transformation of the normal bundle of $ \widehat{\Sigma}$ in the physical spacetime $(M,\widehat{g})$. In fact, \cite{yau} minimizes $h( \widehat{\Sigma},i,\tau,e_{3})$ over the boost-angle ($h( \widehat{\Sigma},i,\tau,e_{3})$ is a convex function of the boost angle as described in the introduction). Here, we aim to prove the positivity of the entity $h( \widehat{\Sigma},i_{0},\tau,(e_{3})_{0})-h( \widehat{\Sigma},i,\tau,e_{3})$ for any $\tau$ verifying admissibility condition \ref{admi}. From now on we define the new entity the reduced Wang-Yau energy 
\begin{eqnarray}
 8\pi M^{WY}_{\tau}:=\int_{ \widehat{\Sigma}}\left(h( \widehat{\Sigma},i_{0},\tau,(e_{3})_{0})-h( \widehat{\Sigma},i,\tau,e_{3})\right)\mu_{\widehat{\sigma}}\\
 =\int_{ \widehat{\Sigma}}\left(-\sqrt{1+|\nabla\tau|^{2}}\langle \mathbf{H}_{0},(e_{3})_{0}\rangle-\widehat{\alpha}_{(e_{3})_{0}}(\nabla\tau) \right)\mu_{\widehat{\sigma}}\\\nonumber -\int_{ \widehat{\Sigma}}\left(-\sqrt{1+|\nabla\tau|^{2}}\langle \mathbf{H},e_{3}\rangle-\alpha_{e_{3}}(\nabla\tau)\right)\mu_{\widehat{\sigma}}.
\end{eqnarray}
If we can prove the non-negativity of $M^{WY}_{\tau}$ for every admissible $\tau$ (as in definition \ref{admi}), then that would imply $M^{WY}\geq 0$.

Our goal is to apply the spin inequalities where essentially a comparison is made between the mean curvature of the surface $ \widehat{\Sigma}$ embedded in the physical spacetime and with its isometrically embedded image in the flat space $\mathbb{R}^{3}$. However, the expression of the Wang-Yau energy is not of this particular form. Therefore, an application of the inequalities proven in theorem \ref{MIT1} in the context of MIT Bag boundary condition or in corollary \ref{spininequality} in the context of non-local APS boundary condition does not immediately help. Therefore, we need to reduce the Wang-Yau quasi-local energy to a form where such an application becomes possible.

The most important observation to this end is that the Minkowski generalized mean curvature $\int_{ \widehat{\Sigma}}h( \widehat{\Sigma},i_{0},\tau,(e_{3})_{0})\mu_{\widehat{\sigma}}$ can be re-written as the integral of the mean curvature of the projection $ \widehat{\Sigma}$ of $ \widehat{\Sigma}$ onto the compliment of $T_{0}$ i.e., onto $\mathbb{R}^{3}$. Denote the spacelike topological ball that $ \widehat{\Sigma}$ bounds by $\widehat{\Omega}$. First, consider an embedding $i_{0}: \widehat{\Sigma}\hookrightarrow \mathbb{R}^{1,3}$ is an embedding and $\tau:=\langle i_{0},T_{0}\rangle$ is the restriction of the time function associated with the time vector field $T_{0}$. The outward unit normal vector $\widehat{\nu}$ of $ \widehat{\Sigma}$ in $\mathbb{R}^{3}$ and $T_{0}$ form a basis of the normal bundle of $ \widehat{\Sigma}$ in $\mathbb{R}^{1,3}$. One can parallel transport $\widehat{\nu}$ along the integral curves of $T_{0}$ to make it a vector field in $\mathbb{R}^{1,3}$ and denote this vector field as $\widehat{e}_{3}$. Let the mean curvature of $ \widehat{\Sigma}$ in $\mathbb{R}^{3}$ be $\widehat{k}_{0}$. Then the following proposition holds
\begin{proposition}
\label{mean1}
Let $i_{0}: \widehat{\Sigma}\hookrightarrow \mathbb{R}^{1,3}$ be the isometric embedding of $( \widehat{\Sigma}, \widehat{\Sigma})$ into the Minkowski space. Let $T_{0}$ be the canonical unit time-like vector field in $\mathbb{R}^{1,3}$. Consider the projection $ \widehat{\Sigma}$ of $i_{0}( \widehat{\Sigma})$ onto the complement of $T_{0}$ i.e., onto $T_{0}$ orthogonal hypersurface $\mathbb{R}^{3}$. Let $\tau:=\langle i_{0},T_{0}\rangle$ is the restriction of the time function associated with the vector field $T_{0}$ onto $ \widehat{\Sigma}$ and assume that it verifies the admissibility criteria \ref{admi}. Let the mean curvature of $ \widehat{\Sigma}$ in $\mathbb{R}^{3}$ be $\widehat{k}_{0}$. Then, $\widehat{k}_{0}$ verifies the following integrated identity  
 \begin{eqnarray}
\int_{ \widehat{\Sigma}}\widehat{k}_{0}\mu_{\widehat{\sigma}}= \int_{ \widehat{\Sigma}}\left(-\langle k_{0},\widehat{e}_{3}\rangle\sqrt{1+|\nabla\tau|^{2}}-\alpha_{\widehat{e}_{3}} (\nabla\tau) \right)\mu_{\widehat{\sigma}}.  
 \end{eqnarray}
 Here the induced metric on $ \widehat{\Sigma}$ is $ \widehat{\Sigma}+d\tau\otimes d\tau$ and $ \widehat{\Sigma}$ is the metric on $ \widehat{\Sigma}$.
\end{proposition}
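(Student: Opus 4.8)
\emph{Proof sketch.} The plan is to prove the identity pointwise and then integrate. Under the identification of $\widehat{\Sigma}$ with $\Sigma$ furnished by the projection $\widehat{i}_{0}$ onto $T_{0}^{\perp}\cong\mathbb{R}^{3}$, I will establish
\[
\widehat{k}_{0}\,\mu_{\widehat{\sigma}}=\Big(-\sqrt{1+|\nabla\tau|^{2}_{\sigma}}\,\langle\mathbf{H}_{0},\widehat{e}_{3}\rangle-\alpha_{\widehat{e}_{3}}(\nabla\tau)\Big)\mu_{\sigma},
\]
after which the statement follows by integrating over $\Sigma$ and recording that $\mu_{\widehat{\sigma}}=\sqrt{1+|\nabla\tau|^{2}_{\sigma}}\,\mu_{\sigma}$ (so on the right $\mu_{\sigma}=(1+|\nabla\tau|^{2}_{\sigma})^{-1/2}\mu_{\widehat{\sigma}}$); this is precisely the reduction \ref{eq:red1}. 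That $\widehat{k}_{0}$ is well defined is guaranteed by the admissibility of $\tau$ (Theorem \ref{embedding}): the projected metric is exactly $\widehat{\sigma}=\sigma+d\tau\otimes d\tau$, which is Riemannian, so $\widehat{X}(\Sigma)$ is an immersed hypersurface of $\mathbb{R}^{3}$.

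\emph{Set-up and the projected data.} Choose local coordinates $\{u^{a}\}$ on $\Sigma$, write $X:=i_{0}$, $X_{a}:=\partial_{a}X$, $X_{ab}:=\partial_{a}\partial_{b}X$, $\tau_{a}:=\partial_{a}\tau$; differentiating $\tau=\langle X,T_{0}\rangle$ gives $\langle X_{a},T_{0}\rangle=\tau_{a}$. Writing the projection as $\widehat{X}=X+\tau T_{0}$ (recall $\langle T_{0},T_{0}\rangle=-1$), a one-line computation gives $\widehat{\sigma}_{ab}=\langle\widehat{X}_{a},\widehat{X}_{b}\rangle=\sigma_{ab}+\tau_{a}\tau_{b}$, whence $\mu_{\widehat{\sigma}}=\sqrt{1+|\nabla\tau|^{2}_{\sigma}}\,\mu_{\sigma}$ and, by the Sherman--Morrison formula, $\widehat{\sigma}^{ab}=\sigma^{ab}-(1+|\nabla\tau|^{2}_{\sigma})^{-1}\nabla^{a}\tau\,\nabla^{b}\tau$. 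The unit normal $\widehat{\nu}$ of $\widehat{X}(\Sigma)$ in $\mathbb{R}^{3}$ satisfies $\langle\widehat{\nu},T_{0}\rangle=0$ and $\langle\widehat{\nu},\widehat{X}_{a}\rangle=0$, hence also $\langle\widehat{\nu},X_{a}\rangle=\langle\widehat{\nu},\widehat{X}_{a}-\tau_{a}T_{0}\rangle=0$; therefore $\widehat{\nu}$ is already a unit spacelike normal of $i_{0}(\Sigma)$ in $\mathbb{R}^{1,3}$, and it coincides with $\widehat{e}_{3}$ (parallel transport along the constant field $T_{0}$ leaves it unchanged). Setting $\widehat{e}_{4}:=(1+|\nabla\tau|^{2}_{\sigma})^{-1/2}(T_{0}-\nabla\tau)$ one checks $\langle\widehat{e}_{4},X_{a}\rangle=0$, $\langle\widehat{e}_{4},\widehat{e}_{3}\rangle=0$, $\langle\widehat{e}_{4},\widehat{e}_{4}\rangle=-1$ and $\langle\widehat{e}_{4},T_{0}\rangle<0$, so this is the future timelike normal entering the definition of $\alpha_{\widehat{e}_{3}}$.

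\emph{The computation.} Since $T_{0}$ is parallel, $\widehat{X}_{ab}=X_{ab}+\tau_{ab}T_{0}$ and $\langle T_{0},\widehat{\nu}\rangle=0$ give $\widehat{k}_{0}=-\widehat{\sigma}^{ab}\langle\widehat{X}_{ab},\widehat{\nu}\rangle=-\widehat{\sigma}^{ab}\langle X_{ab},\widehat{\nu}\rangle$ (with the sign convention for which $\widehat{k}_{0}$ is the mean curvature with respect to the outward normal). Decomposing $X_{ab}$ into its tangential and normal parts along $i_{0}(\Sigma)$, the tangential part is annihilated by $\langle\,\cdot\,,\widehat{\nu}\rangle$, so the $\sigma^{ab}$-piece of the contraction produces $-\sigma^{ab}\langle X_{ab},\widehat{\nu}\rangle=-\langle\mathbf{H}_{0},\widehat{e}_{3}\rangle$. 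For the Sherman--Morrison correction, $\nabla^{a}\tau\,\nabla^{b}\tau\,X_{ab}$ equals the ambient derivative $D_{\nabla\tau}(\nabla\tau)$ in $\mathbb{R}^{1,3}$ modulo a vector tangent to $i_{0}(\Sigma)$ (hence killed by $\langle\,\cdot\,,\widehat{e}_{3}\rangle$), and then, using $\nabla\tau=T_{0}-\sqrt{1+|\nabla\tau|^{2}_{\sigma}}\,\widehat{e}_{4}$, the parallelism $DT_{0}=0$, and $\langle\widehat{e}_{3},\widehat{e}_{4}\rangle=0$, one rewrites $\langle D_{\nabla\tau}\nabla\tau,\widehat{e}_{3}\rangle$ in terms of $\langle D_{\nabla\tau}\widehat{e}_{3},\widehat{e}_{4}\rangle=\alpha_{\widehat{e}_{3}}(\nabla\tau)$, picking up the factor $\sqrt{1+|\nabla\tau|^{2}_{\sigma}}$. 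Assembling the two contractions gives a closed expression for $\widehat{k}_{0}$ which, multiplied by $\mu_{\widehat{\sigma}}=\sqrt{1+|\nabla\tau|^{2}_{\sigma}}\,\mu_{\sigma}$, is exactly the displayed pointwise identity; integrating over $\Sigma$ completes the proof. The global signs are fixed coherently by checking the degenerate case $\tau\equiv0$, where the identity collapses to $k_{0}=-\langle\mathbf{H}_{0},\widehat{e}_{3}\rangle$ for the totally geodesic inclusion $\mathbb{R}^{3}\hookrightarrow\mathbb{R}^{1,3}$.

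\emph{Where the difficulty lies.} There is no conceptual obstruction: the content is differential-geometric bookkeeping. The two delicate points are, first, recognizing that the Sherman--Morrison correction to $\widehat{\sigma}^{ab}$ is precisely the mechanism that manufactures the connexion $1$-form $\alpha_{\widehat{e}_{3}}(\nabla\tau)$, once $\nabla^{a}\tau\,\nabla^{b}\tau\,X_{ab}$ is identified with $D_{\nabla\tau}\nabla\tau$ up to a tangential term and the normal frame relation $\nabla\tau=T_{0}-\sqrt{1+|\nabla\tau|^{2}_{\sigma}}\,\widehat{e}_{4}$ is invoked; and second, fixing once and for all the mutually consistent sign and orientation conventions --- future-directedness of $\widehat{e}_{4}$, the outward-normal convention in the definitions of $\widehat{k}_{0}$ and of the mean curvature vector $\mathbf{H}_{0}$, and the sign in the definition of $\alpha_{\widehat{e}_{3}}$ --- so that the two sides of the identity agree with no spurious factors of $1+|\nabla\tau|^{2}_{\sigma}$.
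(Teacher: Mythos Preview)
Your argument is correct and arrives at the same pointwise identity as the paper, but the route is genuinely different. The paper's proof is frame-theoretic: it observes that since $\widehat{\nu}$ is parallel along $T_{0}$ and $\langle\widehat{\nu},\widehat{\nu}\rangle=1$, the two ``missing'' terms $\langle\nabla^{1,3}_{\widehat{\nu}}\widehat{\nu},\widehat{\nu}\rangle$ and $\langle\nabla^{1,3}_{T_{0}}\widehat{\nu},T_{0}\rangle$ vanish, so $\widehat{k}_{0}$ may be written as the full four-dimensional trace $g^{\alpha\beta}\langle\nabla^{1,3}_{e_{\alpha}}\widehat{\nu},e_{\beta}\rangle$ in \emph{any} orthonormal frame of $\mathbb{R}^{1,3}$; it then simply re-evaluates this trace in the frame $\{e_{1},e_{2},\widehat{e}_{3},\widehat{e}_{4}\}$ adapted to $\Sigma$, where the $\widehat{e}_{4}$-slot produces $\alpha_{\widehat{e}_{3}}(\nabla\tau)$ directly. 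Your approach instead stays in coordinates on $\Sigma$, writes $\widehat{k}_{0}=-\widehat{\sigma}^{ab}\langle X_{ab},\widehat{\nu}\rangle$, and uses Sherman--Morrison for $\widehat{\sigma}^{ab}$ to split off the rank-one correction $\nabla^{a}\tau\nabla^{b}\tau\langle X_{ab},\widehat{e}_{3}\rangle$, which you then identify with $\sqrt{1+|\nabla\tau|^{2}}\,\alpha_{\widehat{e}_{3}}(\nabla\tau)$ via $D_{\nabla\tau}\nabla\tau$ and the relation $\nabla\tau=T_{0}-\sqrt{1+|\nabla\tau|^{2}}\,\widehat{e}_{4}$. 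The paper's version is shorter and avoids any explicit inverse-metric algebra; yours makes the mechanism by which $\alpha_{\widehat{e}_{3}}$ appears completely transparent (it is literally the Sherman--Morrison defect), and would generalize more readily to situations where one cannot add vanishing terms to complete a full spacetime trace. Both land on the same pointwise formula before integrating against $\mu_{\widehat{\sigma}}=\sqrt{1+|\nabla\tau|^{2}}\,\mu_{\sigma}$.
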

\begin{proof}
 First, consider the flat Lorentzian metric compatible connection $\nabla^{1,3}$ of the Minkowski space $\mathbb{R}^{1,3}$. Let $ \widehat{\Sigma}$ be the projection of $ \Sigma$ onto the $T_{0}$ complement i.e., a constant time Cauchy hypersurface in $R^{1,3}$. Let $\widehat{e}_{I},I=1,2$ be an orthonormal basis of the tangent bundle of $ \widehat{\Sigma}$. Let $\widehat{\nu}$ be the spacelike unit normal vector to $ \widehat{\Sigma}$. Extend $\widehat{\nu}$ along $T_{0}$ by parallel translation and denote it by $\widehat{e}_{3}$. The mean curvature of $\widehat{\Sigma}$ in the complement of $T_{0}$ is computed to be 
 \begin{eqnarray}
  k_{0}=\langle\nabla^{1,3}_{\widehat{e}_{I}}\widehat{\nu},\widehat{e}_{I}\rangle= \langle\nabla^{1,3}_{\widehat{e}_{I}}\widehat{\nu},\widehat{e}_{I}\rangle+\langle\nabla^{1,3}_{\widehat{\nu}}\widehat{\nu},\widehat{\nu}\rangle-\langle\nabla^{1,3}_{T_{0}}\widehat{\nu},T_{0}\rangle  
 \end{eqnarray}
 since the last two terms are identically zero. Therefore in the local coordinate symbol, we have 
 \begin{eqnarray}
  k_{0}=g^{\alpha\beta}\langle \nabla^{1,3}_{e_{\alpha}}\widehat{\nu},e_{\beta}\rangle    
 \end{eqnarray}
 for any orthonormal frame $e_{\alpha}$ of $\mathbb{R}^{1,3}$, where $g^{\alpha\beta}$ is the inverse of the metric $g_{\alpha\beta}=\langle e_{\alpha},e_{\beta}\rangle$. Now $\widehat{e}_{3}=\widehat{\nu}$ may be considered as a spacelike normal vector field along $ \widehat{\Sigma}$. Pick an orthonormal basis $\{e_{1},e_{2}\}$ tangent to $ \widehat{\Sigma}$. Let $\widehat{e}_{4}=\frac{1}{\sqrt{1+|\nabla\tau|^{2}}}(T_{0}-T^{\perp}_{0})$ be the future directed unit normal vector in the direction of the normal part of $T_{0}$. By definition 
 \begin{eqnarray}
  T^{\perp}_{0}=-\nabla\tau\cdot\{\widehat{e}_{3},\widehat{e}_{4}\}   
 \end{eqnarray}
 form an orthonormal basis of the normal bundle of $ \widehat{\Sigma}$. One derives 
 \begin{eqnarray}
 \label{eq:final1}
  k_{0}=\langle\nabla^{1,3}_{e_{I}}\widehat{e}_{3},\widehat{e}_{I}\rangle-\langle\nabla^{1,3}_{\widehat{e}_{4}}\widehat{e}_{3},\widehat{e}_{4}\rangle=\langle \mathbf{H}_{0},\widehat{e}_{3}\rangle-\frac{1}{\sqrt{1+|\nabla\tau|^{2}}}\langle\nabla^{1,3}_{\nabla\tau}\widehat{e}_{3},\widehat{e}_{4}\rangle \\\nonumber 
  =\langle \mathbf{H}_{0},\widehat{e}_{3}\rangle-\frac{1}{\sqrt{1+|\nabla\tau|^{2}}}\alpha_{\widehat{e}_{3}} (\nabla\tau) 
 \end{eqnarray}
 since $\widehat{\nu}$ is extended along $T_{0}$ by parallel translation. Now recall the volume form of $ \sigma$ and $ \widehat{\sigma}$ are related by
 \begin{eqnarray}
  \mu_{\sigma}=\frac{1}{\sqrt{1+|\nabla\tau|^{2}}}\mu_{\widehat{\sigma}}   
 \end{eqnarray}
Therefore integrating \ref{eq:final1} over $\Sigma$, one obtains 
\begin{eqnarray}
\int_{\Sigma}\left(-\langle \mathbf{H}_{0},\widehat{e}_{3}\rangle\sqrt{1+|\nabla\tau|^{2}}-\alpha_{\widehat{e}_{3}} (\nabla\tau) \right)\mu_{\sigma}=\int_{ \widehat{\Sigma}}\widehat{k}_{0}\mu_{\widehat{\sigma}}.     
\end{eqnarray}
  
\end{proof}

\noindent This only takes care of one term in the Wang-Yau mass i.e., the Minkowski contribution is expressible in terms of the mean curvature of a projected surface onto $\mathbb{R}^{3}$. We still need to take care of the physical contribution term i.e., the generalized mean curvature of $ \Sigma$ while viewed as an embedded surface in the spacetime $(M,\widehat{g})$.

Now one hopes to find a similar expression for the physical spacetime entity $\int_{\Sigma}h(\Sigma,i,\tau,e_{3})\mu_{\sigma}$. But in the physical spacetime, one needs to account for the \text{momentum} information (i.e., the second fundamental form of the hypersurface $\Omega$ in $(M,\widehat{g})$). \cite{yau} derives an almost similar identity by projecting $\Sigma$ onto a constant time hypersurface with extra terms accounting for the momentum information. As we shall see, the appearance of this extra momentum contribution would necessitate invoking the dominant energy condition. Consider the initial data set $(\Omega,g_{ij},P_{ij})$, where $P_{ij}$ is the second fundamental form of $\Omega$ in the physical spacetime $(M,\widehat{g})$ and plays the role of momentum for the metric $g_{ij}$. Now consider the product $\Omega\times \mathbb{R}$ and extend $P_{ij}$ by parallel translation along the $\widehat{\Omega}$-normal field $e_{4}$. We denote this entity by $P_{ij}$ as well. Now we seek a hypersurface $\widehat{\Omega}$ in $\Omega\times \mathbb{R}$ defined as the graph of a function
$f$ over $\Omega$, such that the mean curvature of $\widehat{\Omega}$
in $\Omega\times \mathbb{R}$ is the same as the trace of the restriction of $P$ to $\widehat{\Omega}$. This is accomplished in \cite{yau} by solving Jang's equation verified by $f$ (Jang's equation enjoys a rich history: Jang’s equation was proposed by Jang \cite{jang1978positivity} in an attempt to solve
the positive energy conjecture. Schoen and Yau came up with different geometric interpretations, studied the equation in full, and applied them to their proof of the positive
mass theorem \cite{schoen1979proof,schoen1981proof}). First, pick an orthonormal basis $\{\widetilde{e}_{\alpha}\}_{\alpha=1}^{4}$ for the tangent space of $\Omega\times \mathbb{R}$ along $\widehat{\Omega}$ such that $\{\widetilde{e}_{i}\}_{i=1}^{3}$ is tangent to $\widehat{\Omega}$ and $\widetilde{e}_{4}$ is the downward unit normal to $\widehat{\Omega}$. If the induced metric on $\Omega$ is $g_{ij}$, then the induced metric on $\widehat{\Omega}$ is $\widetilde{g}_{ij}=g_{ij}+\partial_{i}f\partial_{j}f$. Jang's equations states
\begin{eqnarray}
\label{eq:jang}
  \sum_{i=1}^{3}\langle\widetilde{\nabla}_{\widetilde{e}_{i}}\widetilde{e}_{4},\widetilde{e}_{i}\rangle=\sum_{i=1}^{3}P(\widetilde{e}_{i},\widetilde{e}_{i})  
\end{eqnarray}
or more explicitly 
\begin{eqnarray}
\sum_{i,j=1}^{3}\left(g^{ij}-\frac{f^{i}f^{j}}{1+|\hnabla f|^{2}}\right)\left(\frac{\hnabla_{i}\partial_{j}f}{(1+|\hnabla f|^{2})^{\frac{1}{2}}}-P_{ij}\right)=0    
\end{eqnarray}
where $\widetilde{\nabla}$ is the usual Levi-Civita connexion of the product spacetime $\Omega\times \mathbb{R}$. Now let $\tau$ be a smooth function on $\Sigma$. We consider a solution $f$ to Jangs equation \ref{eq:jang} in $\Omega$ that verifies the Dirichlet boundary condition $f=\tau$ on $ \Sigma$. Let $\widehat{\Sigma}$ be the graph of $\tau$ over $ \Sigma$ and $\widehat{\Omega}$ be the graph of $f$ over $\Omega$ such that $\partial\widehat{\Omega}=\widehat{\Sigma}$. The metric on $\widehat{\Sigma}$ is $ \sigma+d\tau\otimes d\tau$. Jang's equation is a quasi-linear elliptic PDE for $f$ and typically its solution blows up in the presence of an apparent horizon. Here we assume that such an apparent horizon is absent or $\widetilde{k}-\text{tr}_{ \widehat{\Sigma}}P>0$. Under such an assumption, a solution to Jang's equation exists and the foregoing construction makes sense. Under such circumstances, one can relate the generalized total mean curvature of $\Sigma$ in $\mathbb{R}^{1,3}$ to that of $ \widehat{\Sigma}$ in $\widehat{\Omega}$ together with the momentum information. We first state the theorem regarding the solvability of Jang's equation. The proof is presented in section 4 of \cite{yau}. As most estimates are derived in Schoen-Yau's original articles \cite{schoen1979proof,schoen1981proof} for the asymptotically flat case, it suffices to control the boundary gradient of the solution.  
\begin{theorem}\cite{yau}
 The normal derivative of a solution of the Dirichlet problem of Jang's equation is bounded if $\widetilde{k}>|\tr_{\widehat{\Sigma}}P|$, where $\widetilde{k}$ is the mean curvature of $\widehat{\Sigma}$ in $\widehat{\Omega}$.  
\end{theorem}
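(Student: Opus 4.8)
The plan is to prove the stated a priori estimate by the barrier method, and then to invoke the interior and higher-order estimates of Schoen and Yau \cite{schoen1979proof,schoen1981proof} to deduce solvability of the Dirichlet problem. Write Jang's equation as $Q[f]=0$, where
\[
Q[f]:=\sum_{i,j=1}^{3}\left(g^{ij}-\frac{f^{i}f^{j}}{1+|\nabla f|^{2}}\right)\left(\frac{\nabla_{i}\partial_{j}f}{(1+|\nabla f|^{2})^{1/2}}-P_{ij}\right).
\]
This is a quasilinear elliptic operator, uniformly elliptic on any set where $|\nabla f|$ is bounded, and geometrically $Q[f]=0$ says that the graph $\widehat{\Omega}=\{(x,f(x))\}\subset\Omega\times\mathbb{R}$ has mean curvature equal to the $\widehat{\Omega}$-trace of $P$. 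Running the standard continuity/Leray--Schauder scheme for the Dirichlet problem $Q[f]=0$, $f|_{\Sigma}=\tau$, requires a priori bounds on (i) $\sup_{\Omega}|f|$, (ii) the interior gradient, (iii) the gradient along $\Sigma=\partial\Omega$, and then (iv) $C^{1,\alpha}$ and Schauder estimates. Items (i), (ii), (iv) are precisely the estimates established in \cite{schoen1979proof,schoen1981proof} for the asymptotically flat case, and they transfer verbatim to the compact domain $\Omega$. Hence everything reduces to the gradient bound on $\Sigma$, which, since $f|_{\Sigma}=\tau$ is fixed, is equivalent to a bound on the normal derivative $\partial_{\nu}f$ along $\Sigma$; this is the content of the theorem, and the hypothesis $\widetilde{k}>|\tr_{\widehat{\Sigma}}P|$ enters only here.

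For the normal-derivative bound I would use barriers. Work in a collar $\{0\le d<\delta\}$ of $\Sigma$, $d=\dist_{g}(\cdot,\Sigma)$, extend $\tau$ to a smooth function $\bar{\tau}$ there, and set $w^{\pm}=\bar{\tau}\pm\phi(d)$ with $\phi(d)=\frac{1}{\mu}\log(1+\lambda d)$, so $\phi(0)=0$, $\phi'>0$, $\phi''<0$, and $\mu$ small, $\lambda$ large to be chosen. Taking $\lambda$, then $\delta$, so that $\phi(\delta)\ge 2\sup_{\Omega}|f|+\sup|\bar{\tau}|$ — using bound (i) — forces $w^{+}\ge f\ge w^{-}$ on the inner face $\{d=\delta\}$, while $w^{\pm}=\tau=f$ on $\Sigma$. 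If in addition $w^{+}$ is a supersolution and $w^{-}$ a subsolution of $Q$ on the collar, the comparison principle gives $w^{-}\le f\le w^{+}$ there, whence along $\Sigma$ one gets $\partial_{\nu}\bar{\tau}-\phi'(0)\le\partial_{\nu}f\le\partial_{\nu}\bar{\tau}+\phi'(0)$, i.e. $|\partial_{\nu}f|\le|\partial_{\nu}\bar{\tau}|+\lambda/\mu$, and then $\widetilde{k}$, being a fixed smooth function of $f|_{\Sigma}$, $\nabla f|_{\Sigma}$ and the ambient geometry, is bounded as well.

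It remains to verify $Q[w^{+}]\le 0\le Q[w^{-}]$, and this is where the hypothesis is consumed. Substituting $w=w^{+}$ into $Q$, using $|\nabla d|\equiv 1$ — so that $\nabla^{2}d(\nabla d,\cdot)=0$ and $\nabla^{2}d$ restricted to the level sets $\{d=\mathrm{const}\}$ is their second fundamental form — and expanding in powers of $\phi'(d)\to\infty$ as $d\to 0$, one finds that the two terms of top order $\phi''/\phi'$, namely the one coming from $g^{ij}\nabla_{i}\partial_{j}w$ and the one coming from the projection $-\tfrac{w^{i}w^{j}}{1+|\nabla w|^{2}}\nabla_{i}\partial_{j}w$, cancel at leading order; this near-cancellation is the defining feature of mean-curvature type operators and is exactly why a genuine geometric condition at the boundary, rather than mere smallness of $\delta$, is needed. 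What survives is, schematically,
\[
Q[w^{+}]=\big(\text{mean curvature of }\{d=\mathrm{const}\}\big)-\tr_{\{d=\mathrm{const}\}}P+O\!\big(\phi'(d)^{-1}\big)+O(\mu)
\]
(up to sign conventions), i.e. the mean curvature of the foliating surfaces balanced against the tangential trace of $P$, modulo errors made small by shrinking $\delta$ and $\mu$. As $d\to 0$ the foliating surfaces converge to $\Sigma$, and, via the Gauss--Codazzi relations along $\Sigma$, the hypothesis $\widetilde{k}>|\tr_{\widehat{\Sigma}}P|$ is equivalent to a strict sign condition — bounded away from zero — on exactly this combination of the mean curvature of $\Sigma$ in $\Omega$ and $\tr P$; that forces $Q[w^{+}]\le 0$ once $\delta,\mu$ are fixed small. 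The inequality $Q[w^{-}]\ge 0$ follows identically, using the other half of the two-sided bound (whence the absolute value in the hypothesis).

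The step I expect to be the main obstacle is this last verification: one must carry out the expansion of $Q[w^{\pm}]$ carefully enough both to exhibit the leading cancellation and to identify the surviving term with the right geometric quantity, and one must reconcile the circularity that $\widetilde{k}$ — the mean curvature of $\widehat{\Sigma}$ in $\widehat{\Omega}$ — itself involves the normal derivative being estimated. The latter is resolved by noting that the relevant combinations $\widetilde{k}\mp\tr_{\widehat{\Sigma}}P$ are, along $\Sigma$, determined by the boundary data $(\sigma,\tau)$ together with the mean curvature of $\Sigma$ in $\Omega$ up to terms that vanish as one approaches $\Sigma$, so the condition closes up (equivalently, one runs the barrier argument with a continuity parameter $t$ and checks that the bound persists along the family). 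Everything else — the comparison principle for $Q$, the Leray--Schauder continuity argument, and the interior and Schauder estimates — is standard, or is quoted directly from \cite{schoen1979proof,schoen1981proof}.
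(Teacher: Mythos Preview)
The paper does not supply its own proof of this theorem; it simply records that ``the proof is presented in section 4 of \cite{yau}'' and that ``as most estimates are derived in Schoen--Yau's original articles \cite{schoen1979proof,schoen1981proof} for the asymptotically flat case, it suffices to control the boundary gradient of the solution.'' Your proposal is therefore not being measured against a proof in this paper, but against the argument in \cite{yau} to which the paper defers.

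That said, your outline matches both the paper's description and the actual proof in \cite{yau}: one reduces to a boundary gradient (normal derivative) estimate via the standard continuity scheme, quoting Schoen--Yau for the $C^{0}$, interior gradient, and higher-order estimates, and handles the boundary piece by a barrier construction of exactly the type you describe --- $\bar{\tau}\pm\phi(d)$ with $\phi$ a concave function of the distance to $\Sigma$ --- in which the leading terms of $Q[w^{\pm}]$ collapse to the mean curvature of the level sets of $d$ minus the tangential trace of $P$. The strict inequality in the hypothesis is precisely what gives a definite sign on this leading term so that the lower-order errors can be absorbed.

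Your flagged ``circularity'' --- that $\widetilde{k}$ itself depends on the normal derivative being estimated --- is real and is the one place where your sketch is vague. In \cite{yau} this is dealt with by rewriting the hypothesis $\widetilde{k}>|\tr_{\widehat{\Sigma}}P|$ in terms of the \emph{physical} data on $\Sigma$: the mean curvature $k$ of $\Sigma$ in $\Omega$, the restriction of $P$ to $\Sigma$, and the Dirichlet datum $\tau$ (this is the content of the computations relating $k$ and $\widetilde{k}$ in their Section~3/4, cf.\ also Proposition~\ref{mean2} of the present paper). In that form the barrier hypothesis is a genuine condition on the data, independent of the unknown $f_{3}$, and the continuity argument closes. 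Your suggestion to run the barrier along a continuity parameter is a workable alternative, but you should expect to have to perform this translation explicitly rather than assert it.
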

This theorem guarantees the existence of the desired hypersurface $\widehat{\Omega}$ in $\Omega\times \mathbb{R}$. Therefore we have the surface $\widehat{\Sigma}$ which is the boundary of the hypersurface $\widehat{\Omega}$ and isometric to $\widehat{i}_{0}(\widehat{\Sigma})$ i.e., we can define the isometric embedding $\widehat{i}_{0}:\widehat{\Sigma}\hookrightarrow \mathbb{R}^{3},~\widehat{\Sigma}\mapsto \widehat{i}_{0}(\widehat{\Sigma})$. Recall, here $\widehat{i}_{0}(\widehat{\Sigma})$ is the projection of the $i_{0}(\Sigma)\subset \mathbb{R}^{1,3}$ onto $T_{0}$ complement i.e., $\mathbb{R}^{3}$. Before moving onto proving the vital proposition \ref{mean2}, we define the $\tau$ that is admissible.   

\begin{definition}[Admissible Time function $\tau$]
\label{admi}
Given a spacelike embedding $i: \Sigma\hookrightarrow M$, a smooth function $\tau$ on $ \Sigma$ is said to be admissible if\\
(a) $K+\frac{\det\nabla^{2}\tau}{1+|\nabla\tau|^{2}}>0$,~(b) $\Sigma$ bounds an embedded space-like three manifold $\Omega$ in $(M,\widehat{g})$ such that Jang's equation \ref{eq:jang} with Dirichlet boundary data $\tau$ is solvable on $\Omega$, (c) The generalized mean curvature $h(\Sigma,i,\tau,e^{'}_{3})>0$ for a space-like unit normal $e^{'}_{3}$ is determined by Jang's equation. 

\end{definition}
Let $\tau$ be a smooth function on $\Sigma=\partial \Omega$. We consider a solution $f$ of Jang's equation in $\Omega\times \mathbb{R}$ that satisfies the Dirichlet boundary condition $f=\tau$ on $\Sigma$. Denote the graph of $\tau$ over $\Sigma$ by $\widehat{\Sigma}$ and that of $f$ over $\Omega$ by $\widehat{\Omega}$ so that $\widehat{\Sigma}=\partial\widehat{\Omega}$. We choose orthonormal frames $\{e_{1},e_{2}\}$ and $\{\widetilde{e}_{1},\widetilde{e}_{2}\}$ for $T\Sigma$ and $T\widehat{\Sigma}$, respectively. Let $e_{3}$ be the outward normal of $\Sigma$ that is tangent to $\Omega$. We choose $\widetilde{e}_{3},\widetilde{e}_{4}$ for the normal bundle of $\widehat{\Sigma}$ in $\Omega\times \mathbb{R}$ such that $\widetilde{e}_{3}$ is tangent to the graph of $\widehat{\Omega}$ and $\widetilde{e}_{4}$ is a downward unit normal vector of $\widehat{\Omega}$ in $\Omega\times \mathbb{R}$. Also, let $v$ be a downward unit vector to $\Omega$ in $\mathbb{R}$ direction. $\{e_{1},e_{2},e_{3},v\}$ forms an orthonormal basis for the tangent space of $\Omega\times \mathbb{R}$ and so does $\{\widetilde{e}_{\alpha}\}_{\alpha=1}^{4}$. All these frames are extended along $\mathbb{R}$ direction by parallel translation.     


\begin{proposition}
\label{mean2}
Let $\tau$ be a smooth function on $\Sigma=\partial\Omega$. We consider a
solution $f$ of Jang’s equation in $\Omega\times \mathbb{R}$ that satisfies the Dirichlet boundary condition $f=\tau$ on $\Sigma$. Denote the graph of $\tau$ over $\Sigma$ by $ \widehat{\Sigma}$. Then there exists a unit space-like normal field $e^{'}_{3}$ to $\Sigma$ in $M$ such that the following holds 
\begin{eqnarray}
\int_{ \widehat{\Sigma}}\left(\widetilde{k}-\langle\widetilde{\nabla}_{\widetilde{e}_{4}}\widetilde{e}_{4},\widetilde{e}_{3}\rangle+P(\widetilde{e}_{4},\widetilde{e}_{3})\right)\mu_{\widehat{\sigma}}=\int_{ \Sigma}\left(-\sqrt{1+|\nabla\tau|^{2}}\langle \mathbf{H},e^{'}_{3}\rangle-\alpha_{e^{'}_{3}}(\nabla\tau) \right)\mu_{\sigma} 
\end{eqnarray}
where $\widetilde{k}$ is the mean curvature of $ \widehat{\Sigma}$ in $\widehat{\Omega}$ and $\widetilde{e}_{4}$ and $\widetilde{e}_{3}$ are the basis of the normal bundle of $ \widehat{\Sigma}$ in $(M,\widehat{g})$. Let $e_{3}$ be the outward unit normal of $\Sigma$ that is tangent to $\Omega$ and $e_{4}$ is the future directed time-like normal of $\Omega$ in $(M,\widehat{g})$. $e^{'}_{3}$ is given by 
\begin{eqnarray}
 e^{'}_{3}=\cosh\varphi e_{3}+\sinh\varphi e_{4},~\sinh\varphi=-\frac{f_{3}}{\sqrt{1+|\nabla\tau|^{2}}},   
\end{eqnarray}
$f_{3}$ is the derivative of $f$ in $e_{3}$ direction along $\Sigma$. 
\end{proposition}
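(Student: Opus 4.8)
The plan is to prove the asserted equality essentially pointwise on $\Sigma$, up to a tangential divergence that integrates to zero over the closed surface $\widehat{\Sigma}$. The underlying idea, following Section~3 of \cite{yau}, is a direct comparison of the extrinsic geometry of $\widehat{\Sigma}$ inside the Riemannian product $\Omega\times\mathbb{R}$ with the extrinsic geometry of $\Sigma$ inside the physical spacetime $(M,\widehat{g})$. The two ambient spaces share the hypersurface $\Omega$ and induce the same Levi-Civita connection on $T\Omega$; they differ only in how they differentiate their respective unit normals to $\Omega$ (the parallel field $v$ on one side, the future timelike field $e_{4}$ on the other), and that difference is governed entirely by $P$. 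Jang's equation \ref{eq:jang} is the crucial input: it says that the trace of the shape operator of $\widehat{\Omega}$ in $\Omega\times\mathbb{R}$ equals $\operatorname{tr}_{\widehat{\Omega}}P$, and this is exactly what makes the spurious full-trace terms produced on the two sides cancel. The admissibility hypothesis \ref{admi} is used only to guarantee that the Jang graph $\widehat{\Omega}$ with boundary data $\tau$ exists, so that all the quantities below are defined.

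First I would fix the frames exactly as in the paragraph preceding the statement: $\{e_{1},e_{2}\}$ tangent to $\Sigma$, $e_{3}$ the outward unit normal of $\Sigma$ in $\Omega$, $v$ the downward unit vector in the $\mathbb{R}$-factor, $e_{4}$ the future-directed timelike unit normal of $\Omega$ in $M$; and on the Jang side $\{\widetilde{e}_{1},\widetilde{e}_{2}\}$ tangent to $\widehat{\Sigma}$, $\widetilde{e}_{3}$ tangent to $\widehat{\Omega}$ and outward-normal to $\widehat{\Sigma}$ in $\widehat{\Omega}$, and $\widetilde{e}_{4}$ the downward unit normal of $\widehat{\Omega}$ in $\Omega\times\mathbb{R}$. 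Since $\widehat{\Omega}$ is the graph of $f$ and $\widehat{\Sigma}$ the graph of $\tau$ over $\Sigma$, I would write explicit expressions for $\widetilde{e}_{3},\widetilde{e}_{4}$ along $\widehat{\Sigma}$ in terms of $e_{3}, v, \nabla\tau$ and the normal derivative $f_{3}=e_{3}(f)$; this is precisely where the hyperbolic angle $\varphi$ with $\sinh\varphi=-f_{3}/\sqrt{1+|\nabla\tau|^{2}}$ appears, and it is this $\varphi$ that converts the pair $(e_{3},e_{4})$ into the boosted frame $(e^{'}_{3},e^{'}_{4})$ for the normal bundle of $\Sigma$ in $M$ adapted to the Jang construction.

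The core computation is then to expand each of $\widetilde{k}$, $\langle\widetilde{\nabla}_{\widetilde{e}_{4}}\widetilde{e}_{4},\widetilde{e}_{3}\rangle$ and $P(\widetilde{e}_{4},\widetilde{e}_{3})$ using the frame formulas together with the Gauss--Codazzi equations for $\widehat{\Sigma}\subset\widehat{\Omega}\subset\Omega\times\mathbb{R}$, and, in parallel, to expand $\langle\mathbf{H},e^{'}_{3}\rangle$ and the connection one-form $\alpha_{e^{'}_{3}}(\nabla\tau)$ for $\Sigma\subset\Omega\subset M$ in terms of the second fundamental form of $\Sigma$ in $\Omega$, the restriction of $P$, and the first derivatives of $\tau$ and $f$. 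Matching the two expansions, the pure mean-curvature contributions line up after using the volume-form relation $\mu_{\sigma}=(1+|\nabla\tau|^{2})^{-1/2}\mu_{\widehat{\sigma}}$; the momentum contributions, which on the Jang side involve $P(\widetilde{e}_{4},\widetilde{e}_{3})$ and on the spacetime side involve $P(e_{3},\cdot)$ and $P(e_{3},e_{3})$, reorganize through the boost angle $\varphi$ into the single combination built from $e^{'}_{3}$; and the only remaining discrepancy is a multiple of $\operatorname{tr}_{\widehat{\Omega}}(\text{shape of }\widehat{\Omega})-\operatorname{tr}_{\widehat{\Omega}}P$, which vanishes by Jang's equation \ref{eq:jang}. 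Whatever is left over is a tangential divergence $\operatorname{div}_{\widehat{\Sigma}}(\,\cdot\,)$ and drops out upon integration over the closed surface $\widehat{\Sigma}$.

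I expect the main obstacle to be the bookkeeping around the term $\langle\widetilde{\nabla}_{\widetilde{e}_{4}}\widetilde{e}_{4},\widetilde{e}_{3}\rangle$, which entangles the way the normal field $\widetilde{e}_{4}$ fails to be parallel with the tilt of the graph over $\Sigma$; one must verify that after subtracting it from $\widetilde{k}$ and adding $P(\widetilde{e}_{4},\widetilde{e}_{3})$, the result depends only on the spacetime data of $\Sigma$ through $e^{'}_{3}$, with no residual dependence on the interior behaviour of $f$. Pinning down exactly which linear combination of the Codazzi equations for $\Sigma\subset\Omega$ and of the defining elliptic equation for $f$ produces this cancellation is the delicate step; everything else is careful but routine index manipulation that closely follows \cite{yau}.
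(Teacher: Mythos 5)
Your proposal follows essentially the same computational route as the paper's proof: fix frames adapted to the Jang graph, introduce the boost angle $\varphi$ with $\sinh\varphi=-f_3/\sqrt{1+|\nabla\tau|^2}$, apply Jang's equation to trade the trace of the shape operator of $\widehat{\Omega}$ in $\Omega\times\mathbb{R}$ for $\operatorname{tr}_{\widehat{\Omega}}P$, and convert measures via $\mu_{\widehat{\sigma}}=\sqrt{1+|\nabla\tau|^2}\,\mu_{\sigma}$. One refinement worth noting: the paper's chain of frame identities yields a \emph{pointwise} equality between the two integrands (the extra $\nabla\tau\cdot\nabla\varphi$ term is produced by the pointwise transformation law $\alpha_{e_{3}}\mapsto\alpha_{e^{'}_{3}}$ of the connection one-form under the boost), so the residual tangential divergence you anticipate never in fact appears and no appeal to Stokes' theorem is needed.
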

\begin{proof}

The proof is essentially a sequence of computations. We first prove the following identity. 
\begin{eqnarray}
\widetilde{k}-\langle\widetilde{\nabla}_{\widetilde{e}_{4}}\widetilde{e}_{4},\widetilde{e}_{4},\widetilde{e}_{3} \rangle+P(\widetilde{e}_{3},\widetilde{e}_{4})\\
=\langle\widetilde{\nabla}_{e_{I}}\widetilde{e}_{3} ,e_{I}\rangle+\frac{\langle e_{3},\widetilde{e}_{4}\rangle}{\langle e_{3},\widetilde{e}_{3}\rangle}\langle\widetilde{\nabla}_{e_{I}}\widetilde{e}_{4},e_{I}\rangle-\frac{\langle e_{3},\widetilde{e}_{4}\rangle}{\langle e_{3},\widetilde{e}_{3}\rangle}P(e_{I},e_{I})+\frac{1}{\langle e_{3},\widetilde{e}_{3}}P(e_{3},\widetilde{e}_{4}-\langle \widetilde{e}_{4},e_{3}\rangle e_{3}). 
\end{eqnarray}
This is done as follows
\begin{eqnarray}
 \langle\widetilde{\nabla}_{e_{I}}\widetilde{e}_{3},e_{I}\rangle=\sum_{j=1}^{3}\langle \widetilde{\nabla}_{e_{j}}\widetilde{e}_{3},e_{j}\rangle-\langle \widetilde{\nabla}_{e_{3}}\widetilde{e}_{3},e_{3}\rangle=\sum_{\alpha=1}^{4}\langle\widetilde{\nabla}_{\widetilde{e}_{\alpha}}\widetilde{e}_{3},\widetilde{e}_{\alpha}\rangle-\langle\widetilde{\nabla}_{e_{3}}\widetilde{e}_{3},e_{3}\rangle   
\end{eqnarray}
as $\{e_{1},e_{2},e_{3},\nu\}$ and $\{\widetilde{e}_{\alpha}\}_{\alpha=1}^{4}$ are both orthonormal frames for the tangent space of $\widehat{\Omega}\times \mathbb{R}$ and $\widetilde{\nabla}_{v}\widetilde{e}_{3}=0$. Now note 
\begin{eqnarray}
\sum_{I=1}^{2}\langle \widetilde{\nabla}_{e_{I}}\widetilde{e}_{3},e_{I}\rangle=\widetilde{k}+\langle\widetilde{\nabla}\widetilde{e}_{4}\widetilde{e}_{3},\widetilde{e}_{4}\rangle-\langle\widetilde{\nabla}_{\widetilde{e}_{3}}\widetilde{e}_{3},\widetilde{e}_{3}\rangle.
\end{eqnarray}
On the other hand we have 
\begin{eqnarray}
\sum_{I=1}^{2}\langle\widetilde{\nabla}_{e_{I}}\widetilde{e}_{4},e_{I}\rangle=\sum_{j=1}^{3}\langle\widetilde{\nabla}_{e_{j}}\widetilde{e}_{4},e_{j}\rangle-\langle\widetilde{\nabla}_{e_{3}}\widetilde{e}_{4},e_{3}\rangle=\sum_{j=1}^{3}\langle\widetilde{\nabla}_{\widetilde{e}_{j}}\widetilde{e}_{4},\widetilde{e}_{j}\rangle-\langle\widetilde{\nabla}_{e_{3}}\widetilde{e}_{4},e_{3}\rangle.    
\end{eqnarray}
Now, the application of Jang's equation yields 
\begin{eqnarray}
 \sum_{I=1}^{2}\langle\widetilde{\nabla}_{e_{I}}\widetilde{e}_{4},e_{I}\rangle=\sum_{j=1}^{3}P(\widetilde{e}_{j},\widetilde{e}_{j})-\langle\widetilde{\nabla}_{e_{3}}\widetilde{e}_{4},e_{3}\rangle.      
\end{eqnarray}
Moreover using 
\begin{eqnarray}
\sum_{j=1}^{3}P(\widetilde{e}_{j},\widetilde{e}_{j})=\sum_{\alpha=1}^{4}P(\widetilde{e}_{\alpha},\widetilde{e}_{\alpha})-P(\widetilde{e}_{4},\widetilde{e}_{4})=\sum_{j=1}^{3}P(e_{j},e_{j})-P(\widetilde{e}_{4},\widetilde{e}_{4})    
\end{eqnarray}
and 
\begin{eqnarray}
 \langle e_{3},\widetilde{e}_{4}\rangle P(\widetilde{e}_{4},\widetilde{e}_{4})=P(e_{3}-\langle e_{3},\widetilde{e}_{3}\rangle \widetilde{e}_{3},\widetilde{e}_{4})   
\end{eqnarray}
we have 
\begin{eqnarray}
    \sum_{j=1}^{3}P(\widetilde{e}_{j},\widetilde{e}_{j})=\sum_{j=1}^{3}P(e_{j},e_{j})+\frac{\langle e_{3},\widetilde{e}_{3}\rangle}{\langle e_{3},\widetilde{e}_{4}\rangle}P(\widetilde{e}_{3},\widetilde{e}_{4})-\frac{1}{\langle e_{3},\widetilde{e}_{4}\rangle}P(e_{3},\widetilde{e}_{4})
\end{eqnarray}
The final expression reads 
\begin{eqnarray}
\sum_{I=1}^{2}\langle \widetilde{\nabla}_{e_{I}}\widetilde{e}_{4},e_{I}\rangle=\sum_{I=1}^{2}P(e_{I},e_{I})+\frac{\langle e_{3},\widetilde{e}_{3}\rangle}{\langle \widetilde{e}_{3},e_{4}\rangle}P(\widetilde{e}_{3},\widetilde{e}_{4})-\frac{1}{\langle e_{3},\widetilde{e}_{4}\rangle}P(e_{3},\widetilde{e}_{4}-\langle \widetilde{e}_{4},e_{3}\rangle e_{3})-\langle \widetilde{\nabla}_{e_{3}}\widetilde{e}_{4},e_{3}\rangle    
\end{eqnarray}
Now let $k$ be the mean curvature of $ \widehat{\Sigma}$ in $\widehat{\Omega}$ with respect to $e_{3}$ i.e., 
\begin{eqnarray}
    k=\sum_{I=1}^{2}\langle \widetilde{\nabla}_{e_{I}}e_{3},e_{I}\rangle.
\end{eqnarray}
As $\langle e_{3},\widetilde{e}_{I}\rangle=0$, $e_{3}=\langle e_{3},\widetilde{e}_{3}\rangle e_{3}+\langle e_{3},\widetilde{e}_{4}\rangle \widetilde{e}_{4}$. With this, the expression of $k$ becomes 
\begin{eqnarray}
 k=\langle e_{3},\widetilde{e}_{3}\rangle\langle\widetilde{\nabla}_{e_{I}}\widetilde{e}_{3},e_{I}\rangle  +\langle e_{3},\widetilde{e}_{4}\rangle \langle \widetilde{\nabla}_{e_{I}}\widetilde{e}_{4},e_{I}\rangle-\langle e_{3},\widetilde{e}_{3}\rangle e_{I}(\langle e_{I},\widetilde{e}_{3}\rangle) -\langle e_{3},\widetilde{e}_{4}\rangle e_{I}(\langle e_{I},\widetilde{e}_{4}\rangle)
\end{eqnarray}
which yields the following relation between $k$ and $\widetilde{k}$
\begin{eqnarray}
\widetilde{k}-\langle \widetilde{\nabla}_{\widetilde{e}_{4}}\widetilde{e}_{4},\widetilde{e}_{3}\rangle+P(\widetilde{e}_{3},\widetilde{e}_{4})\\\nonumber 
=\frac{1}{\langle e_{3},\widetilde{e}_{3}\rangle}k-\frac{\langle e_{3},\widetilde{e}_{4}\rangle}{\langle e_{3},\widetilde{e}_{3}\rangle}P(e_{I},e_{I})+\frac{1}{\langle e_{3},\widetilde{e}_{3}\rangle}P(e_{3},\widetilde{e}_{4}-\langle \widetilde{e}_{4},e_{3}\rangle e_{3})\\\nonumber 
+e_{I}(\langle e_{I},\widetilde{e}_{3}\rangle)+\frac{\langle e_{3},\widetilde{e}_{4}\rangle}{\langle e_{3},\widetilde{e}_{3}\rangle}e_{I}(\langle e_{I},\widetilde{e}_{4}\rangle).
\end{eqnarray}
Observe 
\begin{eqnarray}
P(e_{3},\widetilde{e}_{4}-\langle \widetilde{e}_{4},e_{3}\rangle e_{3})=\frac{1}{\sqrt{1+|Df|^{2}}}P(e_{3},\nabla\tau)    
\end{eqnarray}
and 
\begin{eqnarray}
e_{I}(\langle e_{I},\widetilde{e}_{3}\rangle)+\frac{\langle e_{3},\widetilde{e}_{4}\rangle}{\langle e_{3},\widetilde{e}_{3}\rangle}e_{I}(\langle e_{I},\widetilde{e}_{4}\rangle)=-\frac{1}{\sqrt{1+|Df|^{2}}}\nabla\tau\cdot \nabla(\frac{f_{3}}{\sqrt{1+|\nabla\tau|^{2}}})    
\end{eqnarray}
which is nothing but 
\begin{eqnarray}
 \frac{\nabla\tau\cdot \nabla\varphi}{\sqrt{1+|\nabla\tau|^{2}}}.   
\end{eqnarray}
Therefore 
\begin{eqnarray}
\frac{1}{\langle e_{3},\widetilde{e}_{3}\rangle}k-\frac{\langle e_{3},\widetilde{e}_{4}\rangle}{\langle e_{3},\widetilde{e}_{3}\rangle}P(e_{I},e_{I})+\frac{1}{\langle e_{3},\widetilde{e}_{3}\rangle}P(e_{3},\widetilde{e}_{4}-\langle \widetilde{e}_{4},e_{3}\rangle e_{3})\\\nonumber 
+e_{I}(\langle e_{I},\widetilde{e}_{3}\rangle)+\frac{\langle e_{3},\widetilde{e}_{4}\rangle}{\langle e_{3},\widetilde{e}_{3}\rangle}e_{I}(\langle e_{I},\widetilde{e}_{4}\rangle)\\
=\frac{1}{\sqrt{1+|\nabla\tau|^{2}}}\left(\sqrt{1+|Df|^{2}}k-f_{3}P(e_{I},e_{I})+P(e_{3},\nabla\tau)+\nabla\tau\cdot\nabla\varphi\right)
\end{eqnarray}
and consequently 
\begin{eqnarray}
\widetilde{k}-\langle \widetilde{\nabla}_{\widetilde{e}_{4}}\widetilde{e}_{4},\widetilde{e}_{3}\rangle+P(\widetilde{e}_{3},\widetilde{e}_{4})\\\nonumber 
=\frac{1}{\sqrt{1+|\nabla\tau|^{2}}}\left(\sqrt{1+|Df|^{2}}\langle\nabla_{e_{I}}e_{3},e_{I}\rangle -f_{3}\langle\nabla_{e_{I}}e_{4},e_{I}\rangle-\alpha_{e_{3}}(\nabla\tau)+\nabla\tau\cdot\nabla\varphi\right)
\end{eqnarray}
On the other hand 
\begin{eqnarray}
 e^{'}_{3}=\cosh\varphi e_{3}+\sinh\varphi e_{4},~e^{'}_{4}=\sinh\varphi e_{3}+\cosh\varphi e_{4}   
\end{eqnarray}
yielding 
\begin{eqnarray}
  \langle \nabla_{e_{I}}e^{'}_{3},e_{I}\rangle-\frac{1}{\sqrt{1+|\nabla\tau|^{2}}}\alpha_{e^{'}_{3}}(\nabla\tau)\\\nonumber 
  =\cosh\varphi\langle\nabla_{e_{I}}e_{3},e_{I}\rangle+\sinh\varphi\langle\nabla_{e_{I}}e_{4},e_{I}\rangle-\frac{1}{\sqrt{1+|\nabla\tau|^{2}}}(\alpha_{3}(\nabla\tau)-\nabla\tau\cdot\nabla\varphi)
\end{eqnarray}
This completes the proof.
\end{proof}

\section{Proof of the theorem \ref{1}}
\label{proof1}
\noindent Let us recall the statement of the theorem \ref{1}\\

\noindent \textbf{Theorem 1.1:}
\textit{Let $(\Sigma, \sigma)$ be a spacelike topological $2-$sphere that bounds a spacelike topological ball $\Omega$ in the physical spacetime $(M,\widehat{g})$ verifying dominant energy condition \ref{eq:dominant}. Moreover, assume $\Sigma$ is not an apparent horizon in $(M,\widehat{g})$, and its mean curvature vector $\mathbf{H}$ is spacelike. Let $i_{0}:\Sigma\to \mathbb{R}^{1,3}$ be an isometric embedding into the Minkowski space and let $\tau$ denote the restriction of the time function $t$ on $i_{0}(\Sigma)$. Let $\Sigma,\Omega, \widehat{\Sigma}$, and $\widehat{\Omega}$ be as in definition \ref{surfaces}. Assume $\tau$ is admissible (see definition \ref{admi}) and $\mathcal{S}^{MIT}$ is non-empty then for a $\Psi\in \mathcal{S}^{MIT}$,
(a)\begin{eqnarray}
M^{\Psi}=\frac{1}{8\pi}\int_{ \widehat{\Sigma}}\left(k_{0}-(\widetilde{k}-\langle\widetilde{\nabla}_{\widetilde{e}_{4}}\widetilde{e}_{4},\widetilde{e}_{3}\rangle+P(\widetilde{e}_{4},\widetilde{e}_{3}))\right)|\Psi|^{2}\mu_{\widehat{\sigma}}   
\end{eqnarray}
is non-negative. Moreover $M^{\Psi}$ vanishes if and only if the spacetime $(M,\widehat{g})$ is Minkowski along $ \widehat{\Sigma}$,
(b) If $\mathcal{S}^{APS}$ is non-empty, then $M^{WY}$ is non-negative and $M^{WY}$ vanishes if and only if the spacetime $(M,\widehat{g})$ is Minkowski along $ \widehat{\Sigma}$}.

 The propositions \ref{mean1} and \ref{mean2} can now be applied to reduce the Wang-Yau quasi-local energy to the desired form on which the spinor argument can be applied. Let us consider the topological $2-$sphere $(\Sigma, \sigma)$ that bounds a spacelike ball $\Sigma$ in the spacetime $(M,\widehat{g})$. Let $\tau$ be an admissible time function (according to the definition \ref{admi}), then the metric on $\widehat{\Sigma}$ is $ \widehat{\sigma}:= \sigma+d\tau\otimes d\tau$. The mean curvature of $ \widehat{\Sigma}$ in $\widehat{\Omega}$ is $\widetilde{k}$. Now consider its isometric embedding into $\mathbb{R}^{3}$ $\widehat{i}_{0}: \widehat{\Sigma}\hookrightarrow \mathbb{R}^{3}$. Let the mean curvature of $\widehat{i}_{0}( \widehat{\Sigma})$ in $\mathbb{R}^{3}$ be $k_{0}$. Let $(\widetilde{e}_{3},\widetilde{e}_{4})$ be an orthonormal frame of the normal bundle of $ \widehat{\Sigma}$ as in proposition \ref{mean2}. According to the propositions \ref{mean1} and \ref{mean2}, the reduced Wang-Yau energy functional $M^{WY}_{\tau}$ reads (see definition \ref{eq:wy2})
\begin{eqnarray}
 8\pi M^{wy}_{\tau}:=\int_{ \widehat{\Sigma}}k_{0}\mu_{\widehat{\sigma}}-\int_{ \widehat{\Sigma}}\left(\widetilde{k}-\langle\widetilde{\nabla}_{\widetilde{e}_{4}}\widetilde{e}_{4},\widetilde{e}_{3}\rangle+P(\widetilde{e}_{4},\widetilde{e}_{3})\right)\mu_{\widehat{\sigma}},   
\end{eqnarray}
Let us define the vector field $X$ whose dual $1-$form reads 
\begin{eqnarray}
 X^{\flat}:=\langle\widetilde{\nabla}_{\widetilde{e}_{4}}\widetilde{e}_{4},\cdot\rangle-P(\widetilde{e}_{4},\cdot).  
\end{eqnarray}
 Also note that $\widetilde{e}_{3}$ is the unit spacelike normal to $ \widehat{\Sigma}$ in $\widehat{\Omega}$. Let $\nu:=\widetilde{e}_{3}$. Therefore, the Wang-Yau energy takes the form 
 \begin{eqnarray}
8\pi M^{wy}_{\tau}=\int_{ \widehat{\Sigma}}k_{0}\mu_{\widehat{\sigma}}-\int_{ \widehat{\Sigma}}(\widetilde{k}-\langle X,\nu\rangle)\mu_{\widehat{\sigma}}.     
 \end{eqnarray}
First, we prove the part $(a)$. Let $\Psi\in C^{\infty}(S^{\widehat{\Omega}})$ such that it solves the Dirac equation with MIT Bag boundary condition 
\begin{eqnarray}
\label{eq:boundary10}
\widehat{D}\Psi=0~on~\widehat{\Omega}\\
\Pi_{+}\Psi=\Pi_{+}\xi~on~ \widehat{\Sigma},
\end{eqnarray}
for $\xi\in C^{\infty}(S^{ \widehat{\Sigma}})$. We want to apply theorem \ref{dirac1}. First, we check the curvature condition of $\widehat{\Omega}$ in $(M,\widehat{g})$.  Now recall the geometry of $\widehat{\Omega}$ in the product $\Omega\times \mathbb{R}$. On $\widehat{\Omega}$ the induced metric is $\widetilde{g}_{ij}:=g_{ij}+\partial_{i}f\partial_{j}f$ and let us denote its second fundamental form by $h_{ij}$ (note that $h_{\mu\nu}=\frac{1}{2}(\widetilde{\nabla}_{\mu}\widetilde{n}_{\mu}+\widetilde{\nabla}_{\nu}\widetilde{n}_{\mu})$,~$\mu,\nu=1,2,3,4$, where $\widetilde{n}=\widetilde{e}_{4}$ is the unit downward time-like normal of $\widehat{\Omega}$ in $\Omega\times \mathbb{R}$). Similarly, let $P_{ij}$ be the second fundamental form of $\widehat{\Omega}$ in $\Omega\times \mathbb{R}$. On $\widehat{\Omega}$, one has the following inequality verified by the scalar curvature $R^{\widehat{\Omega}}$ if the Hamiltonian and the momentum constraints are satisfied on $\widehat{\Omega}$
\begin{eqnarray}
 2(\mu-\sqrt{\widetilde{g}(J,J)})\leq R^{\widehat{\Omega}}-\sum_{i,j}(h_{ij}-P_{ij})^{2}-2\sum_{i}(h_{i4}-P_{i4})^{2}+2\sum_{i}\widetilde{\nabla}_{i}(h_{i4}-P_{i4})   
\end{eqnarray}
But since the spacetime verifies the dominant energy condition, we have $\mu\geq \sqrt{\widetilde{g}(J,J)}$, one has 
\begin{eqnarray}
 R^{\widehat{\Omega}}\geq \sum_{i,j}(h_{ij}-P_{ij})^{2}+2\sum_{i}(h_{i4}-P_{i4})^{2}-2\sum_{i}\widetilde{\nabla}_{i}(h_{i4}-P_{i4})\\\nonumber 
 \geq 2\sum_{i}(h_{i4}-P_{i4})^{2}-2\sum_{i}\widetilde{\nabla}_{i}(h_{i4}-P_{i4}). 
\end{eqnarray}
But note that $h_{i4}-P_{i4}$ is nothing but the $\widetilde{e}_{i}$th components of the $1-$form $-P(\widetilde{e}_{4},\cdot)+\langle \widetilde{\nabla}_{\widetilde{e}_{4}}\widetilde{e}_{4},\cdot\rangle$. Therefore $\sum_{i}(h_{i4}-P_{i4})^{2}=|X|^{2}$ and $\sum_{i}\widetilde{\nabla}_{i}(h_{i4}-P_{i4})=\text{div}X$. Therefore, under the assumption of the dominant energy condition, we have 
\begin{eqnarray}
R^{\widehat{\Omega}}\geq 2|X|^{2}-2\text{div}X.    
\end{eqnarray}
Therefore, the theorem \ref{dirac1} applies and we immediately have a unique solution $\Psi$ for the boundary value problem \ref{eq:boundary10} given a boundary spinor $\xi\in C^{\infty}(S^{ \widehat{\Sigma}})$. Now according to theorem \ref{MIT1}, $\Psi$ verifies the inequality 
\begin{eqnarray}
\label{eq:live}
\int_{ \widehat{\Sigma}}\langle D\Psi,\Psi\rangle \mu_{\widetilde{ \widehat{\Sigma}}}\geq \frac{1}{2}\int_{ \widehat{\Sigma}}(\widetilde{k}-\langle X,\nu\rangle)|\Psi|^{2}\mu_{\widehat{\sigma}}.    
\end{eqnarray}
Now let's consider the isometric image $\widehat{i}_{0}( \widehat{\Sigma})$ of $ \widehat{\Sigma}$ in $\mathbb{R}^{3}$. Let's assume that $\widehat{i}_{0}( \widehat{\Sigma})$ bounds a topological ball $\overline{\widehat{\Omega}}$ in $\mathbb{R}^{3}$. Since $\mathbb{R}^{3}$ has parallel spinors, the following boundary value problem for the Dirac equation on $\overline{\widehat{\Omega}}$ 
\begin{eqnarray}
\label{eq:R3}
 \mathbf{D}\Phi=0,~on~\overline{\widehat{\Omega}}\subset \mathbb{R}^{3}\\
 \Pi_{+}\Phi=\Pi_{+}\zeta~on~\widehat{i}_{0}( \widehat{\Sigma})
\end{eqnarray}
has a unique solution given $\zeta\in C^{\infty}(S^{\overline{\widehat{\Omega}}})$. Here $\mathbf{D}$ is the Dirac operator on $\mathbb{R}^{3}$ induced by the canonical flat connection. Moreover $\Phi$ verifies the equality due to theorem \ref{MIT1} and 
\begin{eqnarray}
D\Phi=\frac{k_{0}}{2}\Phi~on~\widehat{i}_{0}( \widehat{\Sigma})=\partial \overline{\widehat{\Omega}}.    
\end{eqnarray}
Here $D$ is the intrinsic Dirac operator on the boundary $\widehat{i}_{0}( \widehat{\Sigma})$ which is identical to the Dirac operator $D$ on $ \widehat{\Sigma}$ since $ \widehat{\Sigma}$ and $\widehat{i}_{0}( \widehat{\Sigma})$ are isometric and therefore has identical intrinsic Riemannian structure. 

Now we pull back the restriction of $\Phi$ onto $\widehat{i}_{0}( \widehat{\Sigma})$ onto $ \widehat{\Sigma}$ by the isometry map and set $\xi\in C^{\infty}(S^{ \widehat{\Sigma}})$ equation to  $\widehat{i}^{*}_{0}\Phi$ i.e., 
\begin{eqnarray}
 \xi= \widehat{i}^{*}_{0}\Phi~on ~ \widehat{\Sigma}.  
\end{eqnarray}
Note that $|\widehat{i}^{*}_{0}\Phi|=|\Phi|$. Now on $ \widehat{\Sigma}$, 
\begin{eqnarray}
\label{eq:quality}
 D\xi=\frac{k_{0}}{2}\xi   
\end{eqnarray}
since on $\widehat{i}_{0}( \widehat{\Sigma})$ 
\begin{eqnarray}
D\Phi=\frac{k_{0}}{2}\Phi.    
\end{eqnarray}
Unfortunately, we only have $\Pi_{+}\Psi=\Pi_{+}\xi$ on $ \widehat{\Sigma}$. But the chirality decomposition of the spinor $\Psi$ gives 
\begin{eqnarray}
 \Psi=\Psi_{+}+\Psi_{-}   
\end{eqnarray}
where we denote $\Pi_{\pm}\Psi$ by $\Psi_{\pm}$ following the notation in corollary \ref{MITC} and the MIT Bag boundary condition implies 
\begin{eqnarray}
\Psi_{+}=\xi_{+}    
\end{eqnarray}
and therefore 
\begin{eqnarray}
D\Psi_{+}=D\xi_{+}    
\end{eqnarray}
since $D$ is $ \widehat{\Sigma}$ parallel. 
The inequality \ref{eq:live} reads 
\begin{eqnarray}
\int_{ \widehat{\Sigma}}\left(\langle D\Psi_{+},\Psi_{-}\rangle +\langle D\Psi_{-},\Psi_{+}\rangle\right)\mu_{\widehat{\sigma}}\geq \frac{1}{2}\int_{ \widehat{\Sigma}}(\widetilde{k}-\langle X,\nu\rangle)|\Psi|^{2}\mu_{\widehat{\sigma}}   
\end{eqnarray}
since $DS^{ \widehat{\Sigma}}_{+}=S^{ \widehat{\Sigma}}_{-}$ and the sections of these two subbundles are pointwise orthogonal with respect to the Hermitian inner product being used here.
Now using the boundary condition $\Psi_{+}=\xi_{+}$ and self-adjointness of $D$ on closed $ \widehat{\Sigma}$, we have 
\begin{eqnarray}
\label{eq:qual2}
\int_{ \widehat{\Sigma}}2\langle D\xi_{+},\Psi_{-}\rangle \mu_{\widehat{\sigma}}\geq \frac{1}{2}\int_{ \widehat{\Sigma}}(\widetilde{k}-\langle X,\nu\rangle)|\Psi|^{2}\mu_{\widehat{\sigma}}.    
\end{eqnarray}
But now from \ref{eq:quality}
\begin{eqnarray}
D\xi=\frac{k_{0}}{2}\xi~or~D\xi_{+}=\frac{k_{0}}{2}\xi_{-}    
\end{eqnarray}
since $D\Pi_{+}\xi=\Pi_{-}D\xi$. Therefore \ref{eq:qual2} becomes 
\begin{eqnarray}
 \int_{ \widehat{\Sigma}}k_{0}\langle \xi_{-},\Psi_{-}\rangle \mu_{\widehat{\sigma}}\geq \frac{1}{2}\int_{ \widehat{\Sigma}}(\widetilde{k}-\langle X,\nu\rangle)|\Psi|^{2}\mu_{\widehat{\sigma}}.   
\end{eqnarray}
Now since the mean curvature $\widehat{k}_{0}$ of $\widehat{i}_{0}( \widehat{\Sigma})$ is positive, we have by Cauchy-Scwartz
\begin{eqnarray}
  \frac{1}{2}\int_{ \widehat{\Sigma}}k_{0}(|\xi_{-}|^{2}+|\Psi_{-}|^{2})\mu_{\widehat{\sigma}}\geq \int_{ \widehat{\Sigma}}k_{0}\langle \xi_{-},\Psi_{-}\rangle \mu_{\widehat{\sigma}}\geq \frac{1}{2}\int_{ \widehat{\Sigma}}(\widetilde{k}-\langle X,\nu\rangle)|\Psi|^{2}\mu_{\widehat{\sigma}}.  
\end{eqnarray}
But since $D\xi=\frac{k_{0}}{2}\xi$ on $ \widehat{\Sigma}$, we can apply corollary \ref{MITC} to conclude 
\begin{eqnarray}
\label{eq:114}
\frac{1}{2}\int_{ \widehat{\Sigma}}k_{0}(|\xi_{-}|^{2}+|\Psi_{-}|^{2})\mu_{\widehat{\sigma}}=\frac{1}{2}\int_{ \widehat{\Sigma}}k_{0}(|\xi_{+}|^{2}+|\Psi_{-}|^{2})\mu_{\widehat{\sigma}}    
\end{eqnarray}
but $\xi_{+}=\Psi_{+}$ on $ \widehat{\Sigma}$ and therefore 
\begin{eqnarray}
\label{eq:119}
\frac{1}{2}\int_{ \widehat{\Sigma}}k_{0}(|\xi_{+}|^{2}+|\Psi_{-}|^{2})\mu_{\widehat{\sigma}}=\frac{1}{2}\int_{ \widehat{\Sigma}}k_{0}(|\Psi_{+}|^{2}+|\Psi_{-}|^{2})\mu_{\widehat{\sigma}} =\frac{1}{2}\int_{ \widehat{\Sigma}}k_{0}|\Psi|^{2}\mu_{\widehat{\sigma}}.    
\end{eqnarray}
Therefore \ref{eq:114} and \ref{eq:119} yields 
\begin{eqnarray}
 \frac{1}{2}\int_{ \widehat{\Sigma}}\left(k_{0}-(\widetilde{k}-\langle X,\nu\rangle\right)|\Psi|^{2}\mu_{\widehat{\sigma}}\geq 0.   
\end{eqnarray}
This proves that the spinorial quasi-local energy 
\begin{eqnarray}
\label{eq:R4}
 M^{\Psi}=\frac{1}{8\pi}\int_{ \widehat{\Sigma}}\left(k_{0}-(\widetilde{k}-\langle X,\nu\rangle\right)|\Psi|^{2}\mu_{\widehat{\sigma}}\geq 0.   
\end{eqnarray}
This completes the first part of the proof. Now, the reduced Wang-Yau energy is achieved from the spinor quasi-local energy $M^{\Psi}$ as 
\begin{eqnarray}
M^{WY}_{\tau}=\inf_{\Psi\in \mathcal{S}^{MIT},|\Psi|=1}M^{\Psi} \geq 0.   
\end{eqnarray}
This holds for any admissible $\tau$. Therefore, $M^{WY}\geq 0$. 

For the part $(b)$, we need to use theorem \ref{dirac} and corollary \ref{spininequality} i.e., work with APS boundary condition. We use the same symbols to denote the spinors but care must be taken while interpreting: they are completely different from the previous case. Let $\Psi\in C^{\infty}(S^{ \widehat{\Sigma}})$. Now As usual consider the boundary value problem 
\begin{eqnarray}
\widehat{D}\Psi=0~on~\widehat{\Omega}\\
P_{\geq 0}\Psi=P_{\geq 0} \xi~on \widehat{ \widehat{\Sigma}}    
\end{eqnarray}
which has a unique solution given $\xi\in C^{\infty}(S^{ \widehat{\Sigma}})$ due to theorem \ref{dirac}. Moreover, under the condition on the point-wise norm $|P_{\geq 0}\Psi|\leq |\Psi|$ on $ \widehat{\Sigma}$, and the fact that $\widetilde{k}>\langle X,\nu\rangle$, $\Psi$ verifies the inequality \ref{spininequality} 
\begin{eqnarray}
\label{eq:129}
\int_{ \widehat{\Sigma}}\left(\langle DP_{\geq 0}\Psi,P_{\geq 0}\Psi\rangle-\frac{1}{2}(\widetilde{k}-\langle X,\nu\rangle)|P_{\geq 0}\Psi|^{2}\right) \mu_{\widehat{\sigma}}\geq 0.    
\end{eqnarray} 
Now let us solve the Dirac equation on $\overline{\widehat{\Omega}}$ in $\mathbb{R}^{3}$ bounded by the isometric image $\widehat{i}_{0}( \widehat{\Sigma}_{0})$ of $ \widehat{\Sigma}_{0}$. Let $\mathbf{D}$ be the Dirac operator on $\mathbb{R}^{3}$ induced by the canonical flat connection. According to the theorem \ref{dirac}, for $\Phi\in C^{\infty}(S^{\overline{\widehat{\Omega}}})$, the boundary value problem 
\begin{eqnarray}
\mathbf{D}\Phi=0 ~on~\overline{\widehat{\Omega}},\\
P_{\geq 0}\Phi=P_{\geq 0}\zeta~on~\widehat{i}_{0}( \widehat{\Sigma})
\end{eqnarray}
for a $\zeta\in C^{\infty}(S^{\widehat{i}_{0}( \widehat{\Sigma})})$. Now according to corollary \ref{spininequality}, on $\overline{\widehat{\Omega}}\subset \mathbb{R}^{3}$, $\Phi$ verifies the equality case in the inequality \ref{eq:129} and consequently 
\begin{eqnarray}
 \Phi=P_{\geq 0}\Phi   
\end{eqnarray}
on $\widehat{i}_{0}( \widehat{\Sigma})$ and $\Phi$ is aparallel spinor on $\overline{\widehat{\Omega}}$. Since $\Phi$ is parallel on $\overline{\widehat{\Omega}}$, it verifies 
\begin{eqnarray}
 \nabla|\Phi|^{2}=\langle\widehat{\nabla}\Phi,\Phi\rangle+\langle\Phi,\widehat{\nabla}\Phi\rangle=0   
\end{eqnarray}
and therefor without loss of generality, we can set 
\begin{eqnarray}
 |\Phi|=1.   
\end{eqnarray}
But since by corollary \ref{spininequality}, $\Phi=P_{\geq 0}\Phi$ on $\widehat{i}_{0}( \widehat{\Sigma})$, we have 
\begin{eqnarray}
 |P_{\geq 0}\Phi|=1~on~ \widehat{i}_{0}( \widehat{\Sigma}).  
\end{eqnarray}
Moreover, also due to the corollary \ref{spininequality}, on $\widehat{i}_{0}( \widehat{\Sigma})$, 
\begin{eqnarray}
 DP_{\geq 0}\Phi=\frac{k_{0}}{2}P_{\geq 0}\Phi,~on~\widehat{i}_{0}( \widehat{\Sigma}).   
\end{eqnarray}
Now pull back the restriction of $\Phi$ onto $\widehat{i}_{0}( \widehat{\Sigma})$ i.e., $\Phi=P_{\geq 0}\Phi$ onto $ \widehat{\Sigma}$ by the isometry map and set $\xi\in C^{\infty}(S^{ \widehat{\Sigma}})$ equal to $\widehat{i}^{*}_{0}\Phi$ i.e., 
\begin{eqnarray}
 \xi=\widehat{i}^{*}_{0}\Phi~on~ \widehat{\Sigma}.   
\end{eqnarray}
Note that $|P_{\geq 0}\xi|=|P_{\geq 0}\Phi|=1$. Now since on $\widehat{i}_{0}( \widehat{\Sigma})$, 
\begin{eqnarray}
    DP_{\geq 0}\Phi=\frac{k_{0}}{2}P_{\geq 0}\Phi,
\end{eqnarray}
we have on $ \widehat{\Sigma}$
\begin{eqnarray}
\label{eq:xi}
 DP_{\geq 0}\xi=\frac{k_{0}}{2}P_{\geq 0}\xi.   
\end{eqnarray}
Now on $ \widehat{\Sigma}$, the inequality 
\begin{eqnarray}
\int_{ \widehat{\Sigma}}\left(\langle DP_{\geq 0}\Psi,P_{\geq 0}\Psi\rangle-\frac{1}{2}(\widetilde{k}-\langle X,\nu\rangle)|P_{\geq 0}\Psi|^{2}\right) \mu_{\widehat{\sigma}}\geq 0    
\end{eqnarray}
with the boundary condition $P_{\geq 0}\Psi=P_{\geq 0\xi}$ on $ \widehat{\Sigma}$ yields
\begin{eqnarray}
 \int_{ \widehat{\Sigma}}\left(\langle DP_{\geq 0}\xi,P_{\geq 0}\xi\rangle-\frac{1}{2}(\widetilde{k}-\langle X,\nu\rangle)|P_{\geq 0}\xi|^{2}\right) \mu_{\widehat{\sigma}}\geq 0.   
\end{eqnarray}
Now use \ref{eq:xi} to conclude
\begin{eqnarray}
\int_{ \widehat{\Sigma}}\left(\frac{k_{0}}{2}|P_{\geq 0}\xi|^{2}-\frac{1}{2}(\widetilde{k}-\langle X,\nu\rangle)|P_{\geq 0}\xi|^{2}\right) \mu_{\widehat{\sigma}}\geq 0    
\end{eqnarray}
and therefore by $|P_{\geq 0}\xi|=1$ 
\begin{eqnarray}
 \int_{ \widehat{\Sigma}}\left(\frac{k_{0}}{2}-\frac{1}{2}(\widetilde{k}-\langle X,\nu\rangle)\right) \mu_{\widehat{\sigma}}\geq 0.   
\end{eqnarray}
Therefore the reduced Wang-Yau energy
\begin{eqnarray}
M^{WY}_{\tau}:=\frac{1}{8\pi}\int_{ \widehat{\Sigma}}(k_{0}-(\widetilde{k}-\langle X,\nu\rangle))\mu_{\widehat{\sigma}}\geq 0    
\end{eqnarray}
which holds for every admissible $\tau$. Therefore minimizing within the admissible set of $\tau$s, we obtain  
\begin{eqnarray}
 M^{WY}\geq 0.   
\end{eqnarray}
This completes the proof of the theorem \ref{1}.

\begin{corollary}[Rigidity]
 Under the assumption of the theorem \ref{embedding} and if the set of admissible $\tau$ is non-empty, then the Wang-Yau quasi-local energy bounded by the spacelike topological $2-$sphere $ \Sigma$ is non-negative. Moreover, if $\Sigma$ is a spacelike topological $2-$sphere embedded in $\mathbb{R}^{1,3}$, then Wang-Yau quasi-local energy $M^{wy}$ is zero. Also, if $M^{wy}$ is zero then the physical spacetime $N$ is Minkowski along $ \widehat{\Sigma}$.
\end{corollary}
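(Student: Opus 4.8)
The plan is to extract both the non-negativity and the two rigidity statements from Theorem~\ref{1} and the equality analysis of Theorem~\ref{MIT1}/Corollary~\ref{spininequality}, supplemented by one elementary observation. \textbf{Non-negativity.} Under the hypothesis of Theorem~\ref{embedding}, each admissible $\tau$ produces an isometric embedding $i_{0}:\Sigma\to\mathbb{R}^{1,3}$ whose projection $\widehat{i}_{0}(\widehat{\Sigma})\hookrightarrow\mathbb{R}^{3}$ has strictly positive Gauss curvature, hence strictly positive mean curvature $k_{0}$; Jang's equation is solvable by admissibility and the generalized mean curvature $\widetilde{k}-\langle X,\nu\rangle$ of $\widehat{\Sigma}$ is positive by condition~(c) of Definition~\ref{admi}. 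Thus Theorem~\ref{1} (part~(a) via the MIT-bag inequality, or part~(b) via the APS inequality) gives $M^{WY}_{\tau}\ge 0$, and since the admissible set is non-empty, $M^{wy}=\inf_{\tau}M^{WY}_{\tau}\ge 0$.

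\textbf{Vanishing when $\Sigma\subset\mathbb{R}^{1,3}$.} Take $i_{0}$ to be the given isometric embedding into Minkowski space and $\tau$ the induced time function. Then in the reduced functional $8\pi M^{WY}_{\tau}=\int_{\widehat{\Sigma}}\big(h(\widehat{\Sigma},i_{0},\tau,(e_{3})_{0})-h(\widehat{\Sigma},i,\tau,e_{3})\big)\mu_{\widehat{\sigma}}$ the two terms are built from the same surface, the same mean curvature vector and the same flat normal connection, so the integrand vanishes identically and $M^{WY}_{\tau}=0$. Together with $M^{WY}_{\tau'}\ge 0$ for every other admissible $\tau'$, this forces $M^{wy}=0$.

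\textbf{Rigidity.} Conversely, assume $M^{wy}=0$. Working with an admissible $\tau$ realizing the infimum (or a minimizing sequence approaching it), the reduction of Section~\ref{reduction} identifies $M^{WY}_{\tau}$ with $\tfrac{1}{8\pi}\int_{\widehat{\Sigma}}\big(k_{0}-(\widetilde{k}-\langle X,\nu\rangle)\big)|\Psi|^{2}\mu_{\widehat{\sigma}}$ for a unit-norm Dirac spinor $\Psi$ supplied by Theorem~\ref{1}; vanishing of this quantity forces equality in Theorem~\ref{MIT1} (resp.\ Corollary~\ref{spininequality}), which yields $\hnabla\Psi=0$, $R^{\widehat{\Omega}}+2\,\mathrm{div}\,X-2|X|^{2}=0$, and then, by the argument recorded there, $X=0$. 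A non-trivial parallel spinor on the three-manifold $\widehat{\Omega}$ makes $\widehat{\Omega}$ Ricci-flat, hence flat. Feeding $X=0$ and $R^{\widehat{\Omega}}=0$ back into the estimate $2(\mu-\sqrt{\widetilde{g}(J,J)})\le R^{\widehat{\Omega}}-\sum_{i,j}(h_{ij}-P_{ij})^{2}-2|X|^{2}+2\,\mathrm{div}\,X$ of Section~\ref{proof1} forces $\mu=\sqrt{\widetilde{g}(J,J)}$ and $h_{ij}=P_{ij}$, and with the Jang components $h_{i4}=P_{i4}$ this gives $h=P$ on $\widehat{\Omega}$. One then unwinds the Gauss--Codazzi equations for the flat graph $\widehat{\Omega}\subset\Omega\times\mathbb{R}$, the identity $h=P$, and the embedding $\Omega\subset M$ to conclude that the spacetime curvature vanishes along $\widehat{\Sigma}$; equivalently, the data $(\sigma,|\mathbf{H}|,\alpha_{\mathbf{H}})$ is that of a surface in $\mathbb{R}^{1,3}$, i.e.\ $(M,\widehat{g})$ is Minkowski along $\widehat{\Sigma}$.

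\textbf{Main obstacle.} The hard part is the last step of the rigidity direction: transferring the flatness of the Jang graph $\widehat{\Omega}$, together with $X=0$ and $h=P$, into a statement about the ambient spacetime $(M,\widehat{g})$ rather than about the auxiliary product $\Omega\times\mathbb{R}$. This requires simultaneously bookkeeping three geometries — the Jang graph inside $\Omega\times\mathbb{R}$, the Cauchy slice $\Omega$ inside $M$, and $M$ itself — and verifying that the vanishing curvature survives both geometric reductions. A secondary technical point is justifying that the infimum defining $M^{wy}$ is attained by an admissible $\tau$; absent that one must run a limiting argument along a minimizing sequence and check that the equality-case conclusions remain valid in the limit.
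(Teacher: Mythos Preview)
Your proposal is correct and follows essentially the same route as the paper: non-negativity from Theorem~\ref{1}, vanishing in Minkowski by taking $i_{0}=i$, and rigidity via the equality case of the spin inequality yielding $X=0$, $R^{\widehat{\Omega}}=0$, then feeding this into the Schoen--Yau estimate to force $h_{ij}=P_{ij}$ and $\mu=|J|$. The paper dispatches your ``main obstacle'' in a single sentence by citing \cite{schoen1981proof} rather than unwinding the Gauss--Codazzi equations as you sketch, and it does not address the infimum-attainment issue you raise---so your proposal is in fact more careful on both points.
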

\begin{proof}
The proof of the non-negativity is trivial if the admissible set of $\tau$ is non-empty and essentially a consequence of the proof of the main theorem stated earlier in this section. If the topological $2-$sphere $\Sigma$ is embedded in $\mathbb{R}^{1,3}$, then one can simply take the embedding $i_{0}$ to be $i$ and therefore the vanishing of $M^{wy}$ follows trivially. Now if $M^{wy}$ vanishes, then we have $\widehat{\Omega}$ to be Ricci flat and therefore flat and more specifically $R^{\widehat{\Omega}}=0$. Now if the equality holds in \ref{spininequality}, then according to the theorem \ref{spininequality}, we have $X=0$. Therefore $h_{i4}=P_{i4}$ which yields together with $R^{\widehat{\Omega}}=0$ 
\begin{eqnarray}
 0\leq 2(\mu-|J|)\leq -\sum_{i,j}(h_{ij}-P_{ij})^{2}\leq 0
\end{eqnarray}
and therefore $\mu=|J|$ and $h_{ij}=P_{ij}$. Therefore following a similar argument as of \cite{schoen1981proof}, it follows that $(M,\widehat{g})$ is Minkowski along $\Omega$.
\end{proof}

\begin{corollary}
Liu-Yau mass is non-negative.    
\end{corollary}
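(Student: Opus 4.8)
The plan is to realize the Liu--Yau mass as the reduced Wang--Yau energy $M^{wy}_{\tau}$ of \ref{eq:wy2} at the choice $\tau\equiv 0$, which under the present hypotheses will be admissible, and then to invoke Theorem~\ref{1}. The hypotheses in force are the dominant energy condition \ref{eq:dominant} on $(M,\widehat{g})$, that $\Sigma$ is not an apparent horizon and has spacelike mean curvature vector $\mathbf{H}$, and that the Gauss curvature $K$ of $(\Sigma,\sigma)$ is strictly positive (the last being what makes the comparison surface in $\mathbb{R}^{3}$ exist, via the Nirenberg--Pogorelov theorem of Theorem~\ref{embedding}).

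One first checks that $\tau\equiv 0$ is admissible in the sense of Definition~\ref{admi}. Condition (a) reads $K+\frac{\det\nabla^{2}\tau}{1+|\nabla\tau|^{2}}=K>0$, which holds by hypothesis. Condition (b), solvability of Jang's equation \ref{eq:jang} with Dirichlet datum $f=0$ on $\Sigma$, follows from the Jang-solvability theorem of \cite{yau} quoted just before Proposition~\ref{mean2}, since the non-apparent-horizon hypothesis provides the strict inequality $\widetilde{k}>|\text{tr}_{\widehat{\Sigma}}P|$. Condition (c), positivity of the generalized mean curvature $h(\Sigma,i,0,e^{'}_{3})>0$ for the spacelike normal $e^{'}_{3}$ determined by the Jang graph, holds because $\mathbf{H}$ is spacelike, which puts $e^{'}_{3}$ in the branch of spacelike normals along which $\langle\mathbf{H},e^{'}_{3}\rangle<0$. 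Since $\tau\equiv 0$ gives $|\nabla\tau|_{\sigma}=0$, one has $\widehat{\sigma}=\sigma+d\tau\otimes d\tau=\sigma$, so $\widehat{\Sigma}$ is isometric to $(\Sigma,\sigma)$ and the projected surface $\widehat{i}_{0}(\widehat{\Sigma})$ is just the Euclidean isometric image of $(\Sigma,\sigma)$; hence the mean curvature $k_{0}$ in the Wang--Yau reduction coincides with the $k_{0}$ in the definition of $M^{LY}$. Feeding $\tau\equiv 0$ into Propositions~\ref{mean1} and \ref{mean2} collapses the reduced energy to
\begin{eqnarray}
8\pi M^{wy}_{0}=\int_{\Sigma}k_{0}\,\mu_{\sigma}-\int_{\Sigma}\big(-\langle\mathbf{H},e^{'}_{3}\rangle\big)\mu_{\sigma},
\end{eqnarray}
and boosting $e^{'}_{3}$ in the Lorentzian normal plane gives, for spacelike $\mathbf{H}$, the pointwise bound $-\langle\mathbf{H},e^{'}_{3}\rangle\geq|\mathbf{H}|$ with equality precisely at the boost minimizer; the boost-minimization carried out in \cite{yau} identifies that minimizer with the Jang normal, so $8\pi M^{wy}_{0}=\int_{\Sigma}(k_{0}-|\mathbf{H}|)\mu_{\sigma}=8\pi M^{LY}$.

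It then remains to quote Theorem~\ref{1}: part (a) at $\tau\equiv 0$ --- for which $\mathcal{S}^{MIT}$ is non-empty by the same consequence of the dominant energy condition ($R^{\widehat{\Omega}}\geq 2|X|^{2}-2\text{div}X$) used in the proof of Theorem~\ref{1} --- gives $M^{\Psi}\geq 0$ for every $\Psi\in\mathcal{S}^{MIT}$, and passing to the infimum over unit-norm $\Psi$ yields $M^{wy}_{0}=M^{LY}\geq 0$ (part (b), through the APS construction, gives the same conclusion), while the rigidity part of Theorem~\ref{1} shows $M^{LY}=0$ only if $(M,\widehat{g})$ is Minkowski along $\Sigma$. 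The main obstacle is not the spinorial estimate, which is entirely contained in Theorem~\ref{1}, but the admissibility bookkeeping of the second paragraph --- in particular checking that the spacelike-$\mathbf{H}$ and non-apparent-horizon hypotheses are exactly what make Jang's equation with zero boundary data solvable and place the Jang normal in the branch forcing $h>0$. Even the sharp identification $-\langle\mathbf{H},e^{'}_{3}\rangle=|\mathbf{H}|$ is unnecessary for bare non-negativity, since $-\langle\mathbf{H},e^{'}_{3}\rangle\geq|\mathbf{H}|$ already forces $M^{LY}\geq M^{wy}_{0}\geq 0$.
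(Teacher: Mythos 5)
Your proof is correct and takes essentially the same route as the paper: set $\tau$ constant so that $\widehat{\sigma}=\sigma$ and the $\mathbb{R}^{3}$-embedding mean curvature matches the $k_{0}$ of $M^{LY}$, check admissibility from $K>0$ and the non-apparent-horizon hypothesis, and invoke Theorem \ref{1}. You are in fact more careful than the paper's terse argument, which simply asserts that the Wang--Yau energy reduces to the Liu--Yau energy; your observation that the pointwise boost inequality $-\langle\mathbf{H},e'_3\rangle\geq|\mathbf{H}|$ already yields $M^{LY}\geq M^{wy}_{0}\geq 0$ sidesteps the need to identify the Jang normal with the boost minimizer at $\tau\equiv 0$, which is not automatic.
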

\begin{proof}
Set $\tau=\text{constant}$. Then the Wang-Yau quasi-local energy reduces to Liu-Yau energy. The embedding $i_{0}$ is in $\mathbb{R}^{3}$. Then one needs a stronger condition of strict positivity of the Gauss curvature of $ \widehat{\Sigma}$ i.e., $K>0$. Now the result follows from the main theorem. 
\end{proof}

\section{Proof of Theorem \ref{2}}
\noindent Proof of the positivity of the Brown-York mass is actually simpler. The strategy is similar to that of the thereom \ref{1}. Therefore, we do not repeat all the details. In this case, there is no difference between the surfaces $\Sigma$ and $\widehat{\Sigma}$ since we consider embedding into $\mathbb{R}^{3}$. First, we focus on part (a). 

Let $ \widehat{\Sigma}$ be a topological $2-$ sphere in the spacetime $(M,\widehat{g})$ bounding a topological ball $\widehat{\Omega}$. The given condition is the non-negativity of the scalar curvature of $\widehat{\Omega}$ i.e.,
\begin{eqnarray}
 R^{\widehat{\Omega}}\geq 0.  
\end{eqnarray}
Now consider the isometric embedding of $ \widehat{\Sigma}$ into the Euclidean space $\mathbb{R}^{3}$ $i: \widehat{\Sigma}\hookrightarrow \mathbb{R}^{3}$ and denote the image by $i( \widehat{\Sigma})$. By the hypothesis on the strict positivity of the Gaussian curvature of $ \widehat{\Sigma}$, such an embedding exists by \cite{pogorelov1952regularity}. Let the mean curvature of $ \widehat{\Sigma}$ in $\widehat{\Omega}$ be $k$ while that of its isometric image in $\mathbb{R}^{3}$ is $k_{0}$. Proceeding exactly similar manner, we can prove a theorem analogous to \ref{MIT1} by taking $X=0$. More precisely, we can solve for the Dirac equation on $\widehat{\Omega}$ with MIT Bag boundary condition. Let $\Psi\in C^{\infty}(S^{\widehat{\Omega}})$ such that it solves the Dirac equation with MIT Bag boundary condition 
\begin{eqnarray}
\label{eq:boundary10}
\widehat{D}\Psi=0~on~\widehat{\Omega}\\
\Pi_{+}\Psi=\Pi_{+}\xi~on~ \widehat{\Sigma},
\end{eqnarray}
for $\xi\in C^{\infty}(S^{ \widehat{\Sigma}})$. Proceeding in an exactly similar manner, we prove under the non-negative of the curvature, this problem has a unique non-trivial solution given $\xi$. Consequently, the following inequality follows 
\begin{eqnarray}
 \int_{ \widehat{\Sigma}}\langle D\Psi,\Psi\rangle \mu_{\widehat{\sigma}}\geq \frac{1}{2}\int_{ \widehat{\Sigma}}k|\Psi|^{2}\mu_{\widehat{\sigma}},  
\end{eqnarray}
where the equality holds when $ \widehat{\Sigma}$ is embedded in $\mathbb{R}^{3}$ and $\Psi$ is the restriction to $ \widehat{\Sigma}$ of a non-trivial parallel spinor in $\widehat{\Omega}\subset \mathbb{R}^{3}$. Now proceeding in an exact similar way as in \ref{eq:R3}-\ref{eq:R4} we obtain 
\begin{eqnarray}
\widetilde{M}^{\Psi}:=\frac{1}{8\pi}\int_{ \widehat{\Sigma}}(k_{0}-k)|\Psi|^{2}\mu_{\widehat{\sigma}}\geq 0    
\end{eqnarray}
and consequently 
\begin{eqnarray}
M^{BY}=\inf_{\Psi\in \mathcal{S}^{MIT},|\Psi|=1}\widetilde{M}^{\Psi}\geq 0.    
\end{eqnarray}
Part (b) follows in an exact similar manner as in part (b) of the theorem \ref{1}.

\section{Conclusions}
We provided a complete quasi-local proof of the non-negativity of the Wang-Yau quasi-local energy and its rigidity property by using spinors that solve the Dirac equation on the bulk with MIT bag (local) and Atyiah-Patodi-Singer (APS) boundary conditions. The analogous result for the Brown-York mass is proved too. In the asymptotically flat setting, the Wang-Yau quasi-local mass approaches ADM mass at spacelike infinity and its positivity proof using spinor (after Schoen-Yau \cite{schoen1979proof,schoen1981proof} proof was announced) in that circumstance was provided by Witten \cite{witten1981new} which was subsequently made rigorous by Parker-Taubes \cite{parker1982witten}. We would like to understand if the spinors can be used to define quasi-local mass contained within a compact spacelike hypersurface with a disconnected boundary, such as a topologically spherical annulus as well as in the presence of gauge fields. Another interesting problem would be to obtain a spinorial quasi-local proof of the positivity and rigidity of the new quasi-local mass defined by Alaee-Khuri-Yau\cite{alaee2023quasi} where the boundary surface $ \widehat{\Sigma}$ is allowed to have non-trivial topology. These problems are under investigation.

\section{Acknowledgement} This work was supported by the Center of Mathematical Sciences and Applications (CMSA), Department of Mathematics at Harvard University. We thank Christian B\"ar for helpful discussions regarding the boundary value problem associated with the Dirac equation. 

\section{Funding and/or Conflicts of interests/Competing interests}
The authors have no conflicts to disclose.

\end{document}